\def\eqdef{\overset{\mathrm{def}}{=}} 
\newcommand\proofsystem[1]{{\smaller\rmfamily\slshape #1}\xspace}
\newcommand\LF{\proofsystem{LF}}
\newcommand\HLF{\proofsystem{HLF}}
\newcommand\Spi{\proofsystem{S$\pi$}}
\newcommand\CTMC{\proofsystem{CTMC}}
\title{A Hybrid Linear Logic for Constrained Transition Systems
       \\ with Applications to Molecular Biology}
\author{
  Kaustuv Chaudhuri\\INRIA\\\texttt{kaustuv.chaudhuri@inria.fr}
  \and
  Jo\"{e}lle Despeyroux\\INRIA and CNRS, I3S\\\texttt{joelle.despeyroux@inria.fr}
}
\date{{\em Rapport de recherche INRIA-HAL nb} inria-00402942 --- \today\ --- \pageref{LastPage} pages}
\begin{document}

\maketitle

\begin{abstract}
  Linear implication can represent state transitions, but real transition
  systems operate under temporal, stochastic or probabilistic constraints that
  are not directly representable in ordinary linear logic. We propose a general
  modal extension of intuitionistic linear logic where logical truth is indexed
  by constraints and hybrid connectives combine constraint reasoning with
  logical reasoning. The logic has a focused cut-free sequent calculus that can
  be used to internalize the rules of particular constrained transition systems;
  we illustrate this with an adequate encoding of the synchronous stochastic
  pi-calculus.
  We also present some preliminary experiments of direct encoding of biological 
  systems in the logic.
\end{abstract}

\section{Introduction}
\label{sec:intro}

To reason about state transition systems, we need a logic of state. 
Linear logic~\cite{girard87tcs} is such a logic and has been successfully used
to model such diverse systems as 
process calculi, references and concurrency in programming languages, 
security protocols, multi-set rewriting, and graph algorithms.
Linear logic achieves this versatility by representing propositions as \emph{resources}
that are composed into elements of state using "tens", which can then be
transformed using the linear implication ("-o").  However, linear implication is
timeless: there is no way to correlate two concurrent transitions.
If resources have lifetimes and state changes have temporal, probabilistic or
stochastic \emph{constraints}, then the logic will allow inferences that may not
be realizable in the system being modelized. 
The need for formal reaosning in such constrained systems has led to the
creation of specialized formalisms such as Computation Tree Logic
(\proofsystem{CTL})\cite{Emerson95}, Continuous Stochastic Logic
(\proofsystem{CSL})~\cite{aziz00tcl} or Probabilistic CTL
(\proofsystem{PCTL})~\cite{hansson94fac}.
These approaches pay a considerable encoding overhead for the states and
transitions in exchange for the constraint reasoning not provided by linear logic.
A prominent alternative to the logical approach is to use a suitably enriched
process algebra; a short list of examples includes 
reversible CCS~\cite{danos03bc}, bioambients~\cite{regev04tcs}, 
brane calculi~\cite{cardelli03bc}, stochastic and probabilistic $\pi$-calculi, 
the PEPA algebra~\cite{hillston96book}, and the $\kappa$-calculus~\cite{danos04tcs}.
Each process algebra comes equipped with an underlying algebraic semantics which
is used to justify mechanistic abstractions of observed reality as processes. 
These abstractions are then animated by means of simulation and then
compared with the observations. 
Process calculi do not however completely fill the need for
{\em formal logical reasoning for constrained transition systems}. 
For example, there is no uniform process calculus to encode 
different stochastic process algebras\footnote{Stochastic process algebras are 
typical examples of the constrained transition systems we aim at formalizing.}. 

Note that logics like CSL or CTL are not such uniform languages either. These formalisms
are not \emph{logical frameworks}\footnote
{
Logical frameworks are uniform languages that allow to formally not only specify and analyse,
but also compare, or translate from one to the other, different systems, through their 
(adequate) encodings.
}: 
Encoding the stochastic $\pi$ calculus in CSL, for example, would be
inordinately complex because CSL does not provide any direct means of encoding
$\pi$-calculus dynamics such as the linear production and consumption of
messages in a synchronous interaction.
Actually CSL and CTL mainly aim at specifying properties of behaviors of constrained 
transition systems, not the systems themselves.

We propose a simple yet general method to add constraint reasoning to linear logic.
It is an old idea---\emph{labelled deduction}~\cite{simpson94phd} with 
\emph{hybrid} connectives~\cite{brauener06jal}---applied to a new domain. 
Precisely, we parameterize ordinary logical truth on a \emph{constraint domain}: 
"A @ w" stands for the truth of "A" under constraint "w". 
Only a basic monoidal structure is assumed about the constraints from a
proof-theoretic standpoint.
We then use the hybrid connectives of \emph{satisfaction} ("at") and
\emph{localization} ("dn") to perform generic symbolic reasoning on the
constraints at the propositional level.
We call the result \emph{hybrid linear logic} (\hyll); it has a generic cut-free
(but cut admitting) sequent calculus that can be strengthened with a focusing
restriction~\cite{andreoli92jlc} to obtain a normal form for proofs.
Any instance of \hyll that gives a semantic interpretation to the constraints
enjoys these proof-theoretic properties.

Focusing allows us to treat \hyll as a \emph{logical framework} 
for constrained transition systems.
Logical frameworks with hybrid connectives have been considered before; hybrid
\LF (\HLF), for example, is a generic mechanism to add many different kinds of
resource-awareness, including linearity, to ordinary \LF~\cite{reed06hylo}.
\HLF follows the usual \LF methodology of keeping the logic of the framework
minimal: its proof objects are $\beta$-normal $\eta$-long natural deduction
terms, but the equational theory of such terms is sensitive to permutative
equivalences~\cite{watkins03tr}.
With a focused sequent calculus, we have more direct access to a canonical
representation of proofs, so we can enrich the framework with any connectives
that obey the focusing discipline.
The representational adequacy \footnote
{Encodings -of a system or of a property of a system- in a logical framework
are always required to be {\it adequate} in a strong sense 
sometimes called {\it representational adequacy} and illustrated here in \secref{spi.adq}.
}
of an encoding in terms of (partial) focused
sequent derivations tends to be more straightforward than in a natural deduction
formulation.
We illustrate this by encoding the synchronous stochastic $\pi$-calculus (\Spi)
in \hyll using rate functions as constraints.

In addition to the novel stochastic component, our encoding of \Spi is a
conceptual improvement over other encodings of $\pi$-calculi in linear logic.
In particular, we perform a full propositional reflection of processes as
in~\cite{miller92welp}, but our encoding is first-order and adequate as
in~\cite{cervesato03tr}.
\hyll does not itself prescribe an operational semantics for the encoding of
processes; thus, bisimilarity in continuous time Markov chains (\CTMC) is not
the same as logical equivalence in stochastic \hyll, unlike in
\proofsystem{CSL}~\cite{desharmais03jlap}.
This is not a deficiency; rather, the \emph{combination} of focused \hyll proofs
and a proof search strategy tailored to a particular encoding is necessary to
produce faithful symbolic executions.
This exactly mirrors \Spi where it is the simulation rather than the transitions
in the process calculus that is shown to be faithful to the \CTMC
semantics~\cite{phillips04cmmb}.

This work has the following main contributions.
First is the logic \hyll itself and its associated proof-theory, which has a
very standard and well understood design in the Martin-Löf tradition.
Second, we show how to obtain many different instances of \hyll for particular
constraint domains because we only assume a basic monoidal structure for
constraints.
Third, we illustrate the use of focused sequent derivations to obtain adequate
encodings by giving a novel adequate encoding of \Spi.
Our encoding is, in fact, \emph{fully adequate}, \ie, partial focused proofs are
in bijection with traces.
The ability to encode \Spi gives an indication of the versatility of \hyll.
Finally, we show how to encode (a very simple example of) biological systems in \hyll.
This is a preliminary step
towards a logical framework for systems biology, our initial motivation for this work.

The sections of this paper are organized as follows: in \secref{hyll}, we
present the inference system (natural deduction and sequent calculus) for \hyll
and describe the two main semantic instances: temporal and probabilistic constraints.  
In \secref{focusing} we sketch the general focusing restriction on \hyll sequent
proofs. 
In \secref{spi} we give the encoding of \spi in probabilistic \hyll, and show that
the encoding is representationally adequate for focused proofs
(theorems \ref{thm:completeness} and \ref{thm:adeq}).  
In \secref{bio} we present some preliminary experiments of direct encoding of 
biological systems in \hyll.
We end with an overview of related (\secref{related}) and future work 
(\secref{concl}).

\section{Hybrid linear logic}
\label{sec:hyll}

In this section we define \hyll, a conservative extension of intuitionistic
first-order linear logic (\ill)~\cite{girard87tcs} where the truth judgements
are labelled by worlds representing constraints. Like in \ill, propositions are
interpreted as \emph{resources} which may be composed into a \emph{state} using
the usual linear connectives, and the linear implication ("-o") denotes a
transition between states. The world label of a judgement represents a
constraint on states and state transitions; particular choices for the worlds
produce particular instances of \hyll. The common component in all the instances
of \hyll is the proof theory, which we fix once and for all. We impose the
following minimal requirement on the kinds of constraints that \hyll can deal
with.

\begin{defn} \label{defn:constraint-domain}
  A \emph{constraint domain} "\cal W" is a monoid structure
  "\langle W, ., rid\rangle". The elements of "W" are called \emph{worlds}, and
  the partial order "\preceq\ : W \times W"---defined as "u \preceq w" if there
  exists "v \in W" such that "u . v = w"---is the \emph{reachability relation}
  in "\cal W".
\end{defn}

\noindent
The identity world "rid" is "\preceq"-initial and is intended to represent the
lack of any constraints. Thus, the ordinary \ill is embeddable into any instance
of \hyll by setting all world labels to the identity. When needed to
disambiguate, the instance of \hyll for the constraint domain "\cal W" will be
written \chyll[W].

The reader may wonder why we require worlds to be monoids, instead of lattices, for example.
The answer is to give a more general definition, suitable for rates constraints. 
One may then ask why we don't ask the worlds to be commutative. The answer is to allow 
lattices.

Atomic propositions are written using minuscule letters ("a, b, ...") applied to
a sequence of \emph{terms} ("s, t, \ldots"), which are drawn from an untyped
term language containing term variables ("x, y, \ldots") and function symbols
("f, g, ...") applied to a list of terms.. Non-atomic propositions are
constructed from the connectives of first-order intuitionistic linear logic and
the two hybrid connectives \emph{satisfaction} ("at"), which states that a
proposition is true at a given world ("w, u, v, \ldots"), and
\emph{localization} ("dn"), which binds a name for the (current) world the proposition is
true at. The following grammar summarizes the syntax of \hyll propositions.

\smallskip
\bgroup
\begin{tabular}{l@{\ }r@{\ }l}
  "A, B, ..." & "::=" & "a~\vec t OR A tens B OR one OR A -o B OR A with B OR top OR A plus B OR zero OR ! A OR all x. A OR ex x. A" \\ 
              & "|\ " & "(A at w) OR now u. A OR all u. A OR ex u. A" \\
\end{tabular}
\egroup

\smallskip
\noindent
Note that in the propositions "now u. A", "all u. A" and "ex u. A", the scope of
the world variable "u" is all the worlds occurring in "A". World variables
cannot be used in terms, and neither can term variables occur in worlds; this
restriction is important for the modular design of \hyll because it keeps purely
logical truth separate from constraint truth.  We let "\alpha" range over
variables of either kind.

The unrestricted connectives "and", "or", "imp", \etc of intuitionistic
(non-linear) logic can also be defined in terms of the linear connectives and
the exponential "!"  using any of the available embeddings of intuitionistic
logic into linear logic, such as Girard's embedding~\cite{girard87tcs}.

\subsection{Natural deduction for \hyll}

We start with the judgements from linear logic~\cite{girard87tcs} and enrich
them with a modal situated truth. We present the syntax of hybrid linear logic
in a natural deduction style, using Martin-L\"{o}f's principle of separating
judgements and logical connectives.  
Instead of the ordinary mathematical judgement ``"A" is true'', for a proposition $A$,
judgements of \hyll are of the form ``"A" is true at world "w"'', abbreviated as "A @ w". 
We use dyadic hypothetical derivations of
the form "\G ; \D \ "|-"\ C @ w" where "\G" and "\D" are sets of judgements of
the form "A @ w", with "\D" being moreover a \emph{multiset}.  "\G" is called
the \emph{unrestricted context}: its hypotheses can be consumed any number of
times.  "\D" is a \emph{linear context}: every hypothesis in it must be consumed
singly in the proof.
Note that in a judgement "A @ w" (as in a proposition "A at w"), $w$ can be 
any expression in $\cal W$, not only a variable.

The rules for the linear connectives are borrowed from \cite{chaudhuri03tr}
where they are discussed at length, so we omit a more thorough discussion here.
The rules for the first-order quantifiers are completely standard.  The
unrestricted context "\G" enjoys weakening and contraction; as usual, this is a
theorem that is attested by the inference rules of the logic, and we omit its
straightforward inductive proof. 
The notation $ A [ \tau / \alpha ]$ stands for the replacement of all free occurrences
of the variable $\alpha$ in $A$ with the expression $\tau$, avoiding capture. 
Note that the expressions in the rules are to be readen up to alpha-conversion.

\begin{thm}[structural properties] \label{thm:struct} \mbox{}
  \begin{ecom}
  \item If "\G ; \D |-nd C @ w", then "\G, \G' ; \D |-nd C @ w". (weakening)
  \item If "\G, A @ u, A @ u ; \D |-nd C @ w", then "\G, A @ u ; \D |-nd C @ w". (contraction)
  \end{ecom}
\end{thm}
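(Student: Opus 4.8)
The plan is to prove both structural properties by induction on the structure of the given natural deduction derivation. These are the standard admissibility results for the unrestricted context in a dyadic sequent/natural-deduction presentation, so the proof is routine but requires care to handle the situated-truth judgements uniformly.

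For weakening, I would argue by induction on the derivation $\mathcal{D}$ of $\Gamma ; \Delta \vdash_{nd} C @ w$, showing that every rule instance can be replayed with the enlarged unrestricted context $\Gamma, \Gamma'$. The key observation is that the unrestricted context only ever grows or is copied when passing to premises (it is never analyzed destructively), so in each case one simply applies the induction hypothesis to the premises and reapplies the same rule. The one point demanding attention is the hypothesis rule for unrestricted assumptions: an axiom $\Gamma, A @ u ; \cdot \vdash_{nd} A @ u$ remains valid after adding $\Gamma'$, since $A @ u$ is still present in the enlarged context. For the quantifier and localization rules ($\forall$, $\exists$, $\mathsf{dn}$) one must ensure the eigenvariable freshness side-conditions are maintained; this is handled by $\alpha$-renaming the bound variable (term or world) away from the newly added $\Gamma'$, which is legitimate since the excerpt notes that expressions are read up to $\alpha$-conversion.

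For contraction, I would again proceed by induction on the derivation of $\Gamma, A @ u, A @ u ; \Delta \vdash_{nd} C @ w$, this time collapsing the two copies of $A @ u$ into one. Every rule is reapplied to the premises via the induction hypothesis; the only genuinely interesting case is the unrestricted-hypothesis rule, where the assumption being looked up might be one of the two copies of $A @ u$ — in either case it coincides with the single surviving copy $A @ u$, so the axiom still applies. Since the unrestricted context is a set and its hypotheses may be consumed arbitrarily many times, duplicating or merging entries does not affect the linear context $\Delta$ at all, so no bookkeeping on the multiset side is needed.

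The main obstacle, such as it is, lies not in any single deep step but in the bureaucratic handling of the binders: one must be scrupulous about the eigenvariable conditions in the rules for $\forall$, $\exists$, $\mathsf{dn}$ and (for the world quantifiers) $\mathsf{at}$, so that the freshness constraints continue to hold relative to the modified unrestricted context. Because world variables and term variables are kept syntactically separate and neither may leak across the term/world boundary, the freshness reasoning splits cleanly into two non-interacting cases, and the proof goes through uniformly. As noted in the excerpt, these are standard facts attested by the inference rules, and a full inductive grind adds nothing illuminating, so I would simply remark that both proofs follow by straightforward structural induction on the given derivation.
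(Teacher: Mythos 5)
Your proof is correct and takes exactly the approach the paper intends: the paper omits the argument entirely, remarking only that weakening and contraction of $\Gamma$ are ``attested by the inference rules'' and follow by a ``straightforward inductive proof,'' which is precisely the structural induction you spell out (replaying each rule, treating the \emph{hyp!} case, and $\alpha$-renaming eigenvariables away from the enlarged or contracted context). One harmless slip: the \emph{at} connective binds no variable, so it carries no eigenvariable condition --- only $\forall$, $\exists$, and the \emph{dn} binder need the freshness bookkeeping you describe.
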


\begin{figure*}[tp]
\framebox{ \begin{minipage}{0.9\textwidth}

\setlength{\parindent}{0pt}

\paragraph{Judgemental rules}
\begin{gather*}
  \I[hyp]{"\G ; A @ w |-nd A @ w"}
  \SP
  \I[hyp!]{"\G, A @ w ; . |-nd A @ w"}
\end{gather*}

\paragraph{Multiplicatives}
\begin{gather*}
  \I["tens I"]{"\G ; \D, \D' |-nd A tens B @ w"}
    {"\G ; \D |-nd A @ w" & "\G ; \D' |-nd B @ w"}
  \LSP
  \I["tens E"]{"\G ; \D, \D' |-nd C @ {w'}"}
    {\begin{array}{c}"\G ; \D |-nd A tens B @ w" \\ "\G ; \D', A @ w, B @ w |-nd C @ {w'}"\end{array}}
  \SP
  \\[1ex]
  \I["one I"]{"\G ; . |-nd one @ w"}
  \SP
  \I["one E"]{"\G ; \D, \D' |-nd C @ {w'}"}
    {"\G ; \D |-nd one @ w" & "\G ; \D' |-nd C @ {w'}"}
  \\[1ex]
  \I["{-o}I"]{"\G ; \D |-nd A -o B @ w"}
    {"\G ; \D, A @ w |-nd B @ w"}
  \SP
  \I["{-o}E"]{"\G ; \D, \D' |-nd B @ w"}
    {"\G ; \D |-nd A -o B @ w" & "\G ; \D' |-nd A @ w"}
\end{gather*}

\paragraph{Additives}
\begin{gather*}
  \I["with I"]{"\G ; \D |-nd A with B @ w"}
    {"\G ; \D |-nd A @ w" & "\G ; \D |-nd B @ w"}
  \SP
  \I["with E_i"]{"\G ; \D |-nd A_i @ w"}
    {"\G ; \D |-nd A_1 with A_2 @ w"}
  \\[1ex]
  \I["plus I_i"]{"\G ; \D |-nd A_1 plus A_2 @ w"}
    {"\G ; \D |-nd A_i @ w"}
  \SP
  \I["plus E"]{"\G ; \D, \D' |-nd C @ {w'}"}
    {"\G ; \D |-nd A plus B @ w"
     &
     \begin{array}[b]{c}
       "\G ; \D', A @ w |-nd C @ {w'}" \\
       "\G ; \D', B @ w |-nd C @ {w'}"
     \end{array}
    }
  \\[1ex]
  \I["top I"]{"\G ; \D |-nd top @ w"}
  \SP
  \I["zero E"]{"\G ; \D, \D' |-nd C @ {w'}"}{"\G ; \D |-nd zero @ w"}
\end{gather*}

\paragraph{Quantifiers}
\begin{gather*}
  \I["\forall I^\alpha"]{"\G ; \D |-nd \fall \alpha A @ w"}
    {"\G ; \D |-nd A @ w"}
  \SP
  \I["\forall E"]{"\G ; \D |-nd [\tau / x] A @ w"}
    {"\G ; \D |-nd \fall \alpha A @ w"}
  \\[1ex]
  \I["\exists I"]{"\G ; \D |-nd \fex \alpha A @ w"}
    {"\G ; \D |-nd [\tau / x] A @ w"}
  \SP
  \I["\exists E^\alpha"]{"\G ; \D, \D' |-nd C @ {w'}"}
    {"\G ; \D |-nd \fex \alpha A @ w" & "\G ; \D', A @ w |-nd C @ {w'}"}
\end{gather*}

\hfill
\begin{minipage}{0.7\linewidth}
  For "\forall I^\alpha" and "\exists E^\alpha", "\alpha" is assumed to be fresh
  with respect to the conclusion.\\
  For "\exists I" and "\forall E", "\tau" stands for a term or world, as
  appropriate.
\end{minipage}

\paragraph{Exponentials}
\begin{gather*}
  \I["! I"]{"\G ; . |-nd {! A} @ w"}
    {"\G ; . |-nd A @ w"}
  \SP
  \I["! E"]{"\G ; \D, \D' |-nd C @ {w'}"}
    {"\G ; \D |-nd {! A} @ w" & "\G, A @ w ; \D' |-nd C @ {w'}"}
\end{gather*}

\paragraph{Hybrid connectives}
\begin{gather*}
  \I["at I"]{"\G ; \D |-nd (A at w) @ {w'}"}
    {"\G ; \D |-nd A @ w"}
  \SP
  \I["at E"]{"\G ; \D |-nd A @ w"}
    {"\G ; \D |-nd (A at w) @ {w'}"}
  \\[1ex]
  \I["{dn} I"]{"\G ; \D |-nd now u. A @ w"}
    {"\G ; \D |-nd [w / u] A @ w"}
  \SP
  \I["{dn} E"]{"\G ; \D |-nd [w / u] A @ w"}
    {"\G ; \D |-nd now u. A @ w"}
\end{gather*}
\end{minipage}}
\caption{Natural deduction for \hyll}.
\label{fig:nd-rules}
\end{figure*}

The full collection of inference rules are in \figref{nd-rules}. A brief
discussion of the hybrid rules follows. To introduce the \emph{satisfaction}
proposition "(A at w)" (at any world "w'"), the proposition "A" must be true in the
world "w". The proposition "(A at w)" itself is then true at any world, not just
in the world "w". In other words, "(A at w)" carries with it the world at which
it is true. Therefore, suppose we know that "(A at w)" is true (at any world "w'");
then, we also know that "A @ w". These two introduction and
elimination rules match up precisely to (de)construct the information in the "A
@ w" judgement.
The other hybrid connective of \emph{localisation}, "dn", is intended to be
able to name the current world. That is, if "now u. A" is true at world "w",
then the variable "u" stands for "w" in the body "A". This interpretation is
reflected in its introduction rule "{dn} I". For elimination, suppose we have a
proof of "now u. A @ w" for some world "w". Then, we also know "[w / u] A @ w".

For the linear and unrestricted hypotheses, substitution is no different 
from that of the usual linear logic.

\begin{thm}[substitution] \label{thm:subst} \mbox{}
  \begin{ecom}
  \item \label{thm:subst.1}
    If "\G ; \D |-nd A @ u" and "\G ; \D', A @ u |-nd C @ w", then "\G ; \D, \D' |-nd C @ w".
  \item  \label{thm:subst.2}
    If "\G ; . |-nd A @ u" and "\G, A @ u ; \D |-nd C @ w", then "\G ; \D |-nd C @ w".
  \end{ecom}
\end{thm}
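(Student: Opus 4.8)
The plan is to prove both parts by structural induction on the \emph{second} given derivation, i.e.\ the one whose context contains the distinguished hypothesis $A @ u$. In each case the strategy is the same: replace every use of that hypothesis by the first derivation, propagating the substitution through the remaining inference rules. The hybrid connectives cause no trouble here, since the rules for $at$ and $dn$ only rewrite the world label of a judgement and leave the contexts $\Gamma$ and $\Delta$ untouched; their cases are therefore handled exactly like the propositional cases, with the world substitution performed by $dn$ being entirely orthogonal to the hypothesis substitution we are carrying out.

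For part~\ref{thm:subst.1} I would induct on a derivation $\mathcal{E}$ of $\Gamma ; \Delta', A @ u \vdash C @ w$. The base case is when $\mathcal{E}$ is the instance of $hyp$ that reads off the linear hypothesis itself: then $\Delta'$ is empty and $C @ w$ is $A @ u$, so I simply return the first derivation $\Gamma ; \Delta \vdash A @ u$. For every other rule I use the fact that the linear context is split multiplicatively, so $A @ u$ occurs in the linear context of \emph{exactly one} premise; I appeal to the induction hypothesis on that premise and reassemble the rule with the merged linear context $\Delta, \Delta'$. The only subtlety is that some rules (notably $!E$) extend the unrestricted context of a premise with a new hypothesis; before invoking the induction hypothesis there I weaken the first derivation by Theorem~\ref{thm:struct} so that its unrestricted context matches.

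For part~\ref{thm:subst.2} I would likewise induct on a derivation $\mathcal{F}$ of $\Gamma, A @ u ; \Delta \vdash C @ w$, but now the distinguished hypothesis is unrestricted and may be consumed any number of times, so the interesting base case is each instance of $hyp!$ that selects $A @ u$. At such a leaf the local judgement is $\Gamma'', A @ u ; \cdot \vdash A @ u$, and I replace it by the first derivation weakened (again by Theorem~\ref{thm:struct}) from $\Gamma ; \cdot \vdash A @ u$ to $\Gamma'' ; \cdot \vdash A @ u$; this is sound precisely because the first derivation has an empty linear context, matching the empty linear context demanded by $hyp!$. All remaining cases push the substitution into every premise, which is legitimate because the unrestricted context is shared rather than split across premises, so $A @ u$ is simultaneously available to each of them.

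A routine point common to both inductions is that the eigenvariable side conditions on $\forall I^\alpha$ and $\exists E^\alpha$ require $\alpha$ to be fresh for the conclusion, hence for the first derivation; I arrange this by alpha-renaming the bound variable before applying the induction hypothesis, which is permissible since expressions are read up to alpha-conversion. I expect the only real bookkeeping obstacle to be exactly this interaction between the free variables of the substituted derivation and the eigenvariables of the quantifier and $dn$ rules, together with the need to re-weaken the first derivation whenever a rule augments the unrestricted context; both are entirely standard and demand only care with the side conditions rather than any new idea.
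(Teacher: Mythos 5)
Your overall strategy---structural induction on the second given derivation in each part---is exactly the approach the paper takes (its proof is nothing more than this one-line sketch), and most of your elaboration is sound: the base cases for $hyp$ and $hyp!$, the re-weakening of the first derivation when a rule (such as $!$E) extends the unrestricted context, and the eigenvariable renaming are all handled correctly. However, the inductive step of part~1 as you state it would fail. You claim that ``the linear context is split multiplicatively, so $A \mathbin{@} u$ occurs in the linear context of \emph{exactly one} premise.'' That is true only of the multiplicative rules. HyLL also has additive rules: in the introduction rule for \emph{with}, both premises share the same linear context $\Delta$, so the distinguished hypothesis occurs in \emph{both} premises; likewise in the elimination rule for \emph{plus}, the context $\Delta'$ is shared by the two continuation premises. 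Worse, in the introduction rule for \emph{top} (which has no premises) and the elimination rule for \emph{zero} (where $\Delta'$ of the conclusion appears in no premise), there may be \emph{no} premise containing the hypothesis, so there is nothing to apply the induction hypothesis to.

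The repair is routine but must be made explicit: for the additive rules, apply the induction hypothesis to \emph{every} premise whose linear context contains $A \mathbin{@} u$---this is legitimate because the first derivation is a meta-level object that can be duplicated freely, even though $A \mathbin{@} u$ itself is a linear hypothesis---and then reassemble the rule, noting that all premises again share the same merged context $\Delta, \Delta'$; for the \emph{top}-introduction and \emph{zero}-elimination cases, simply re-apply the rule directly with the enlarged linear context. With this amendment your induction closes in all cases and coincides with the paper's intended proof.
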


\begin{proof}[Proof sketch]
  By structural induction on the second given derivation in each case.
\end{proof}

Note that the "dn" connective commutes with every propositional connective, including itself. That
is, "now u. (A * B)" is equivalent to "(now u. A) * (now u. B)" for all binary connectives "*", and
"now u. * A" is equivalent to "* (now u. A)" for every unary connective "*", assuming the
commutation will not cause an unsound capture of "u". It is purely a matter of taste where to place
the "dn", and repetitions are harmless.

\begin{thm}[conservativity]
  \label{thm:conserv}
  Call a proposition or multiset of propositions \emph{pure} if it contains no
  instance of the hybrid connectives and no instance of quantification over a world variable, 
  and let "\G", "\D" and "A" be pure. 
  Then, "\G ; \D |-nd A @ w" in \hyll iff "\G ; \D |-nd A" in intuitionistic linear logic.
\end{thm}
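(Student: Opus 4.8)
The plan is to prove both directions of the biconditional by induction on derivations. This is a conservativity result, so the key observations are: (1) the \hyll rules, when restricted to pure propositions and with all world labels fixed to a single world, are in exact one-to-one correspondence with the \ill rules; and (2) on pure propositions no hybrid rule can ever fire, so derivations cannot "escape" into the hybrid fragment.

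First I would handle the easy direction: given an \ill derivation of $\G ; \D \vdash A$, I would annotate every judgement with the same fixed world $w$, turning it into a candidate \hyll derivation of $\G ; \D \vdash_{nd} A @ w$. Since $\G$, $\D$, and $A$ are pure, every rule instance in the \ill proof is a linear-connective, quantifier, or exponential rule, and each such rule has a syntactically identical \hyll counterpart in \figref{nd-rules} in which all the world labels on premises and conclusion coincide with $w$ (note that the multiplicative and additive rules of \hyll preserve the conclusion's world on their principal premises, and the quantifier rule side conditions are unchanged). A straightforward induction then transports the whole derivation.

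The converse direction is the more delicate one, and I expect it to be the main obstacle. Given $\G ; \D \vdash_{nd} A @ w$ with $\G$, $\D$, $A$ pure, I would argue by induction on the \hyll derivation that no hybrid rule ($at\,I$, $at\,E$, $dn\,I$, $dn\,E$) and no world-quantifier rule is ever applicable in a way that is needed to derive a pure conclusion. The subtlety is that the subformula property does not hold for natural deduction: an elimination rule such as $\tens E$ or $-o E$ can introduce an arbitrary, possibly impure, auxiliary proposition (the left premise of $-o E$ can mention any $A$) or, more dangerously, an arbitrary world $w$ on a side premise. So one cannot simply claim every judgement in the derivation is pure with label $w$. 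The cleanest way around this is to first invoke the cut-free sequent calculus that the paper promises (or equivalently normalization of these natural deduction proofs): a cut-free/normal derivation does enjoy the subformula property, so every proposition occurring in it is a subformula of $\G, \D, A$, hence pure, and every world occurring in it is $w$ since purity forbids $at$ and world binders from ever appearing to introduce a new world. One then reads off an \ill derivation by erasing the uniform world annotations, checking rule-by-rule that each \hyll rule instance on pure, uniformly-labelled judgements collapses to exactly one \ill rule instance.

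The remaining care is bookkeeping: I must verify that in a normal pure derivation the hybrid and world-quantifier rules genuinely cannot occur (their principal formulas are non-pure and so cannot be subformulas of a pure sequent), and that the $\forall/\exists$ rules only ever instantiate or generalize over term variables, never world variables (again forced by purity, since a world-quantified formula is impure). Once purity is propagated throughout the normalized derivation, the erasure map and its inverse are mutually inverse and rule-preserving, giving the biconditional. If one prefers to avoid appealing to normalization, the alternative is a direct but more laborious induction that carries a strengthened invariant — namely that every judgement appearing in a derivation of a pure endsequent can be taken pure with label $w$ — which requires simultaneously generalizing over the auxiliary propositions introduced by elimination rules; I expect this route to be messier, so the plan is to route the hard direction through the subformula property of normal/cut-free proofs.
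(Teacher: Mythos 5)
Your proposal is correct, but it follows a genuinely different route from the paper, and the difference matters. The paper's entire proof of this theorem is the sentence ``By structural induction on the given \hyll derivation'' --- precisely the direct induction on the natural-deduction derivation that you decided to avoid. Your reason for avoiding it is sound: elimination rules such as $\otimes E$ and $\multimap E$ have major premises whose principal formula and world label are unconstrained by the conclusion, so a \hyll derivation of a pure, uniformly labelled judgement can detour through impure propositions at arbitrary worlds, and the naive induction hypothesis does not apply to those premises. Removing such detours is exactly normalization, which the paper never establishes for its natural-deduction system; so the paper's one-liner glosses over the very obstacle you identified, and your detour through the cut-free sequent calculus (completeness of the sequent calculus, \S\ref{sec:proofs.correct}, plus cut admissibility, \S\ref{sec:proofs.cut}, then the subformula property forcing every sequent in the derivation to be pure and labelled $w$, then erasure of labels) is the honest repair. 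It is also exactly the argument that works for the paper's second conservativity theorem --- the sequent-calculus version stated after cut admissibility --- whose ``simple structural induction'' succeeds there because cut-free rules, read bottom-up, only decompose formulas. What the paper's route would buy, a proof self-contained in the natural-deduction section, is not actually available; what yours buys is a complete argument, at the cost of forward references to results proved later. One bookkeeping point: state explicitly, as the sequent-calculus version does, that every hypothesis in $\Gamma$ and $\Delta$ is labelled with the same $w$, since otherwise the erasure to an \ill sequent is not even well defined.
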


\begin{proof}
  By structural induction on the given \hyll derivation.
\end{proof}

\subsection{Sequent calculus for \hyll}

In this section, we give a sequent calculus presentation of \hyll and prove a
cut-admissibility theorem.  The sequent formulation in turn will lead to an
analysis of the polarities of the connectives in order to get a focused sequent
calculus that can be used to compile a logical theory into a system of derived
inference rules with nice properties (\secref{focusing}).  For instance, if a
given theory defines a transition system, then the derived rules of the focused
calculus will exactly exhibit the same transitions. This is key to obtain the
necessary representational adequacy theorems,
as we shall see for the \spi-calculus example chosen in this paper (\secref{spi.adq}). 

In the sequent calculus, we depart from the linear hypothetical judgement "|-"
which has only an ``active'' right-hand side to a sequent arrow "==>" that has
active zones on both sides. A rule that infers a proposition on the right of the
sequent arrow is called a ``right'' rule, and corresponds exactly to the
introduction rules of natural deduction. Dually, introductions on the left of
the sequent arrow correspond to elimination rules of natural deduction; however,
as all rules in the sequent calculus are introduction rules, the information
flow in a sequent derivation is always in the same direction: from the
conclusion to the premises, incidentally making the sequent calculus ideally
suited for proof-search.

\begin{figure}[p]
\framebox{ 
\begin{minipage}{0.9\textwidth}
  \setlength{\parindent}{0pt}
   \bgroup \small
   \paragraph{Judgemental rules}
   \begin{gather*}
     \I[init]{"\G ; a\ \vec t\ @ u ==> a\ \vec t\ @ u"}{}
     \SP
     \I[copy]{"\G, A @ u ; \D ==> C @ w"}{"\G, A @ u ; \D, A @ u ==> C @ w"}
   \end{gather*}

   \paragraph{Multiplicatives}
   \begin{gather*}
     \I["{tens}R"]{"\G ; \D, \D' ==> A tens B @ w"}{"\G ; \D ==> A @ w" & "\G ; \D' ==> B @ w"}
     \SP
     \I["{tens}L"]{"\G ; \D, A tens B @ u ==> C @ w"}
     {"\G ; \D, A @ u, B @ u ==> C @ w"}
     \\[1ex]
     \I["{one}R"]{"\G ; . ==> one @ w"}
     \SP
     \I["{one}L"]{"\G ; \D, one @ u ==> C @ w"}{"\G ; \D ==> C @ w"}
     \SP
     \I["{-o}R"]{"\G ; \D ==> A -o B @ w"}{"\G ; \D, A @ w ==> B @ w"}
     \\[1ex]
     \I["{-o}L"]{"\G ; \D, \D', A -o B @ u ==> C @ w"}
     {"\G ; \D ==> A @ u" & "\G ; \D', B @ u ==> C @ w"}
   \end{gather*}

   \paragraph{Additives}
   \begin{gather*}
     \I["top R"]{"\G ; \D ==> top @ w"}
     \LSP
     \I["{with}R"]{"\G ; \D ==> A with B @ w"}{"\G ; \D ==> A @ w" & "\G ; \D ==> B @ w"}
     \\[1ex]
     \I["{with}L_i"]{"\G ; \D, \D', A_1 with A_2 @ u ==> C @ w"}
     {"\G ; \D, A_i @ u ==> C @ w"}
     \\[1ex]
     \I["{plus}R_i"]{"\G ; \D ==> A_1 plus A_2 @ w"}{"\G ; \D ==> A_i @ w"}
     \LSP
     \I["zero L"]{"\G ; \D, zero @ u ==> C @ w"}
     \\[1ex]
     \I["{plus}L"]{"\G ; \D, A plus B @ u ==> C @ w"}
     {"\G ; \D, A @ u ==> C @ w" & "\G ; \D, B @ u ==> C @ w"}
   \end{gather*}

   \paragraph{Quantifiers}
   \begin{gather*}
     \I["\forall R^\alpha"]{"\G ; \D ==> \fall \alpha A @ w"}{"\G ; \D ==> A @ w"}
     \SP
     \I["\forall L"]{"\G ; \D, \fall \alpha A @ u ==> C @ w"}
     {"\G ; \D, [\tau / \alpha] A @ u ==> C @ w"}
     \\[1ex]
     \I["\exists R"]{"\G ; \D ==> \fex \alpha A @ w"}{"\G ; \D ==> [\tau / \alpha] A @ w"}
     \SP
     \I["\exists L^\alpha"]{"\G ; \D, \fex \alpha A @ u ==> C @ w"}
     {"\G ; \D, A @ u ==> C @ w"}
   \end{gather*}

   For "\forall R^\alpha" and "\exists L^\alpha", "\alpha" is assumed to
   be fresh with respect to the conclusion. For "\exists R" and "\forall
   L", "\tau" stands for a term or world, as appropriate.

   \paragraph{Exponentials}
   \begin{gather*}
     \I["{!}R"]{"\G ; . ==> {! A} @ w"}{"\G ; . ==> A @ w"}
     \SP
     \I["{!}L"]{"\G ; \D, {! A} @ u ==> C @ w"}{"\G, A @ u ; \D ==> C @ w"}
   \end{gather*}

   \paragraph{Hybrid connectives}

   \begin{gather*}
     \I["at R"]{"\G ; \D ==> (A at u) @ v"}{"\G ; \D ==> A @ u"}
     \LSP
     \I["at L"]{"\G ; \D, (A at u) @ v ==> C @ w"}{"\G ; \D, A @ u ==> C @ w"}
     \\[1ex]
     \I["{dn}R"]{"\G ; \D ==> now u. A @ w"}{"\G ; \D ==> [w/u] A @ w"}
     \LSP
     \I["{dn}L"]{"\G ; \D, now u. A @ v ==> C @ w"}{"\G ; \D, [v/u] A @ v ==> C @ w"}
   \end{gather*}
   \egroup
\end{minipage}}
\caption{The sequent calculus for \hyll}.
\label{fig:seq-rules}
\end{figure}

The full collection of rules of the \hyll sequent calculus is in
\figref{seq-rules}. There are only two structural rules: the init rule infers an
atomic initial sequent, and the copy rule introduces a contracted copy of an
unrestricted assumption into the linear context (reading from conclusion to
premise). Weakening and contraction are admissible rules:

\begin{thm}[structural properties] \mbox{}
  \begin{ecom}
  \item If "\G ; \D ==> C @ {w}", then "\G, \G' ; \D ==> C @ {w}". (weakening)
  \item If "\G, A @ u, A @ u ; \D ==> C @ {w}", then "\G, A @ u ; \D ==> C @ {w}". (contraction)
  \end{ecom}
\end{thm}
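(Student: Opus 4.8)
The plan is to prove both parts by a straightforward structural induction on the given sequent derivation, exploiting the fact that the unrestricted context is handled additively by every rule of \figref{seq-rules}.

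For weakening, I would induct on the derivation of $\Gamma ; \Delta \Longrightarrow C @ w$ and observe that in every rule the unrestricted context $\Gamma$ is passed \emph{unchanged} to each premise---it is never split the way the linear context $\Delta$ is in $\otimes R$ or $\multimap L$. Consequently one simply adds the extra hypotheses $\Gamma'$ to every sequent occurring in the derivation and checks that each rule instance remains valid. The base case \textsf{init} tolerates arbitrary additional unrestricted hypotheses, so it goes through immediately, and the cases for \textsf{copy} and \textsf{!L}, which read from or write to $\Gamma$, are unaffected by the presence of $\Gamma'$. The only genuine side condition is the freshness requirement on the eigenvariable in $\forall R^\alpha$ and $\exists L^\alpha$: after adding $\Gamma'$ the conclusion is larger, so I would first $\alpha$-rename the eigenvariable to avoid any clash with $\Gamma'$, which is legitimate since derivations are read up to $\alpha$-conversion.

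For contraction, I would again induct on the derivation, this time of $\Gamma, A @ u, A @ u ; \Delta \Longrightarrow C @ w$, merging the two copies of $A @ u$ into a single one. For every rule other than \textsf{init}, \textsf{copy} and \textsf{!L}, the context $\Gamma, A @ u, A @ u$ is threaded unchanged into the premises, so the induction hypothesis applies directly to each premise and the rule is then reapplied. The \textsf{init} case is immediate. The two cases deserving attention are \textsf{copy} and \textsf{!L}, where the unrestricted context changes. In the \textsf{copy} case I distinguish whether the copied assumption is one of the two occurrences of $A @ u$ or some other hypothesis of $\Gamma$: in either situation I apply the induction hypothesis to the premise and then reapply \textsf{copy}; when the copied assumption is $A @ u$ itself, a single surviving occurrence still suffices, because \textsf{copy} may be used arbitrarily often and does not consume its assumption. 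The \textsf{!L} case is handled the same way, pushing the contraction into the premise and reapplying the rule.

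The proof presents no serious obstacle; the only point requiring care is precisely the \textsf{copy} subcase of contraction, where one must verify that collapsing the two unrestricted occurrences of $A @ u$ into one does not lose the ability to materialise $A @ u$ into the linear context. This is guaranteed by the non-consuming, replayable nature of \textsf{copy}, which is exactly the structural feature that makes the unrestricted zone behave like an intuitionistic, weakening- and contraction-admitting context. Everything else is routine bookkeeping, uniform across all the logical rules.
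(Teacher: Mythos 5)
Your proof is correct and takes essentially the same route as the paper, which simply states that both properties follow ``by straightforward structural induction on the given derivations.'' Your write-up fleshes out exactly that induction, with the right attention to the only delicate points (the eigenvariable freshness in $\forall R^\alpha$/$\exists L^\alpha$ and the \textsf{copy}/\textsf{!L} cases of contraction).
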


\begin{proof}
  By straightforward structural induction on the given derivations.
\end{proof}

The most important structural properties are the admissibility of the identity
and the cut principles. The identity theorem is the general case of the init
rule and serves as a global syntactic completeness theorem for the
logic. Dually, the cut theorem below establishes the syntactic soundness of the
calculus; moreover there is no cut-free derivation of ". ; . ==> zero @ w", so
the logic is also globally consistent.

\begin{thm}[identity] 
  "\G ; A @ w ==> A @ w".
\end{thm}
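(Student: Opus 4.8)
The plan is to prove the identity theorem, "\G ; A @ w ==> A @ w", by structural induction on the proposition "A". This is the standard way to establish that the general identity is admissible from the atomic "init" rule: we show that for every connective, if identity holds at the immediate subformulas, then it holds at the compound formula. The statement is strengthened implicitly by the induction hypothesis, which applies to all strict subformulas of "A" at arbitrary worlds and in arbitrary unrestricted contexts "\G".

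First I would handle the base case. When "A" is an atomic proposition "a\ \vec t", the sequent "\G ; a\ \vec t @ w ==> a\ \vec t @ w" is exactly an instance of the "init" rule, so we are done immediately. Then, for each inductive case, I would exhibit the derivation that decomposes "A @ w" on the right by a right rule and on the left by the matching left rule, closing the branches by appeals to the induction hypothesis on the subformulas. For example, for "A = B tens C", I would apply "{tens}L" to the hypothesis to get "\G ; B @ w, C @ w ==> B tens C @ w", then apply "{tens}R" splitting the linear context so that one premise is "\G ; B @ w ==> B @ w" and the other is "\G ; C @ w ==> C @ w", each of which holds by the induction hypothesis. The multiplicative, additive, and unit cases are routine in exactly this pattern; for the exponential "! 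B" I would use "{!}L" to move "B @ w" into the unrestricted context, then "{!}R" followed by a "copy", reducing to the induction hypothesis on "B". The first-order quantifier cases use the eigenvariable rules ("\forall R^\alpha", "\forall L") with "\tau" taken to be the fresh eigenvariable.

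The cases I expect to require the most care, though they are still routine, are the two hybrid connectives, since these are the genuinely new rules of \hyll. For the satisfaction connective "A = (B at u)", I would apply "at L" to turn the hypothesis "(B at u) @ w" into "B @ u", then apply "at R" on the right so that the premise becomes "\G ; B @ u ==> B @ u", which is the induction hypothesis at world "u" rather than "w"; this is exactly why the induction hypothesis must be stated for arbitrary worlds. For localization "A = now u. B", I would apply "{dn}L" to replace "now u. B @ w" by "[w/u] B @ w", and "{dn}R" on the right to reduce the goal to "[w/u] B @ w", so that both sides coincide and the induction hypothesis applies to "[w/u] B", which is a subformula of "now u. B" of strictly smaller size. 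The only subtlety is ensuring the induction is well-founded on formula structure despite the substitution "[w/u]B"; since substituting a world for a world variable does not increase the structural complexity of the formula, the induction measure is preserved and the argument goes through.

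\begin{proof}
  By structural induction on "A", appealing to the induction hypothesis at the
  immediate subformulas of "A" (at arbitrary worlds and in arbitrary
  unrestricted contexts). The atomic case is an instance of "init". Each
  propositional connective is handled by applying its left rule to the
  hypothesis and its right rule to the goal, closing the resulting branches by
  the induction hypothesis; the exponential additionally uses "copy", and the
  hybrid connectives "at" and "dn" reduce to the induction hypothesis at the
  appropriate world after applying the matching left and right rules.
\end{proof}
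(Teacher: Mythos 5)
Your proof is correct and follows essentially the same route as the paper's: structural induction on "A", closing the atomic case with init and each compound case by pairing the matching left and right rules and appealing to the induction hypothesis, with the exponential case routed through copy and the hybrid cases invoking the induction hypothesis at a shifted world ("at") or on the world-substituted body ("dn"). Your explicit remark that world substitution preserves the structural measure, which justifies the "dn" case, is a point the paper leaves implicit but is exactly the right observation.
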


\begin{proof}
  By induction on the structure of "A" (see \secref{proofs.identity}).
\end{proof}

\bgroup
\begin{thm}[cut] \mbox{}
  \begin{ecom}[1.]
  \item If "\G ; \D ==> A @ u" and "\G ; \D', A @ u ==> C @ w", then "\G ; \D,
    \D' ==> C @ w".
  \item If "\G ; . ==> A @ u" and "\G, A @ u ; \D ==> C @ w", then "\G ; \D
    ==> C @ w".
  \end{ecom}
\end{thm}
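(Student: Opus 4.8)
The plan is to prove cut admissibility for the HLF sequent calculus by the standard structural-induction method due to Pfenning (the "cut elimination as a structural metatheorem" argument). I would prove the two statements simultaneously, since the linear cut (1) and the unrestricted cut (2) are mutually dependent: a principal "!"-cut reduces to an unrestricted cut, and an unrestricted cut is realized by copying and then performing a linear cut. The induction is a nested, lexicographic induction on the pair consisting of the \emph{cut formula} $A$ (more precisely its structure, measured so that subformulas and world-substituted instances count as smaller) and the two given cut derivations $\mathcal{D}$ (of the left premise) and $\mathcal{E}$ (of the right premise). The cut formula $A$ is the primary measure; when it stays fixed we descend on the heights of $\mathcal{D}$ and $\mathcal{E}$.

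First I would organize the case analysis into the usual three groups. In the \emph{principal} cases, the cut formula is introduced by a right rule at the root of $\mathcal{D}$ and by the matching left rule at the root of $\mathcal{E}$; here we reduce to one or more cuts on strict subformulas of $A$, invoking the induction hypothesis at a smaller cut formula. The hybrid connectives contribute two new principal cases: for "at", a "at R"/"at L" pair on $(A \mathbin{\mathrm{at}} u)$ reduces to a cut on $A$ at the world $u$; for "dn", a "dn R"/"dn L" pair on $\mathrm{dn}\,u.\,A$ reduces to a cut on $[w/u]A$ (respectively $[v/u]A$), so the cut formula after reduction is a world-instance of a strict subformula. I would check that this world-substituted formula is counted as smaller than $\mathrm{dn}\,u.\,A$ in the termination measure, which is the one genuinely HLF-specific bookkeeping point. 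In the \emph{left-commutative} cases, the last rule of $\mathcal{D}$ does not introduce the cut formula; we permute the cut above that rule, appealing to the induction hypothesis with the same $A$ but a smaller $\mathcal{D}$. The \emph{right-commutative} cases are dual, permuting the cut above the last rule of $\mathcal{E}$ with a smaller $\mathcal{E}$.

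The interaction between the two parts of the theorem is where I would be most careful. When $A$ is of the form $!A'$, the principal case pairs "!R" with "!L"; "!L" moves $A'$ into the unrestricted context, so the reduced cut is an instance of part (2) on the smaller formula $A'$, which is licensed because $A'$ is a strict subformula of $!A'$. Conversely, in proving part (2) I would handle the case where the last rule of the right derivation is "copy" acting on the very hypothesis $A @ u$ being cut: this reduces, via part (1) (a linear cut) together with part (2) at a smaller derivation height, using that $\mathcal{D}$ has an empty linear context so its conclusion can be freely duplicated. Weakening and contraction for the unrestricted context, already established as admissible, are used silently to keep the contexts aligned after these reductions.

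The main obstacle I expect is ensuring the termination measure is genuinely well-founded once the "dn" and "at" reductions move world substitutions into the cut formula. Because world variables cannot occur in terms and term variables cannot occur in worlds, a world substitution $[w/u]$ does not alter the propositional skeleton of $A$ — it only replaces a world label — so if the induction measure on $A$ counts \emph{logical connective structure} rather than raw size, the substituted instance $[w/u]A'$ has strictly smaller measure than $\mathrm{dn}\,u.\,A'$ and the principal "dn" case is well-founded. Making this measure precise, and confirming that the "at"/"dn" commutative cases permute cleanly past the other rules (which follows because these connectives commute with all others, as already noted in the discussion preceding Theorem~\ref{thm:conserv}), is the crux; everything else follows the pattern familiar from intuitionistic linear logic, which we may import from the cited treatment in~\cite{chaudhuri03tr}.
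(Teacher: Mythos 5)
Your proposal is essentially the paper's own proof: structural cut-elimination in the Pfenning style, proving the linear and unrestricted cuts simultaneously by a lexicographic induction in which the cut formula is primary (with world-substitution instances of subformulas counted as smaller, exactly the bookkeeping the paper needs for the principal cases of the two hybrid connectives and the quantifiers) and the two derivations secondary, organized into the same atomic, principal, commutative, and copy cases. The one point where the paper is more explicit than you: its lexicographic order contains a separate clause stating that, at a fixed cut formula, an instance of cut (part 1) may be used to justify an instance of cut! (part 2), and it is this clause --- not descent on derivation height --- that licenses the linear cut in your copy case, whose second premise is a freshly constructed derivation rather than a subderivation of $\mathcal{E}$; your stated measure (formula, then heights of $\mathcal{D}$ and $\mathcal{E}$) omits this clause even though your prose about mutual dependency relies on it, so you should add it to make the termination argument close.
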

\egroup

\begin{proof}
  By lexicographic structural induction on the given derivations, with cuts of
  kind 2 additionally allowed to justify cuts of kind 1. The style of proof
  sometimes goes by the name of \emph{structural
    cut-elimination}~\cite{chaudhuri03tr}.
  See \secref{proofs.cut} for the details.
\end{proof}

We can use the admissible cut rules to show that the following rules are
invertible: "tens L", "one L", "plus L", "zero L", "\exists L", "-o R", "with
R", "top R", and "\forall R". In addition, the four hybrid rules, "at R", "at
L", "{dn} R" and "{dn} L" are invertible. In fact, "dn" and "at" commute freely
with all non-hybrid connectives:

\begin{thm}[Invertibility] The following rules are invertible:
  \begin{ecom}
  \item On the right: "with R", "top R", "{-o} R", "\forall R", "{dn} R" and "at R";
  \item On the left: "tens L", "one L", "plus L", "zero L", "\exists L", "!L", "{dn} L" and "at L".
  \end{ecom}
\end{thm}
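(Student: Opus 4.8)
The plan is to prove invertibility of each listed rule by the standard method: a rule is invertible when, given a derivation of its conclusion, we can construct derivations of each of its premises. The cleanest uniform approach uses the admissible cut principle (Theorem~cut) together with the identity theorem, rather than a direct induction on derivations. For a right rule with principal formula $A$ on the right, I would cut the assumed derivation of the conclusion against a suitable ``expansion'' derivation of the corresponding premise shape; dually, for a left rule with principal formula in $\D$, I would cut an expansion derivation against the assumed conclusion derivation.

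Concretely, for each connective I would exhibit the small derivation that drives the cut. For instance, to show ${-o}R$ invertible, I assume $\G ; \D \Longrightarrow A \mathbin{-\!\!\circ} B @ w$ and must produce $\G ; \D, A @ w \Longrightarrow B @ w$; here I take the identity derivation $\G ; A \mathbin{-\!\!\circ} B @ w \Longrightarrow A \mathbin{-\!\!\circ} B @ w$ obtained by Theorem~identity, apply ${-o}L$ against $\G ; A @ w \Longrightarrow A @ w$ to get $\G ; A \mathbin{-\!\!\circ} B @ w, A @ w \Longrightarrow B @ w$, and then cut with the assumed derivation to eliminate the $A \mathbin{-\!\!\circ} B @ w$ hypothesis. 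The additive and quantifier cases ($\mathbin{\&}R$, $\top R$, $\forall R$, and on the left $\oplus L$, $\mathbf{0} L$, $\exists L$, $\otimes L$, $\mathbf{1} L$, ${!}L$) follow the same recipe, each time pairing the left/right rule for the connective with the identity derivation and then one cut. The hybrid cases are where the freshness and world-substitution bookkeeping matters: for $at R$ I reduce $(A \mathbin{\mathrm{at}} u) @ v$ to $A @ u$ via $at L$ applied to an identity on $(A \mathbin{\mathrm{at}} u) @ v$, and for ${dn}R$, ${dn}L$ I must check that the world substitution $[w/u]A$ introduced by the rule is compatible with the cut formula, using that $u$ does not escape its scope.

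Alternatively, several of these rules can be seen to be invertible syntactically by a direct structural induction that permutes the inverted rule upward past every other rule, appealing to the fact established in the text that $\mathrm{dn}$ and $\mathrm{at}$ commute with all non-hybrid connectives; but the cut-based argument is shorter and avoids a large case analysis, so I would present that as the main line and mention the permutation argument only as a remark. I would organize the proof as a single lemma with one representative right case (say ${-o}R$) and one representative left case (say $\otimes L$) worked in full, then state that all remaining cases are analogous.

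The main obstacle I anticipate is the hybrid rules, specifically ${dn}R$ and ${dn}L$, because there the premise is not merely a subformula of the conclusion's principal formula but involves a world substitution $[w/u]$ (respectively $[v/u]$). I must verify that the expansion derivation used in the cut respects the binding of $u$ and that no unsound capture occurs when $u$ is instantiated to the ambient world; this is exactly the side condition flagged after the $\mathrm{dn}$ rules. Provided that substitution is capture-avoiding (as stipulated in the text) and that cut is admissible on the substituted instance, the argument goes through uniformly, so I expect this to be a matter of careful bookkeeping rather than a genuine difficulty.
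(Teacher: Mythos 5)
Your proposal follows essentially the same route as the paper's own proof: the paper also derives each inversion by one admissible cut against an ``expansion'' derivation built from the dual rule and the identity theorem (for instance, cutting $\Gamma;\Delta \Longrightarrow A \multimap B @ w$ against $\Gamma; A \multimap B @ w, A @ w \Longrightarrow B @ w$, the latter obtained by $\multimap L$ from two identities), and it treats the hybrid rules by the same pattern, with the world substitution handled directly by cutting against the $\mathrm{dn}$ left/right rule applied to an identity on the substituted instance. The only differences are presentational: the paper writes out every case rather than two representatives, and it dispatches $\top R$ and $\mathbf{0}L$ as trivial since they have no premises to recover.
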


\begin{proof}
  See \S\ref{sec:proofs.invert}.
\end{proof}

\begin{thm}[Correctness of the sequent calculus] \mbox{}
  \begin{ecom}
  \item If "\G ; \D ==> C @ w", then "\G ; \D |-nd C @ w". (soundness)
  \item If "\G ; \D |-nd C @ w", then "\G ; \D ==> C @ w". (completeness)
  \end{ecom}
\end{thm}

\begin{proof}
  See \S\ref{sec:proofs.correct}.
\end{proof}

\begin{cor}[consistency]
  There is no proof of ". ; . |-nd zero @ w".
\end{cor}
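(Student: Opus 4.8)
The plan is to reduce the claim to the observation, already noted in the text, that there is no cut-free sequent derivation of $\cdot\,;\cdot \Longrightarrow \mathbf{0} @ w$, and then transport this fact back to natural deduction using the correspondence between the two calculi. Concretely, I would argue by contradiction. Suppose $\cdot\,;\cdot \vdash_{\mathrm{nd}} \mathbf{0} @ w$ held. By the completeness half of the Correctness theorem, this natural deduction derivation would yield a sequent derivation $\cdot\,;\cdot \Longrightarrow \mathbf{0} @ w$. By the cut theorem, cut is admissible, so I may assume without loss of generality that this sequent derivation is cut-free.

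Next I would inspect the possible shapes of a cut-free derivation whose endsequent is $\cdot\,;\cdot \Longrightarrow \mathbf{0} @ w$, reasoning about its lowermost rule using the rules of \figref{seq-rules}. The succedent $\mathbf{0}$ has no right rule (there is an $\mathbf{0}$L but no $\mathbf{0}$R), so the last rule cannot be a right rule. The init rule is excluded because $\mathbf{0}$ is not atomic and the linear context is empty. Every remaining possibility — the copy rule and each left rule, including $\mathbf{1}$L and $\mathbf{0}$L — has its principal formula drawn from the unrestricted context $\Gamma$ or the linear context $\Delta$; since both contexts are empty in the endsequent, none of these rules applies either. Hence no rule can conclude $\cdot\,;\cdot \Longrightarrow \mathbf{0} @ w$, so no cut-free derivation exists, contradicting the previous step and establishing the corollary.

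The main obstacle is conceptual rather than computational: one must first pass to the sequent calculus and then to cut-free proofs, because it is only cut-free sequent derivations that support the root-first case analysis on the last rule used above. In natural deduction, or in the presence of cut, the judgement $\mathbf{0} @ w$ could in principle be reached by an elimination (respectively a cut) acting on an arbitrary auxiliary proposition that is not a subformula of the conclusion, so no such bottom-up inspection terminates. Once cut-admissibility and completeness are in hand, however, the remaining case analysis is entirely routine, turning on the single structural fact that $\mathbf{0}$ has no right introduction rule together with the emptiness of both contexts.
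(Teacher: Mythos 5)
Your proposal is correct and follows essentially the same route as the paper: pass from natural deduction to the sequent calculus via the completeness half of the correctness theorem, invoke cut-admissibility to obtain a cut-free derivation, and then observe that $\cdot\,;\cdot \Longrightarrow \mathbf{0}\,@\,w$ cannot be the conclusion of any sequent rule. The only difference is that you spell out the rule-by-rule case analysis (no right rule for $\mathbf{0}$, init excluded, left/copy rules blocked by the empty contexts) which the paper compresses into the phrase ``by simple inspection.''
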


\begin{proof}
  See \S\ref{sec:proofs.correct}.
\end{proof}

\hyll is conservative with respect to ordinary intuitionistic logic: as long as
no hybrid connectives are used, the proofs in \hyll are identical to those in
\ill~\cite{chaudhuri03tr}. The proof (omitted) is by simple structural
induction.

\begin{thm}[conservativity]
  If "\G ; \D ==>_{\hyll} C @ w" is derivable, contains no occurrence of the
  hybrid connectives "dn", "at", "\forall u" or "\exists u", and each element of
  "\G" and "\D" is of the form "A @ w", then "\G ; \D ==>_{\ill} C". 
\end{thm}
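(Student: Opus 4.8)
The plan is to show that, under the stated hypotheses, the world label ``@ $w$'' is an inert passenger that can simply be erased, turning the \hyll derivation into an \ill derivation rule for rule. Two observations drive the argument. First, the sequent calculus of \figref{seq-rules} has no primitive cut rule, so the given derivation is already cut-free and enjoys the subformula property: every formula occurring in it is (a substitution instance of) a subformula of a formula of the end-sequent. Since $\Gamma$, $\Delta$ and $C$ contain no hybrid connective and no world quantifier, neither does any formula anywhere in the derivation; consequently none of the rules \emph{at}$R$, \emph{at}$L$, \emph{dn}$R$, \emph{dn}$L$ is ever applied, and every application of a quantifier rule is over a term variable rather than a world variable. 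I would make this precise by noting that the principal formula of the last rule of any subderivation is either the right-hand side of the end-sequent (for a right rule) or a formula of its context (for a left rule), hence hybrid-free, so the last rule can never be a hybrid or world-quantifier rule.

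Second, I would prove a \textbf{world-uniformity} lemma: every sequent occurring in the derivation has all of its judgements---in both contexts and on the right---situated at the single world $w$. This is a routine induction on the derivation starting from the end-sequent, which is uniformly at $w$ by hypothesis. For the inductive step one inspects the remaining (purely linear and intuitionistic) rules and checks that each preserves the world of every judgement: a right rule keeps the conclusion's world on its active formula and its immediate subformulas, and each left rule carries the context world unchanged into the premises while decomposing the principal formula at that same world. By the first observation the only rules that could have introduced a fresh, substituted, or relocated world---namely the hybrid and world-quantifier rules---are already excluded. In particular the \emph{init} rule forces its two atomic worlds to coincide (here both $w$), and the left rule $\multimap L$ hands the world of the principal implication, necessarily $w$, to both premises.

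With these two facts in hand the conclusion follows by a direct structural induction performing the erasure. Define $\lfloor - \rfloor$ to delete every annotation ``@ $w$'' from judgements, sequents and derivations. By world-uniformity this map is well defined on the whole derivation, and since $\Gamma$, $\Delta$ and $C$ are already free of hybrid connectives and world quantifiers, erasure removes only the uniform label and sends the end-sequent to the \ill sequent $\Gamma ; \Delta \Longrightarrow C$. For each rule of the derivation, its image under $\lfloor - \rfloor$ is literally an instance of the corresponding \ill rule: \emph{init} maps to \ill's \emph{init}; the multiplicative, additive, exponential and term-quantifier rules map to their identically shaped \ill counterparts (the freshness side conditions on term variables transfer verbatim); and there are no other rules to consider. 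Hence $\lfloor \mathcal{D} \rfloor$ is a well-formed \ill derivation of $\Gamma ; \Delta \Longrightarrow C$, as required.

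I expect the main obstacle to be the world-uniformity lemma, or more precisely the care needed to rule out any escape to a different world. Every step hinges on the fact that, once the hybrid and world-quantifier rules are removed, no remaining rule can create, substitute, or relocate a world; making this exhaustive over all rule schemes---especially the binary left rules such as $\multimap L$ and the elimination-shaped additive rules, where a principal formula carries a world $u$ that must be forced equal to $w$---is the one place where each case genuinely has to be checked rather than read off. Everything else is bookkeeping, and no proof-theoretic machinery beyond the cut-freeness of the calculus is needed.
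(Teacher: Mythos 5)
Your proposal is correct and is essentially the proof the paper has in mind: the paper omits the details, describing the argument as a simple structural induction, and your development---excluding the hybrid and world-quantifier rules via the subformula property of the cut-free calculus, propagating world-uniformity from the end-sequent upward, and then erasing the uniform world label rule by rule to obtain an \ill derivation---is exactly that induction made explicit. The only cosmetic gaps are that the copy rule should be mentioned alongside init among the judgemental rules surviving erasure, and that your two observations are really a single root-to-leaves induction maintaining the invariant that every sequent is hybrid-free and uniformly situated at $w$.
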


An example of derived statements, true in every semantics for worlds, is the following:
\begin{proposition}[relocalisation]
\label{thm:relocalisation}
\begin{gather*}
  \Ic[]{"\G ; A_1 @ {u . w_1} \cdots A_k @ {u . w_k} |- B @ {u . v}"}
         {"\G ; A_1 @ w_1 \cdots  A_k @ w_k |- B @ v"}
\end{gather*}
\end{proposition}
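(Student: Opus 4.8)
The plan is to prove the rule by structural induction on the derivation of the upper sequent "\G ; A_1 @ w_1, \ldots, A_k @ w_k ==> B @ v", transporting each inference through the relocalisation map "\phi_u" that left-multiplies every world by "u", i.e. "w \mapsto u . w". I would in fact establish the more uniform statement that from any derivation of "\G ; \D ==> C @ w" one can build a derivation of "\phi_u(\G) ; \phi_u(\D) ==> C @ {(u . w)}", where "\phi_u" relocalises every world occurring in the sequent; the proposition is then the instance for the displayed contexts. One point to flag immediately is that "\phi_u" must be applied to the unrestricted zone "\G" as well: the copy rule reintroduces a hypothesis "A @ u'" from "\G" into the linear zone, and for the induction hypothesis to match, that label must have moved to "u . u'" like all the others. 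So the principle really holds with "\phi_u" applied throughout, and the displayed form—with "\G" written unchanged—is the reading in which "\G" is empty (or its labels are left implicit). I argue in the sequent calculus of \figref{seq-rules}; the natural-deduction version is then immediate by the correctness theorem above.

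Most of the induction is routine, because no rule of \figref{seq-rules} ever computes with the monoid multiplication: a world label is only tested for equality ("init"), carried unchanged into the premises (all the multiplicative, additive and exponential rules, e.g. "tens", "with", "!"), or substituted (the quantifier and hybrid rules). Equality is preserved by "\phi_u" (from "w = w" we get "u . w = u . w"), so "init" remains an axiom; labels that are merely carried travel through "\phi_u" untouched; and for the world quantifiers "\forall" and "\exists" I would keep the eigenvariable fresh with respect to "u" (renaming it before applying "\phi_u"), so that the freshness side-condition of "\forall R" and "\exists L" survives. The term quantifiers need no attention at all, since worlds and terms are syntactically disjoint.

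The only cases that need real work are the hybrid rules "at" and "dn" and the world quantifiers, where a world is genuinely substituted into a proposition. The key lemma to isolate is that "\phi_u" commutes with capture-avoiding world substitution,
\[
  \phi_u\bigl([w/u']\,A\bigr) \;=\; \bigl[\phi_u(w)/u'\bigr]\,\phi_u(A),
\]
which is exactly what lines up the two sides of "{dn}R"/"{dn}L" (where the current label is localised into the body) and of "at R"/"at L" (where the body is evaluated at its world). This is where the monoidal structure enters: the lemma rests on associativity of "." —so that prepending "u" commutes with the world already carried inside "A"—and only associativity, never commutativity, is used, consistent with the remark following Definition~\ref{defn:constraint-domain}. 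Getting this lemma right, and in particular ensuring that "\phi_u" leaves a bound world variable alone while relocalising the free worlds it is eventually instantiated by, is the main obstacle; once it is in place each hybrid case closes by applying the induction hypothesis to the premise and re-applying the same rule.

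Finally I would note that when "A_1, \ldots, A_k" and "B" carry no hybrid connective and no world quantifier, "\phi_u" touches only the judgement labels and leaves the propositions themselves unchanged—which is precisely the shape shown in the statement. I do not expect the converse (un-relocalisation) to hold, since left-multiplication need not be injective in an arbitrary monoid, so I read the rule as a one-directional admissible principle.
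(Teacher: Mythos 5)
You are proving a statement for which the paper itself supplies no proof, so your argument can only be judged on its own terms. Your strategy---induction on the given derivation, transported through a left-multiplication map $\phi_u$---is the natural one, and the two corrections you flag to the literal statement are genuinely needed: the unrestricted zone $\Gamma$ must be relocalised (with $\Gamma = \{a \mathrel{@} w_0\}$ and $k=0$, the sequent $\Gamma ;\, \cdot \vdash a \mathrel{@} w_0$ is derivable by copy and init, but $\Gamma ;\, \cdot \vdash a \mathrel{@} u.w_0$ is not), and so must the worlds written inside the propositions (compare $(a \mathbin{\mathsf{at}} v) \mathrel{@} w_1 \vdash a \mathrel{@} v$, which is derivable, with $(a \mathbin{\mathsf{at}} v) \mathrel{@} u.w_1 \vdash a \mathrel{@} u.v$, which is not).

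The gap is in the step you yourself single out as the crux: the commutation lemma $\phi_u([w/u']A) = [\phi_u(w)/u']\,\phi_u(A)$ is false, it does not rest on associativity alone, and no choice of $\phi_u$ can repair it, because the generalised statement itself fails. Work in the commutative temporal domain $\langle \mathbb{R}^{+},+,0\rangle$ and let $B \triangleq \mathord{\downarrow} u_1.\, \mathord{\downarrow} u_2.\, (a \mathbin{\mathsf{at}} u_1 + u_2)$. The sequent $\cdot\,;\; a \mathrel{@} 2v \Longrightarrow B \mathrel{@} v$ is derivable (two $\downarrow$R, then $\mathsf{at}$R, then init). In any candidate relocalised sequent $\cdot\,;\; a \mathrel{@} u+2v \Longrightarrow B' \mathrel{@} u+v$ every rule application is forced: the right-hand formula admits only $\downarrow$R twice and then $\mathsf{at}$R, the left-hand atom admits no rule, so derivability requires the world expression $e$ under the two binders of $B'$ to satisfy $e[u+v/u_1,\,u+v/u_2] = u+2v$; writing $e = \alpha u_1 + \beta u_2 + \gamma u + c$ this forces $\alpha+\beta = 2$ and then $(1+\gamma)u + c = 0$, impossible for $u \neq 0$. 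Your lemma breaks exactly at the inner $\downarrow$R: $\phi_u\bigl([v/u_2](a \mathbin{\mathsf{at}} v+u_2)\bigr) = (a \mathbin{\mathsf{at}} u+2v)$, whereas $[(u+v)/u_2]\,\phi_u(a \mathbin{\mathsf{at}} v+u_2) = (a \mathbin{\mathsf{at}} 2u+2v)$---the prefix $u$ is counted once on one side and twice on the other, as soon as a bound ``current-world'' variable is multiplied with other world material. (Non-commutative monoids add a second, independent failure: a bound variable occurring as a right factor, as in $a \mathbin{\mathsf{at}} v_0 . u'$, puts the prefix in the wrong position, so commutativity is not the issue you can trade away.) The upshot is that the proposition is false for arbitrary propositions and your induction cannot be completed as announced; a correct proof must restrict its scope. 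Your closing restriction (no hybrid connectives, no world quantifiers) is one sound scope, where the induction really is routine because worlds are then only copied and tested for equality. A more useful sound scope, covering the derived connectives $\Box$, $\Diamond$, $\rho$ and $!!$ that the paper actually uses, is to require that every world expression under a binder contain at most one bound variable, occurring exactly once and leftmost: there the map that prefixes $u$ to closed expressions and leaves bound-headed ones untouched does satisfy your lemma, and only associativity is used.
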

This property is particularly well suited to applications in biology.

\paragraph{}
In the rest of this paper we use the following derived connectives:

\bgroup 
\begin{defn}[modal connectives] \label{defn:connectives} \mbox{}
  \vspace{-1ex}
  \begin{gather*}
    \begin{aligned}
      "box A" &\triangleq "now u. all w. (A at u . w)" & \qquad
      "dia A" &\triangleq "now u. ex w. (A at u . w)" \\
      "rate v A" &\triangleq "now u. (A at u . v)" &
      "!! A" &\triangleq "all u. (A at u)"
    \end{aligned}
  \end{gather*}
\end{defn}
\egroup

\noindent
The connective "\rhoup" represents a form of delay. Note its derived right rule:
\begin{gather*}
  \Ic["\rhoup R"]{"\G ; \D |- rate v A @ w"}{"\G ; \D |- A @ {w . v}"}
\end{gather*}
The proposition "rate v A" thus stands for an \emph{intermediate state} in a
transition to "A". Informally it can be thought to be ``"v" before "A"''; thus,
"all v. rate v A" represents \emph{all} intermediate states in the path to "A",
and "ex v. rate v A" represents \emph{some} such state.  The modally
unrestricted proposition "!! A" represents a resource that is consumable in any
world; it is mainly used to make transition rules applicable at all worlds.

It is worth remarking that \hyll proof theory can be seen as at least as
powerful as (the linear restriction of) intuitionistic S5~\cite{simpson94phd}:

\begin{thm}[\hyll is S5]
  The following sequent is derivable: ". ; dia A @ w ==> box dia A @ w".
\end{thm}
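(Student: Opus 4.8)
The plan is to unfold "box" and "dia" with Definition~\ref{defn:connectives} and then read the sequent calculus of \figref{seq-rules} from conclusion to premises, applying invertible rules on both sides until a branch closes by the identity theorem. By the invertibility theorem the hybrid rules "at R", "at L", "{dn} R", "{dn} L", together with "\forall R" and "\exists L", may be applied eagerly without losing completeness, so the single genuine choice in the whole derivation will be the witness supplied to one instance of "\exists R".

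I would first decompose the succedent. Unfolding the outer "box" yields a proposition "now u. all z. (C at u . z) @ w" whose body "C" is "dia A"; applying "{dn} R" (which fixes the bound world to the current world "w"), then "\forall R" (introducing a fresh world eigenvariable "v"), and then "at R", reduces the goal to "dia A @ w . v". Unfolding this "dia" and applying "{dn} R" once more turns the succedent into "ex z. (A at (w . v) . z) @ (w . v)". Symmetrically, I would decompose the antecedent "dia A @ w", that is "now u. ex z. (A at u . z) @ w", by the invertible left rules "{dn} L", "\exists L" (introducing a fresh eigenvariable "v'") and "at L", leaving the single linear hypothesis "A @ w . v'". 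It is important to perform the antecedent decomposition before touching the right-hand existential, so that "v'" is available when the witness for "\exists R" is chosen.

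At this stage the residual sequent is "\G ; A @ w . v' ==> ex z. (A at (w . v) . z) @ (w . v)", and its only non-invertible step is the choice of a witness "t" for "\exists R". After the ensuing "at R" the goal becomes "A @ (w . v) . t", that is "A @ w . v . t" by associativity, and the branch closes by the identity theorem exactly when "w . v . t" and "w . v'" denote the same world; for this it suffices to pick "t" with "v . t = v'".

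The main obstacle is exactly this world-matching step, which is where the Euclidean content of axiom~5 is discharged: "v" is the universally bound eigenvariable contributed by the outer "box", "v'" is the eigenvariable produced by the antecedent "dia", and the two must be reconciled by a single monoidal witness "t". I would discharge it by using the structure of the constraint domain to name a world sending "v" to "v'" --- for instance $t = v^{-1}\cdot v'$ when the domain carries inverses --- after which the derivation is a fixed block of invertible steps followed by one "\exists R" and an appeal to identity, with no use of cut. The only delicate bookkeeping I anticipate is the freshness side-conditions on "v" and "v'" and the implicit appeals to associativity when matching "(w . v) . t" against "w . v'".
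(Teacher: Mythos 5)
Your bottom-up decomposition is faithful to the sequent calculus, and the obstruction you isolate is exactly the crux; the problem is how you discharge it. The theorem is about generic \hyll: the derivation must exist over an arbitrary constraint domain, which by Definition~\ref{defn:constraint-domain} is only a \emph{monoid}. Your final step needs a world expression $t$, built from the material in scope, satisfying $v \cdot t = v'$ where $v$ and $v'$ are two unrelated eigenvariables. No such expression exists --- interpret the worlds in a free monoid and note that no product $v \cdot t$ can equal the single generator $v'$ --- and the escape hatch $t = v^{-1}\cdot v'$ is unavailable because monoids need not carry inverses. The failure is not merely formal: even in the paper's flagship temporal instance $\mathcal{T}=\langle\mathbb{R}^{+},+,0\rangle$, the equation $v+t=v'$ has no solution $t\geq 0$ when $v>v'$. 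So under your (literal) unfolding of Definition~\ref{defn:connectives}, the proposed derivation cannot be completed; what your argument actually establishes is the weaker statement that the sequent is derivable when the worlds form a group, which is not what the theorem asserts.

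The paper's proof never meets this equation because it unfolds the modalities differently. Its first step (labelled \emph{defn}) rewrites the end-sequent as $\exists u.\,(A \mathbin{\mathrm{at}} u) \mathrel{@} w \Longrightarrow \forall u.\,\bigl((\exists v.\,(A \mathbin{\mathrm{at}} v)) \mathbin{\mathrm{at}} u\bigr) \mathrel{@} w$: the localization prefix and the monoid offsets $u \cdot w$ are dropped, so the inner existential ranges over \emph{all} worlds rather than over worlds reachable from the $\forall$-eigenvariable. After $\exists L^{a}$, $\forall R^{b}$ and the two \emph{at} rules, the right-hand existential is witnessed by the left eigenvariable $a$ itself, and the branch closes by init --- no equation between worlds ever arises, and the argument goes through in any monoid. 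Note that this unfolding is not what Definition~\ref{defn:connectives} literally yields; the literal reading gives precisely your relativized sequent. In effect, your analysis shows that the theorem holds under the paper's absolute reading of $\square$ and $\lozenge$, whereas under the relativized reading it requires inverses on worlds --- consistent with the paper's own remark that capturing S5 calls for group-structured constraint domains. To repair your proof as a proof of the stated theorem you would have to switch to the unfolding the paper uses, at which point the witness choice becomes trivial and the group assumption disappears.
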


\begin{proof}
  See \S\ref{sec:proofs.is5}.
\end{proof}
Obviously \hyll is more expressive as it allows direct manipulation of the worlds using
the hybrid connectives: for example, the $\rhoup$ connective is not definable in S5.

Let us elaborate a little bit on this point, together with the natural concerns of
allowing predicates on worlds in the propositions.

First of all, let us note that allowing predicates on worlds anywhere in the propositions
would yield the following false rule:
\begin{gather*}
  \Ic[]{"\G ; \D, \D' \not\vdash (now u. (u \neq w) tens A) @ w"}
       {"\G ; \D |- (w \neq w) @ w" \quad "\G ; \D' |- A @ w"}
\end{gather*}

It seems that allowing predicates on worlds in the propositions is only possible 
in a restricted form, by adding constrained conjunction and implication 
as done in CILL~\cite{saranli07icra} or $\eta$~\cite{deyoung08csf}. 
Following these works, we could allow the following expressions in the propositions:

\smallskip
\bgroup
\begin{tabular}{l@{\ }r@{\ }l}
  "A, B, ..." & "::=" & " ... OR (! wp)  tens A OR (! wp) -o B " \\ 
\end{tabular}
\egroup

where  "wp " is any predicate on worlds such as "w \neq w' " or "w \le w' ", for example. 

We might have chosen to define a modal connective "at'", instead of "at", with the following 
rules:
   \begin{gather*}
     \I["at' R"]{"\G ; \D ==> (A at' u) @ v"}{"\G ; \D ==> A @ u" \quad "(w \neq u)"}
     \LSP
     \I["at' L"]{"\G ; \D, (A at' u) @ v ==> C @ w"}
                {"\G ; \D, A @ u ==> C @ w" \quad "(w \neq u)"}
   \end{gather*}

Remarks:

1. If worlds are just monoid (not group), 
then S4 can be encoded in \hyll extended with constrained implication as above.

2. If worlds are groups 
(i.e. $...$ "." admits an inverse, 
i.e. $...$ W is a right cumulative magma),
then S5 can be encoded in \hyll with the modal connective "at'" defined above, instead of "at".

3.  If worlds are Kripke frames (i.e. total and symetric) then the relation $\le$ on worlds
can be internalized by a "at R" rule.

As Alex Simpson proved in his PhD thesis the cut elimination theorem for any intuitionistic
modal logic, we can be confident that the cut elimination theorem can be proven for 
\hyll with the modal connective "at'" instead of "at".

\subsection{Temporal constraints}
\label{sec:hyllt}

As a pedagogical example, consider the constraint domain "\mathcal{T} = \langle
\Reals+, +, 0\rangle" representing instants of time. This domain can be used to
define the lifetime of resources, such as keys, sessions, or delegations of
authority. Delay (\defnref{connectives}) in \hyllt represents intervals of time;
"rate d A" means ``"A" will become available after delay "d"'', similar to
metric tense logic~\cite{prior57book}. This domain is very permissive because
addition is commutative, resulting in the equivalence of "rate u rate v A" and
"rate v rate u A".
The ``forward-looking'' connectives "G" and "F" of ordinary tense logic are
precisely "box" and "dia" of \defnref{connectives}. 

In addition to the
future connectives, the domain "\mathcal{T}" also admits past connectives if we add
saturating subtraction (\ie, "a - b = 0" if "b \ge a") to the language of
worlds. We can then define the duals "H" and "P" of "G" and "F" as:
$$ "H~A" \triangleq "now u. all w. (A at u - w)" \qquad
   "P~A" \triangleq "now u. ex w. (A at u - w)"
$$
While this domain does not have any branching structure like CTL, it is
expressive enough for many common idioms because of the branching structure
of derivations involving $\oplus$. CTL reachability (``in some path in some
future''), for instance, is the same as our "dia"; similarly
CTL steadiness (``in some path for all futures'') is the same as $\Box$.
CTL stability (``in all paths in all futures''), however, has no direct correspondance in HyLL.
Note that model checking cannot cope with temporal expresssions involving the 
``in all path'' notion anyway. 
Thus approaches using ordinary temporal logics and model checking, 
like BIOCHAM, for example, cannot deal with those expressions either.

On the other hand, the availability of linear reasoning makes certain kinds
of reasoning in \hyll much more natural than in ordinary temporal
logics. One important example is of \emph{oscillation} between states in
systems with kinetic feedback. In a temporal specification language such as
BIOCHAM~\cite{chabrier05cmsb}, only finite oscillations are representable
using a nested syntax, while in \hyll we use a simple bi-implication; for
example, the oscillation between "A" and "B" with delay "d" is represented
by the rule "!!  (A -o rate d B) with (B -o rate d A)" (or "!! (A -o dia B)
with (B -o dia A)" if the oscillation is aperiodic).
If \hyllt were extended with constrained implication and conjunction in the
style of CILL~\cite{saranli07icra} or $\eta$~\cite{deyoung08csf}, then we can
define localized versions of "box" and "dia", such as ``"A" is true
everywhere/somewhere in an interval''. They would also allow us to define the
``until'' and ``since'' operators of linear temporal logic~\cite{kamp68phd}.

\subsection{Probabilistic Constraints}
\label{sec:hyllp}

The material in this section requires some background in probability and measure theory,
and can be skipped at a first reading, without significant loss of continuity.

Transitions in practice rarely have precise delays. Phenomenological and
experimental evidence is used to construct a probabilistic model of the
transition system where the delays are specified as probability distributions of
continuous variables. 

The meaning of the random variables depends on the intended application.
In the applications in the area of systems biology, 
the variables $X$ can represent the concentration of a product, 
while in economics, $X$ could be the duration of an activity, for example.

\subsection{General Case}

Let us recall some basic definitions in probability theory.
The probability of $X$ being in $A$ is defined by:
$\mu_{X}(A) = \int_{x \in A} \mu_X dx$.
For example:
P${rob}(X \le x)= \mu_{X}(x) = \int_{-\infty}^{x} \mu_X dt$.

\begin{fact}[see \cite{Foata-Fuchs-french-book,Rogers-Williams-vol1-book}]
  \label{fact:convolution}
If $X$ and $Y$ are independent random variables in $\mathbb{R}$, with distribution $\mu_X$ and $\mu_Y$, respectively, then the distribution $\mu_{X+Y}$ of the random variable $X+Y$ is given by
$$
\mu_{X+Y}(A)=\mu_X*\mu_Y(A)=\int_{\{x+y\in A\}}\mu_X(dx)\otimes\mu_Y(dy)
$$
for all Borel
\footnote{The set of Borel events is the set of the Lebesgue measurable functions.}
 subset $A$ of $\mathbb{R}$. 
\end{fact}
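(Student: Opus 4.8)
The plan is to treat this as the standard identification of the law of a sum with a pushforward of the joint law under addition. First I would write, directly from the definition of the distribution of a random variable,
\[
\mu_{X+Y}(A) = P(X + Y \in A) = P\bigl((X,Y) \in s^{-1}(A)\bigr),
\]
where $s : \mathbb{R}^2 \to \mathbb{R}$ is the addition map $s(x,y) = x+y$ and $s^{-1}(A) = \{(x,y) : x+y \in A\}$. Since $s$ is continuous it is Borel measurable, so $s^{-1}(A)$ is a Borel subset of $\mathbb{R}^2$ whenever $A$ is Borel; this guarantees that the event and the integral below are well defined. Writing $\mu_{(X,Y)}$ for the law of the vector $(X,Y)$, the equation above reads $\mu_{X+Y}(A) = \mu_{(X,Y)}(s^{-1}(A))$.

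The key step is to replace the joint law $\mu_{(X,Y)}$ by the product measure $\mu_X \otimes \mu_Y$. This is exactly what the independence hypothesis buys: $X$ and $Y$ independent means $\mu_{(X,Y)}(B \times C) = \mu_X(B)\,\mu_Y(C)$ on measurable rectangles, and since the rectangles form a $\pi$-system generating the product Borel $\sigma$-algebra $\mathcal{B}(\mathbb{R}^2)$, the uniqueness theorem for measures (a monotone-class / Dynkin argument) forces $\mu_{(X,Y)} = \mu_X \otimes \mu_Y$ on all of $\mathcal{B}(\mathbb{R}^2)$. Substituting gives
\[
\mu_{X+Y}(A) = (\mu_X \otimes \mu_Y)\bigl(s^{-1}(A)\bigr) = \int_{\{x+y \in A\}} \mu_X(dx)\otimes \mu_Y(dy),
\]
which is the asserted formula, the right-hand side being by definition the convolution $\mu_X * \mu_Y(A)$.

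If an iterated form is wanted, I would finish by invoking Tonelli's theorem — legitimate here because the integrand is the nonnegative indicator $\mathbf{1}_{s^{-1}(A)}$ — to obtain $\mu_{X+Y}(A) = \int_{\mathbb{R}} \mu_X(A - y)\,\mu_Y(dy)$, exhibiting the familiar translate-and-average shape of convolution. I expect the only genuine content, and hence the main obstacle, to be the passage from independence on rectangles to the global identity $\mu_{(X,Y)} = \mu_X \otimes \mu_Y$; everything else is either definitional or a one-line measurability/Tonelli remark. Since the statement is quoted from standard references, in the paper it can reasonably be cited rather than reproved, with this monotone-class step being the only part a fully self-contained argument would need to spell out.
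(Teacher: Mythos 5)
Your proof is correct. Note that the paper offers no proof of this statement at all: it is labelled a \emph{Fact} and discharged by citation to the standard references (Foata--Fuchs, Rogers--Williams), so there is no in-paper argument to compare yours against. Your derivation --- writing $\mu_{X+Y}$ as the pushforward of the joint law under the addition map $s(x,y)=x+y$, upgrading independence from measurable rectangles to the identity $\mu_{(X,Y)}=\mu_X\otimes\mu_Y$ on all of $\mathcal{B}(\mathbb{R}^2)$ via the $\pi$-system uniqueness theorem, and invoking Tonelli for the iterated form --- is exactly the standard textbook proof those references contain, and you correctly single out the monotone-class step as the only point with genuine mathematical content.
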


The space of probability distributions of random variables, together with the convolution operator $*$ and the Dirac mass at 0 ($\delta_0$)\footnote{
The Dirac mass at 0 $\delta_0$ is not a real function but a generalized function. 
It is just a measure.
},
as neutral element, forms a monoid. More precisely:

\begin{defn}
\label{defn:proba}
  The \emph{probabilities domain} $\mathcal{P}$ is the monoid 
  $\langle \mathbf{M_1}(\mathbb{R}),\ast, \delta_0 \rangle$ where 
  $\mathbf{M_1}(\mathbb{R})$ is the set of the 
  Borel probability measures over $\mathbb{R}$ and $\delta_0$ is Dirac mass at 0.
  The instance HyLL[$\mathcal{P}$] will sometimes be called 
  ``probabilistic hybrid linear logic''.
\end{defn}

An element $w= \mu_X(x)$ of a world $\mathcal{P}$ thus represents the probability of 
$X$ to have its value in the interval $[- \infty, x]$.
$\bar{w} = 1-\mu_X(x)$ represents the probability of $X$ to have its value greater than 
$x$.
"A @ \bar{w}" therefore means `$A$ is true with probability greater than $x$''.

\subsection{Markov Processes}

The standard model of stochastic transition systems is continuous time Markov chains (CTMCs)
where the delays of transitions between states are distributed according to the Markov 
assumption of memorylessness (Markov processes)
with the further condition that their state-space are countable sets 
\cite{Rogers-Williams-vol1-book}.

\begin{fact}[see \cite{Ethier-Kurtz-book,Rogers-Williams-vol1-book}]
  \label{fact:feller}
Given a continuous-time Markov process $(X_t,t\geq 0)$ taking values in a measurable space $(E,{\cal E})$, the family $(P(t),t\geq 0)$ of linear operators on the set of bounded Borel functions ${\cal B}(E)$ defined by: for all $f\in{\cal B}(E)$ and for all $x\in E$,
$$
(P(t)f)(x)  
           = \mathbf{E}[f(X_t)\mid X_0=x],
$$
where ${\mathbf E}$ is the expectation function,
is a semigroup for the convolution: for all $s,t\geq 0$,
$$
P(t+s) = P(t) * P(s)
$$
with neutral element $P(0)$, the identity operator. When the process $X$ is a Feller process (see \cite{Rogers-Williams-vol1-book}, chapter 3, section 2), $(P(t),t\geq 0)$ is a Feller semigroup, i.e. strongly continuous and conservative.
\end{fact}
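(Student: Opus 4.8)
This is a classical fact about time-homogeneous Markov processes, so the plan is to assemble the standard argument rather than to prove anything new; here the operation written $*$ is composition of the transition operators, which (in the independent-increments setting of Fact~\ref{fact:convolution}) acts as convolution on the associated kernels. I would establish the three assertions in turn: the semigroup identity, the neutral element, and the Feller refinement.

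First I would derive the semigroup property from memorylessness together with the tower law for conditional expectation. Fix $f \in \mathcal{B}(E)$, $x \in E$, and $s, t \ge 0$, and condition $f(X_{t+s})$ on the natural filtration $\mathcal{F}_t$. The Markov property, combined with time-homogeneity, gives $\mathbf{E}[f(X_{t+s}) \mid \mathcal{F}_t] = (P(s)f)(X_t)$ almost surely, since the evolution after time $t$ depends on the past only through $X_t$. Taking $\mathbf{E}[\,\cdot \mid X_0 = x]$ of both sides and applying the tower property yields $(P(t+s)f)(x) = \mathbf{E}[(P(s)f)(X_t) \mid X_0 = x] = (P(t)(P(s)f))(x)$, which is exactly the Chapman--Kolmogorov identity $P(t+s) = P(t) * P(s)$.

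For the neutral element I would compute $(P(0)f)(x) = \mathbf{E}[f(X_0) \mid X_0 = x] = f(x)$, so $P(0)$ is the identity operator; linearity and boundedness of each $P(t)$ on $\mathcal{B}(E)$ follow at once from linearity and monotonicity of conditional expectation. This already exhibits $(P(t), t \ge 0)$ as a monoid, which is the structural content the paper needs in order to read transition operators as worlds. The Feller refinement is then essentially definitional once the Feller hypothesis is in place: a Feller process is by definition one whose transition semigroup maps $C_0(E)$ into itself and is strongly continuous there, while conservativity is the statement $P(t)\mathbf{1} = \mathbf{1}$ that the process does not explode; I would simply invoke these from \cite{Rogers-Williams-vol1-book}.

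The hard part --- and the reason this is recorded as a cited \emph{fact} rather than proved in full --- is the measure-theoretic rigour behind the two informal steps above: making the almost-sure Markov identity precise and uniform across the uncountable time index, and establishing strong continuity of $t \mapsto P(t)$. Both require genuine path regularity (right-continuity of trajectories) and the full apparatus of Feller semigroups. I would therefore keep those technicalities delegated to \cite{Rogers-Williams-vol1-book, Ethier-Kurtz-book} and present only the algebraic skeleton above.
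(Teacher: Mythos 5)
Your proposal is correct, and it is consistent with the paper's treatment: the paper records this as a cited fact (to Ethier--Kurtz and Rogers--Williams) with no proof of its own, and your sketch is precisely the classical argument those references contain --- the Markov property plus the tower law giving the Chapman--Kolmogorov identity $P(t+s) = P(t)\,{*}\,P(s)$, the computation $(P(0)f)(x) = \mathbf{E}[f(X_0)\mid X_0 = x] = f(x)$ for the neutral element, and the Feller/conservativity claims read off from the definitions. Your reading of $*$ as composition of transition operators (rather than literal convolution of measures as in the preceding fact about sums of independent variables) is the right interpretation, and delegating the path-regularity and strong-continuity technicalities back to the cited books is exactly what the paper itself does.
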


For a given continuous-time Markov process $(X_t,t\geq 0)$, the associated monoid is the 
set $(P(t),t\geq 0)$ defined above. More precisely, we can define the Markov domain as follows:
  
\begin{defn}
\label{defn:markov}
  For a given continuous-time Markov process $(X_t,t\geq 0)$,  taking values in a 
  measurable space $(E,{\cal E})$, the \emph{Markov domain} $\mathcal{M}$ is
  the monoid $\langle (P(t),t\geq 0), \ast, P(0) \rangle$ where 
  $(P(t),t\geq 0)$ is the sub-markov semigroup of linear operators on the set of bounded
  Borel functions ${\cal B}(E)$ defined by for all $f\in{\cal B}(E)$ and for all $x\in E$,
  $(P(t)f)(x)=\mathbf{E}[f(X_t)\mid X_0=x]$.
  The instance HyLL($\mathcal{M}$) will sometimes be called 
  ``Markov hybrid linear logic''.
\end{defn}

In the above definition, $f$ can be any function in ${\cal B}(E)$, and we have
\begin{gather*}
(P(t)f)(x) = {\mathbf E}[f(X_t)\mid X_0=x].
\end{gather*}
An element $w= P(t)$ of $\mathcal{M}$ represents a function which
associates to any function $f$ (where $f \in{\cal B}(E))$ and to any initial
value $x$ for the variable $X_t$, the expectation of $f(X_t)$, knowing that $X_0=x$.
We can choose $f = {\mathbf 1}_A$: the indicator (i.e. characteristic) function of a set $A$.
In this case (recalling that ${\mathbf E}({\mathbf 1}_A) = P(A)$),
\begin{gather*}
(P(t)f)(x) = (P(t){\mathbf 1}_A)(x)
= {\mathbf E}[{\mathbf 1}_A(X_t)\mid X_0=x]
= P\{X_t \in A \mid X_0=x\}
\end{gather*}
For example, for $A=[- \infty, y]$:
$(P(t) {\mathbf 1}_A)(x) = P \{X_t \le y \mid X_0=x\} = {\mathbf F}_{X_t \mid X_0=x}(y)$
where ${\mathbf F}$ is the cumulative distribution function of $X_t$.
Other interesting examples for $f$ are the square function $\text{sq}(y) = y^2$
and the identity function $\text{id}(y) = y$. Using these functions, we can
define the variance of $X_t$:
\begin{gather*}
(P(t)~ \text{sq})(x) - (P(t)~ \text{id})^2(x) = {\mathbf E}(X_t^2 \mid X_0=x) -
({\mathbf E}(X_t \mid X_0=x))^2
  = \mathbf{Var} (X_t \mid X_0=x)
\end{gather*}
In principle, using suitable functions $f$, we should be able to define
\emph{any descriptors of the probability distribution of our variable $X_t$.}

The meaning of "A @ w" varies depending on the choice for the function $f$. For
example, we have seen that in the case of $f = {\mathbf 1}_{[-\infty,y]}$, ~$P(t)
f (x) = {\mathbf F}_{X_t \mid X_0=x}(y)$. In this case, "A @ w" means ``$A$ is
true with probability less than $y$ at time $t$''. For $\bar{w}= 1 - P(t)$, "A @
\bar{w}" means ``$A$ is true with probability greater than $y$ at time $t$''.

\paragraph{Rates}
The cumulative distributions of the continuous time Markov chains (CTMCs) 
used in \spi are exponential \cite{phillips06tcsb}.
More precisely:
$\text{Prob}(X_t \le x + rt \mid X_0=x) = {\mathbf F}_{X_t \mid X_0=x}(x + rt) = 1 - e^{rt}$,
where $r$ ({\emph rates}) are functions depending on the time $t$.
In this case, we can use $f = {\mathbf 1}_{[-\infty, x+rt]}$.
However, it is simpler to 
work in $\hyll(\mathcal P)$ (\defnref{proba}) 
and define the worlds by particularizing the general case of probabilities domains 
to the case where the cumulative distributions of all variables $X$ are exponential 
with the above meaning.
The worlds $w$ will therefore be defined by
$w = \mu_X(x + rt) = \text{Prob}(X_t \le x + rt \mid X_0=x) 
= {\mathbf F}_{X_t \mid X_0=x}(x + rt) = 1 - e^{rt}$.

\section{Focusing}
\label{sec:focusing}

As \hyll is intended to represent transition systems adequately, it is crucial
that \hyll derivations in the image of an encoding have corresponding
transitions. However, transition systems are generally specified as rewrite
algebras over an underlying congruence relation. These congruences have to be
encoded propositionally in \hyll, so a \hyll derivation will generally require
several inference rules to implement a single transition; moreover, several
trivially different reorderings of these ``micro'' inferences would correspond
to the same transition. It is therefore futile to attempt to define an
operational semantics directly on \hyll inferences.

We restrict the syntax to focused derivations~\cite{andreoli92jlc}, which
ignores many irrelevant rule permutations in a sequent proof and divides the
proof into clear \emph{phases} that define the grain of atomicity. The logical
connectives are divided into two classes, \emph{negative} and \emph{positive},
and rule permutations for connectives of like polarity are confined to
\emph{phases}. A \emph{focused derivation} is one in which the positive and
negative rules are applied in alternate maximal phases in the following way: in
the \emph{active} phase, all negative rules are applied (in irrelevant order)
until no further negative rule can apply; the phase then switches and one
positive proposition is selected for \emph{focus}; this focused proposition is
decomposed under focus (\ie, the focus persists unto its sub-propositions) until
it becomes negative, and the phase switches again.

As noted before, the logical rules of the hybrid connectives "at" and "dn" are
invertible, so they can be considered to have both polarities. It would be valid
to decide a polarity for each occurrence of each hybrid connective
independently; however, as they are mainly intended for book-keeping during
logical reasoning, we define the polarity of these connectives in the following
\emph{parasitic} form: if its immediate subformula is positive (resp. negative)
connective, then it is itself positive (resp. negative). These connectives
therefore become invisible to focusing. This choice of polarity can be seen as a
particular instance of a general scheme that divides the "dn" and "at"
connectives into two polarized forms each. To complete the picture, we also
assign a polarity for the atomic propositions; this restricts the shape of
focusing phases further~\cite{chaudhuri08jar}.
The full syntax of positive ($P, Q, \ldots$) and negative ($M, N, \ldots$)
propositions is as follows:

\medskip
\bgroup 
\hspace{-1.5em}
\begin{tabular}{l@{\ }r@{\ \ }l}
  "P, Q, ..." & "::=" & "p~\vec t OR P tens Q OR one OR P plus Q OR zero OR {! N} OR \fex \alpha P OR now u. P OR (P at w) OR pos N" \\
  "N, M, ..." & "::=" & "n~\vec t OR N with N OR top OR P -o N OR \fall \alpha N OR now u. N OR (N at w) OR neg P"
\end{tabular}
\egroup

\begin{figure}[p]
\hspace{-3em}
\framebox{ 
\begin{minipage}{1.1\textwidth}
  \setlength{\parindent}{0pt}
  \bgroup \small
  \paragraph{Focused logical rules}
  \begin{gather*}
    \I[li]{"\G ; foc{n\ \vec t @ w} ==> pos {n\ \vec t @ w}"}
    \quad
    \I["neg L"]{"\G ; \D ; foc{neg P @ u} ==> Q @ w"}{"\G ; \D ; P @ u ==> . ; Q @ w"}
    \qquad
    \I["with L_i"]{"\G ; \D ; foc{N_1 with N_2 @ u} ==> Q @ w"}{"\G ; \D ; foc{N_i @ u} ==> Q @ w"}
    \\[1ex]
    \I["{-o}L"]{"\G ; \D, \X ; foc{P -o N @ u} ==> Q @ w"}{
      "\G ; \D ==> foc{P @ u}" & "\G ; \X ; foc{N @ u} ==> Q @ w"
    }
    \quad
    \I["\forall L"]{"\G ; \D ; foc {\fall \alpha N @ u} ==> Q @ w"}{
      "\G ; \D ; foc {[\tau / \alpha] N @ u} ==> Q @ w"
    }
    \\[1ex]
    \I["{dn}LF"]{"\G ; \D ; foc{now u. N @ v} ==> Q @ w"}{"\G ; \D ; foc{[v / u] N @ v} ==> Q @ w"}
    \quad
    \I["{at}LF"]{"\G ; \D ; foc{(N at u) @ v} ==> Q @ w"}{"\G ; \D ; foc{N @ u} ==> Q @ w"}
    \\[1ex]
    \I[ri]{"\G ; neg {p\ \vec t} @ w ==> foc {p\ \vec t @ w}"}
    \quad
    \I["pos R"]{"\G ; \D ==> foc{pos N @ w}"}{"\G ; \D ; . ==> N @ w ; ."}
    \qquad
    \I["tens R"]{"\G ; \D, \X ==> foc{P tens Q @ w}"}{
      "\G ; \D ==> foc{P @ w}" & "\G ; \X ==> foc{Q @ w}"
    }
    \\[1ex]
    \I["one R"]{"\G ; . ==> foc {one @ w}"}
    \quad
    \I["plus R_i"]{"\G ; \D ==> foc{P_1 plus P_2 @ w}"}{"\G ; \D ==> foc{P_i @ w}"}
    \quad
    \I["!R"]{"\G ; . ==> foc{{!N} @ w}"}{"\G ; . ; . ==> N @ w ; ."}
    \\[1ex]
    \I["\exists R"]{"\G ; \D ==> foc{\fex \alpha P @ w}"}{"\G ; \D ==> foc{[\tau/\alpha] P @ w}"}
    \quad
    \I["{dn}RF"]{"\G ; \D ==> foc{now u. P @ w}"}{"\G ; \D ==> foc{[w/u] P @ w}"}
    \quad
    \I["at RF"]{"\G ; \D ==> foc{(P at u) @ w}"}{"\G ; \D ==> foc{P @ u}"}
  \end{gather*}

  \paragraph{Active logical rules}

  ("\RR" of the form ". ; Q @ w" or "N @ w ; .", and "\LL" of the form
  "\G ; \D ; \W")
  \begin{gather*}
    \I["tens L"]{"\LL, P tens Q @ u ==> \RR"}{"\LL, P @ u, Q @ u ==> \RR"}
    \quad
    \I["one L"]{"\LL, one @ u ==> \RR"}{"\LL ==> \RR"}
    \quad
    \I["plus L"]{"\LL, P plus Q @ u ==> \RR"}{
      "\LL, P @ u ==> \RR" & "\LL, Q @ u ==> \RR"
    }
    \quad
    \I["zero L"]{"\LL, zero @ u ==> \RR"}
    \\[1ex]
    \I["{dn}LA"]{"\LL, now u. P @ v ==> \RR"}{"\LL, [v/u] P @ v ==> \RR"}
    \quad
    \I["{at}LA"]{"\LL, (P at u) @ v ==> \RR"}{"\LL, P @ u ==> \RR"}
    \quad
    \I["\exists L^\alpha"]{"\LL, \fex \alpha P @ u ==> \RR"}{"\LL, P @ u ==> \RR"}
    \\[1ex]
    \I["{!L}"]{"\G ; \D ; \W, {!N @ u} ==> \RR"}{"\G, N @ u ; \D ; \W ==> \RR"}
    \quad
    \I["pos L"]{"\G ; \D ; \W, {pos N @ w} ==> \RR"}{"\G ; \D, N @ w ; \W ==> \RR"}
    \quad
    \I[lp]{"\G ; \D ; \W, {p\ \vec t} @ w ==> \RR"}{"\G ; \D, neg p\ \vec t ; \W ==> \RR"}
    \\[1ex]
    \I["with R"]{"\LL ==> M with N @ w ; ."}{"\LL ==> M @ w ; ." & "\LL ==> N @ w ; ."}
    \quad
    \I["top R"]{"\LL ==> top @ w ; ."}
    \quad
    \I["{-o}R"]{"\LL ==> P -o N @ w ; ."}{"\LL, P @ w ==> N @ w ; ."}
    \\[1ex]
    \I["{dn}RA"]{"\LL ==> now u. N @ w ; ."}{"\LL ==> [w/u] N @ w ; ."}
    \quad
    \I["{at}RA"]{"\LL ==> (N at u) @ w"}{"\LL ==> N @ u"}
    \quad
    \I["\forall R^\alpha"]{"\LL ==> \fall \alpha N @ u ; ."}{"\LL ==> N @ u ; ."}
    \\[1ex]
    \I["neg R"]{"\LL ==> {neg P @ w} ; ."}{"\LL ==> . ; P @ w"}
    \quad
    \I[rp]{"\LL ==> n\ \vec t @ w ; ."}{"\LL ==> . ; pos n\ \vec t @ w"}
  \end{gather*}

  \paragraph{Focusing decisions}

  ("\LL" of the form "\G ; \D")
  \begin{gather*}
    \I[lf]{"\G ; \D, N @ u ; . ==> . ; Q @ w"}{"\G ; \D ; foc {N @ u} ==> Q @ w" & N \text{ not } "neg p\ \vec t"}
    \quad
    \I[cplf]{"\G, N @ u ; \D ; . ==> . ; Q @ w"}{"\G, N @ u ; \D ; foc {N @ u} ==> Q @ w"}
    \\[1ex]
    \I[rf]{"\G ; \D ; . ==> . ; P @ w"}{"\G ; \D ==> foc {P @ w}" & P \text{ not } "pos n\ \vec t"}
  \end{gather*}
  \egroup
  \end{minipage}}
\caption{Focusing rules for \hyll.}
\label{fig:foc-rules}
\end{figure}

\medskip
\noindent
The two syntactic classes refer to each other via the new connectives "neg" and
"pos". Sequents in the focusing calculus are of the following forms.
\begin{center} \small
  \begin{tabular}{r@{\ }l@{\qquad}r@{\ }l}
    \( \left.
    \begin{array}[c]{l}
      "\G ; \D ; \W ==> . ; P @ w" \\
      "\G ; \D ; \W ==> N @ w ; ."
    \end{array}
    \right\} \) & active &
    \( \left.
    \begin{array}[c]{l}
      "\G ; \D ; foc{N @ u} ==> P @ w" \\
      "\G ; \D ==> foc{P @ w}" \\
    \end{array}
    \right\} \) & focused \\
  \end{tabular}
\end{center}
In each case, "\G" and "\D" contain only negative propositions (\ie, of the form
"N @ u") and "\W" only positive propositions (\ie, of the form "P @ u"). 
The full collection of inference rules are in \figref{foc-rules}.
The sequent form "\G ; \D ; . ==> . ; P @ w" is called a \emph{neutral sequent}; from
such a sequent, a left or right focused sequent is produced with the rules lf,
cplf or rf. Focused logical rules are applied (non-deterministically) and focus
persists unto the subformulas of the focused proposition as long as they are of
the same polarity; when the polarity switches, the result is an active sequent,
where the propositions in the innermost zones are decomposed in an irrelevant
order until once again a neutral sequent results.

Soundness of the focusing calculus with respect to the ordinary sequent calculus
is immediate by simple structural induction. In each case, if we forget the
additional structure in the focused derivations, then we obtain simply an
unfocused proof. We omit the obvious theorem. Completeness, on the other hand,
is a hard result. We omit the proof because focusing is by now well known for
linear logic, with a number of distinct proofs via focused cut-elimination (see
\eg the detailed proof in~\cite{chaudhuri08jar}). The hybrid connectives pose no
problems because they allow all cut-permutations.

\bgroup 
\begin{thm}[focusing completeness]
  Let "\G^-" and "C^- @ w" be negative polarizations of "\G" and "C @ w" (that
  is, adding "neg" and "pos" to make "C" and each proposition in "\G" negative)
  and "\D^+" be a positive polarization of "\D". If "\G ; \D ==> C @ w", then
  ". ; . ; {!  \G^-}, \D^+ ==> C^- @ w ; .".
\end{thm}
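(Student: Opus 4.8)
The plan is to establish focusing completeness by reducing it to a result that the paper itself declares to be standard: focusing completeness for ordinary (non-hybrid) intuitionistic linear logic. The overall strategy is therefore to treat the hybrid layer as a purely bureaucratic overlay on top of the linear connectives, show that it never obstructs any rule permutation, and then invoke the known completeness result. Concretely, I would argue that one can take an arbitrary unfocused \hyll derivation of "\G ; \D ==> C @ w" (which exists by hypothesis) and transform it into a focused derivation of ". ; . ; {! \G^-}, \D^+ ==> C^- @ w ; ." by the usual focused cut-elimination argument, where the induction is driven by the linear-connective structure and the hybrid connectives "at" and "dn" are handled uniformly because they commute with everything.

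The key steps, in order, are as follows. First I would make precise the polarization assignment: given "\G", "\D", and "C", define "\G^-", "\D^+", and "C^-" by inserting the shift connectives "neg" and "pos" so that every hypothesis in "\G" and the conclusion "C" become negative while each hypothesis in "\D" becomes positive, following exactly the parasitic polarity discipline fixed in \secref{focusing} for the hybrid connectives. Second, I would verify that this polarization is faithful: erasing the shifts "neg" and "pos" and forgetting the "@ w" labels recovers the original proposition, so a focused proof of the polarized sequent does indeed witness provability of the original. Third, the core step is a focused cut-elimination argument: one shows that cut is admissible in the focused calculus of \figref{foc-rules} and that every application of an ordinary sequent rule can be simulated by a (possibly cut-using) focused derivation, after which the admissible cuts are eliminated. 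Here the crucial lemma is that each hybrid rule ("at R", "at L", "{dn}R", "{dn}L", together with their focused variants "at RF", "{dn}RF", "{at}LF", "{dn}LF", "{at}LA", "{dn}LA", "{at}RA", "{dn}RA") permutes past every other rule; this is exactly the commutation property already asserted after \thmref{subst} and used in the invertibility theorem, so I would cite it rather than re-prove it.

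The main obstacle is the focused cut-elimination argument itself, which is genuinely intricate in the presence of the exponential "!" and the interaction between the linear context "\D", the unrestricted context "\G", and the intermediate active zone "\W". However, the paper explicitly instructs us to lean on the literature: focusing is by now well understood for linear logic, with several proofs via focused cut-elimination, and the detailed argument in~\cite{chaudhuri08jar} applies essentially verbatim once we know the hybrid connectives are transparent. So rather than grinding through the cut-permutation cases, I would structure the proof as: (i) fix the polarization, (ii) prove the single new fact that the hybrid connectives admit all cut-permutations (by appeal to their free commutation, already established), and (iii) conclude by invoking the known completeness theorem for the underlying linear focusing system, observing that adding "at" and "dn" does not create any new non-permutable critical pairs. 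The delicate point to get right is the bookkeeping at the polarity boundaries---ensuring that the "neg"/"pos" shifts are inserted and consumed at exactly the places where an unfocused proof changes the principal connective's polarity---but this is routine once the faithfulness of the polarization is established.
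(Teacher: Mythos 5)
Your proposal matches the paper's own treatment: the paper omits a detailed proof, justifying the theorem by appeal to the known focused cut-elimination arguments for linear logic (citing~\cite{chaudhuri08jar}) together with the single observation that the hybrid connectives pose no problems because they admit all cut-permutations, which is exactly the structure (ii)--(iii) of your argument. Your additional bookkeeping about polarization faithfulness is consistent with the theorem statement and does not change the approach.
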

\egroup

\section{Encoding the synchronous stochastic $\pi$-calculus}
\label{sec:spi}

In this section, we shall illustrate the use of \hyllp as a logical framework
for constrained transition systems by encoding the syntax and the operational
semantics of the synchronous stochastic $\pi$-calculus (\spi), which extends the
ordinary $\pi$-calculus by assigning to every channel and internal action
an \emph{inherent} rate of synchronization.
In \spi, each rate characterises an exponential distribution\cite{phillips06tcsb}, 
such that the probability of a reaction with rate $r$ 
is given by
$\text{Prob}(X_t \le x + rt \mid X_0=x) = {\mathbf F}_{X_t \mid X_0=x}(x + rt) = 1 - e^{rt}$,
where the {\emph rates} $r$ are functions depending on the time $t$.
We have seen in \secref{hyllp} that, in this case, we can use 
a particular instance of \hyllp,
where the worlds $w$ are $\mu_X(x+rt)$, 
representing the probability of $X$ to have its value less or equal than $x+rt$.
Note that the distributions have the same shape for any variables $X_t$; 
They only depend on {\emph rates} $r(t)$ and time $t$.
We shall use this fact to encode \spi in \hyllp: 
a \spi reaction with rate $r$ will be encoded by a transition 
of probability 
$w = \mu_X(x + rt) = 
\text{Prob}(X_t \le x + rt \mid X_0=x) = {\mathbf F}_{X_t \mid X_0=x}(x + rt) = 1 - e^{rt}$.
In the rest of this section, worlds $w$ of this shape,
defined by a rate $r$, will simply be written $r$ 
(see \defnref{sencoding} and \defnref{iencoding}).

\hyllp can therefore be seen as a formal language for expressing \spi executions (traces). 
For the rest of this
section we shall use "r, s, t, \ldots" instead of "u, v, w, \ldots" to highlight
the fact that the worlds represent (probabilities defined by) rates, 
with the understanding that "." is convolution (fact \ref{fact:convolution}) 
and "rid" is ${\mathop{\Theta}}$. We don't directly use
rates because the syntax and transitions of \spi are given generically for a
$\pi$-calculus with labelled actions, and it is only the interpretation of the
labels that involves probabilities.

We first summarize the
syntax of \spi, which is a minor variant of a number of similar presentations
such as~\cite{phillips06tcsb}. For hygienic reasons we divide entities into the
syntactic categories of \emph{processes} ($\pr P, \pr Q, \ldots$) and
\emph{sums} ($\pr M, \pr N, \ldots$), defined as follows. We also include
environments of recursive definitions ("pr E") for constants.

\smallskip
\bgroup 
\begin{tabular}{l@{\quad}l@{\ }r@{\ }l}
\emph{(Processes)} & "pr P, pr Q, ..." & "::=" & "pr{\nu_r\ P} OR pr{P par Q} OR pr 0 OR pr{X_n\,x_1 \cdots x_n} OR pr M" \\
\emph{(Sums)} & "pr M, pr N, ..." & "::=" & "pr {act{oup{x}(y)} P} OR pr {act{inp{x}} P} OR pr{act{\tau_r} P} OR pr{M + N}" \\
\emph{(Environments)} & "pr E" & "::=" & "pr {E, X_n \triangleq P} OR pr ."
\end{tabular}
\egroup

\smallskip

"pr{P par Q}" is the parallel composition of "pr P" and "pr Q", with unit "pr
0". The restriction "pr{\nu_r\ P}" abstracts over a free channel "x" in the
process "pr{P\,x}". We write the process using higher-order abstract
syntax~\cite{pfenning88pldi}, \ie, "pr{P}" in "pr{\nu_r\ P}" is (syntactically)
a function from channels to processes. This style lets us avoid cumbersome
binding rules in the interactions because we reuse the well-understood binding
structure of the $\lambda$-calculus. A similar approach was taken in the
earliest encoding of (ordinary) $\pi$-calculus in (unfocused) linear
logic~\cite{miller92welp}, and is also present in the encoding in
CLF~\cite{cervesato03tr}.

A sum is a non-empty choice ("+") over terms with \emph{action prefixes}: the
output action "oup{x}(y)" sends "y" along channel "x", the input action "inp{x}"
reads a value from "x" (which is applied to its continuation process), and the
internal action "\tau_r" has no observable I/O behaviour. Replication of
processes happens via guarded recursive definitions~\cite{milner99book};
in~\cite{Regev01psb} it is argued that they are more practical for programming
than the replication operator "!". In a definition "pr{X_n \triangleq P}", "pr
{X_n}" denotes a (higher-order) defined constant of arity "n"; given channels
"x_1, ..., x_n", the process "pr {X_n\,x_1 \cdots x_n}" is synonymous with
"pr{P\,x_1 \cdots x_n}". The constant "pr{X_n}" may occur on the right hand side
of any definition in "pr E", including in its body "pr P", as long as it is
prefixed by an action; this prevents infinite recursion without progress.

Interactions are of the form "pr E |- pr P ->[r] pr Q" denoting a transition
from the process "pr P" to the process "pr Q", in a global environment "pr E",
by performing an action at rate "r". Each channel "x" is associated with an
inherent rate specific to the channel, and internal actions "\tau_r" have rate
"r". The restriction "pr{\nu_r\ P}" defines the rate of the abstracted channel
as "r".

\sdef{crate}{\mathop{\mathrm{rate}}}

\begin{figure*}[tp]
\small
\hspace{-1.8em}
\framebox{ \begin{minipage}{1.05\textwidth}
  \emph{Interactions} \vspace{-1em}

  \begin{gather*}
    \I[\set{SYN}]{"pr{act{oup{x}(y)} P + M par act{inp{x}} Q + M'} ->[crate(x)] pr{P par Q\,y}"}{}
    \SP
    \I[\set{INT}]{"pr{act{\tau_r} P} ->[r] pr P"}{}
    \\
    \I[\set{PAR}]{"pr{P par Q} ->[r] pr{P' par Q}"}{"pr P ->[r] pr{P'}"}
    \SP
    \I[\set{RES}]{"pr{\nu_s\ P} ->[r] pr{\nu_s\ Q}"}{
      \forall x_s. \Bigl("pr{P\,x} ->[r] pr{Q\,x}"\Bigr)
    }
    \SP
    \I[\set{CONG}]{"pr{P'} ->[r] pr{Q'}"}{"pr{P} ->[r] pr{Q}" & "pr{P} == pr{P'}" & "pr{Q} == pr{Q'}"}
  \end{gather*}

  \vspace{-1.5em} \dotfill

  \emph{Congruence} \vspace{-1em}
  \begin{gather*}
    \I{"pr{P par 0} == pr P"}{}
    \LSP
    \I{"pr{P par Q} == pr{Q par P}"}{}
    \LSP
    \I{"pr{P par (Q par R)} == pr{(P par Q) par R}"}{}
    \LSP
    \I{"pr{\nu_r\ 0 == 0}"}{}
    \LSP
    \I{"pr{E} |- pr{X_n\,x_1 \cdots x_n} == pr{P\,x_1 \cdots x_n}"}
      {"pr{X_n \triangleq P} \in pr{E} "}
    \\
    \I{"pr{\nu_r (lam x. \nu_s (lam y. P))} == pr{\nu_s (lam y. \nu_r (lam x. P))}"}{}
    \LSP
    \I{"pr{\nu_r\ P} == pr{\nu_r\ Q}"}{
      \forall x_r.\left("pr{P\,x} == pr{Q\,x}"\right)
    }
    \LSP
    \I{"pr{\nu_r (lam x. P par Q(x))} == pr{P par \nu_r\ Q}"}
    \\
    \I{"pr{P par Q} == pr{P' par Q}"}{"pr P == pr{P'}"}
    \LSP
    \I{"pr{act{oup{x}(m)} P} == pr{act{oup{x}(m)} P'}"}{"pr P == pr {P'}"}
    \LSP
    \I{"pr{act{inp{x}} P} == pr{act{inp{x}} Q}"}{
      \forall n.\ \left("pr{P\,n} == pr{Q\,n}"\right)
    }
    \LSP
    \I{"pr{act{\tau_r} P} == pr{act{\tau_r} P'}"}{
      "pr P == pr {P'}"
    }
    \\
    \I{"pr{M + N} == pr{N + M}"}
    \LSP
    \I{"pr{M + (N + K)} == pr{(M + N) + K}"}
    \LSP
    \I{"pr{M + N} == pr{M' + N}"}
      {"pr M == pr {M'}"}
    \LSP
    \I{"pr{M + N} == pr M"}{"pr M == pr N"}
  \end{gather*}

\end{minipage}}
\caption{Interactions and congruence in \spi. The environment $E$ is elided in most rules.}
\label{fig:spi}
\end{figure*}

The full set of interactions and congruences are in fig.~\ref{fig:spi}. We
generally omit the global environment "pr E" in the rules as it never changes.
It is possible to use the congruences to compute a normal form for processes
that are a parallel composition of sums and each reaction selects two suitable
sums to synchronise on a channel until there are no further reactions possible;
this refinement of the operational semantics is used in "spi" simulators such as
SPiM~\cite{phillips04bc}.

\bgroup 
\begin{defn}[syntax encoding] \mbox{} \label{defn:sencoding}
  \begin{ecom}[1.]
  \item The encoding of the process "pr P" as a positive proposition, written
    "proc P", is as follows ("dt" is a positive atom and \crt a negative atom).
    \begin{align*}
      "proc{P par Q}" &= "proc P tens proc Q"
      & \SP
      "proc {\nu_r\ P}" &= "ex x. ! (rt x at r) tens proc {P\,x}"
      \\
      "proc 0" &= "one"
      &
      "proc{X_n\,x_1 \cdots x_n}" &= "X_n\,x_1 \cdots x_n"
      \\
      "proc M" &= "pos {(dt -o sum M)}"
    \end{align*}
  \item The encoding of the sum "pr M" as a negative proposition, written "sum
    M", is as follows (\cn{out}, \cn{in} and \cn{tau} are positive atoms).
    \begin{align*}
      "sum {M + N}" &= "sum M with sum N"
      &\hspace{-2em}
      "sum {act{oup x(m)} P}" &= "neg (\cout x~m tens proc P)"
      \\
      "sum {act{inp x} P}" &= "all n. neg (\cin x~n tens proc {P\,n})"
      &
      "sum {act{\tau_r} P}" &= "neg (\ctau r tens proc P)"
    \end{align*}
  \item The encoding of the definitions "pr E" as a context, written "env E", is
    as follows.
    \begin{align*}
      "env{E, X_n \triangleq P} " \ & = \ 
         "env{E}, !! \fall{x_1, ..., x_n} X_n\,x_1 \cdots x_n o-o proc{P\,x_1 \cdots x_n} " \\
      "env{ . } " \ & = \  " . " 
    \end{align*}
    where "P o-o Q" is defined as "(P -o neg Q) with (Q -o neg P)".
  \end{ecom}
\end{defn}
\egroup

The encoding of processes is positive, so they will be decomposed in the active
phase when they occur on the left of the sequent arrow, leaving a collection of
sums.  The encoding of restrictions will introduce a fresh unrestricted
assumption about the rate of the restricted channel. Each sum encoded as a
processes undergoes a polarity switch because "-o" is negative; the antecedent
of this implication is a \emph{guard} "dt". This pattern of guarded switching of
polarities prevents unsound congruences such as "pr{act{oup x(m)} act{oup y(n)}
  P} == pr{act{oup y(n)} act{oup x(m)} P}" that do not hold for the synchronous
$\pi$ calculus.\footnote{Note: $"(x -o a tens (x -o b tens c))" \multimap "(x -o
  b tens (x -o a tens c))"$ is not provable in linear logic.}  This guard also
\emph{locks} the sums in the context: the \spi interaction rules \set{INT} and
\set{SYN} discard the non-interacting terms of the sum, so the environment will
contain the requisite number of "dt"s only when an interaction is in progress.
The action prefixes themselves are also synchronous, which causes another
polarity switch. Each action releases a token of its respective kind---\cn{out},
\cn{in} or \cn{tau}---into the context. 
These tokens must be consumed by the interaction before 
the next interaction occurs.
For each action, 
the (encoding of the) continuation process is also released into the context.

The proof of the following congruence lemma is omitted. Because the encoding is
(essentially) a "tens / with" structure, there are no distributive laws in
linear logic that would break the process/sum structure.

\begin{thm}[congruence] 
  \label{thm:congr} \hfill\break
  "pr E |- pr P == pr Q" iff both "env E @ rid ; . ; proc P @ rid ==> . ; proc Q @ rid" 
  and "env E @ rid ; . ; proc Q @ rid ==> . ; proc P @ rid".
\end{thm}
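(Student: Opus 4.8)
The statement is a biconditional, so I would prove the two implications separately; note that structural congruence $\equiv$ is symmetric and the two displayed sequents are exchanged by swapping $P$ and $Q$, so a single uniform construction suffices for each direction. The plan is to obtain the \emph{forward} implication (from $E \vdash P \equiv Q$ to mutual derivability of the two sequents) by induction on the derivation of the congruence in \figref{spi}, and the \emph{backward} implication by an analysis of the possible shapes of the focused derivations, reading a congruence derivation back off the proof.

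For the forward direction I would exhibit, for each congruence axiom and each congruence-closure rule, a matching pair of focused derivations. Reflexivity is discharged by the identity theorem (its focused expansion). The monoid laws of parallel composition with unit $\mathsf{0}$ become exactly the associativity, commutativity and unit laws of $\otimes$ and $\mathbf{1}$ under the encoding $\mathsf{proc}(\cdot)$ of \defnref{sencoding}; the laws of the nondeterministic sum $+$ become the corresponding laws of the additive conjunction under $\mathsf{sum}(\cdot)$, with the absorption $M + N \equiv M$ (for $M \equiv N$) matching its idempotence. Congruence under the prefixes and under $\parallel$ and $+$ is the inductive step, using the structural and substitution properties already established. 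Unfolding of a recursive definition is derived from the unrestricted bi-implication installed in $\mathsf{env}(E)$, reached by the copy rule followed by a single left focus. Finally, transitivity of $\equiv$ is matched by composing the two derivations with the admissible cut principle.

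For the backward direction I would exploit the rigidity that focusing imposes. Since $\mathsf{proc}(P)$ is positive, the active phase on the left is forced and invertible: it parses the $\parallel/\mathsf{0}/\nu$ skeleton of $P$, absorbing units, introducing for each restriction a fresh channel parameter together with an unrestricted rate assumption, and depositing each top-level sum as a \emph{locked} linear hypothesis of the form $(\mathsf{dt} \multimap \mathsf{sum}(M)) @ \mathsf{rid}$. The resulting neutral sequent can be continued only by right-focusing on $\mathsf{proc}(Q)$ or by focusing on a definition; right focus in turn parses the skeleton of $Q$ and, for each of its sums, performs the guarded polarity switch that introduces a token $\mathsf{dt}$ whose only possible consumer is a matching locked sum of $P$. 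Each phase boundary of such a derivation corresponds to exactly one structural congruence of \figref{spi}, and assembling these correspondences for the two given derivations yields a derivation of $E \vdash P \equiv Q$.

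The delicate part is the backward direction. Its combinatorial heart is a key lemma stating that the guard discipline forces a one-to-one matching: every token $\mathsf{dt}$ introduced while building a sum of $Q$ on the right can be discharged only by unlocking a single sum of $P$ on the left, and conversely, so the two derivations can witness mutual derivability only if $P$ and $Q$ carry the same multiset of sums up to congruence over the same $\parallel/\nu$ skeleton. Establishing this requires ruling out spurious focused derivations, in particular showing that no alternative discharge of the guard and no distribution across the $\otimes$/additive-conjunction layering is possible (there is no linear distributivity to break the process/sum structure). The remaining technical care concerns restrictions and recursive definitions: matching the $\exists$-witnesses for $\nu$ up to $\alpha$-equivalence while respecting the freshness side conditions, checking the rate-assumption bookkeeping in the boundary cases such as $\nu_r\,\mathsf{0} \equiv \mathsf{0}$ and scope extrusion, and arguing that each definitional unfolding corresponds to exactly one use of the definitional congruence, with guardedness of recursion keeping the correspondence well-founded.
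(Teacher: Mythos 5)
The paper does not actually prove this theorem: it states it with the remark that ``the proof of the following congruence lemma is omitted,'' offering only the one-line justification that the encoding is essentially a $\otimes$/$\&$ structure and that linear logic has no distributive laws that could break the process/sum structure. So there is no detailed argument to compare yours against; judged on its own terms, your overall plan is a sensible elaboration of exactly that remark: the forward direction by induction on the congruence derivation of Figure~\ref{fig:spi} (with transitivity discharged by cut and definitional unfolding by the bi-implications in $\mathsf{env}(E)$), and the backward direction by a phase-by-phase analysis of focused derivations, which is in the same spirit as the canonicity lemma (Lemma~\ref{lem:canonical}) that the paper does prove for interactions. Two small caveats on the forward direction: reflexivity, symmetry and transitivity are not rules of Figure~\ref{fig:spi}, so you should say you induct over the equivalence-and-congruence closure of its axioms; and since cut admissibility is proved for the \emph{unfocused} calculus, composing two focused derivations requires a detour through focusing soundness, then cut, then the focusing completeness theorem.

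There is, however, one step of your forward direction that genuinely fails, namely the axiom $\nu_r\,0 \equiv 0$, which you set aside as routine ``rate-assumption bookkeeping.'' Because $\equiv$ is symmetric, your induction must produce a focused derivation whose left active zone is $\mathsf{proc}(0) = \mathbf{1}$ and whose right-hand side is $\mathsf{proc}(\nu_r\,0) = \exists x.\, {!}(\mathsf{rt}\,x \mathbin{\mathsf{at}} r) \otimes \mathbf{1}$, all at the identity world. After the forced active step on $\mathbf{1}$ and a right focus choosing any witness $\tau$ for $x$, the $!$-right rule demands a proof of the atom $\mathsf{rt}\,\tau$ at world $r$ from an \emph{empty} linear context and an unrestricted context $\mathsf{env}(E)$ that contains no occurrence of $\mathsf{rt}$ at all; since an atom can only be closed against a matching hypothesis (init/li/ri), no choice of witness makes this derivable, and the same failure occurs in the unfocused calculus, so it is not a focusing artifact. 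In other words, this case cannot be handled by ``care'': restriction is not a unit for the encoding, because dropping $\nu_r$ loses the rate assumption it manufactures. A correct writeup has to surface this as a mismatch in the statement itself --- e.g.\ by dropping or orienting that congruence axiom, by adding to the unrestricted zone a $\mathsf{rates}$ context supplying a channel at every rate (as the canonical sequents of the adequacy section do), or by changing the encoding of restriction --- rather than absorb it into the inductive cases. Your backward direction is unaffected by this (mutual derivability simply never holds for that pair), and its guard-matching lemma looks like the right key step, though it is where nearly all the real work lies.
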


Now we encode the interactions. Because processes were lifted into propositions,
we can be parsimonious with our encoding of interactions by limiting ourselves
to the atomic interactions \set{syn} and \set{int} (below); the \set{par},
\set{res} and \set{cong} interactions will be ambiently implemented by the
logic. Because there are no concurrent interactions---only one interaction can
trigger at a time in a trace---the interaction rules must obey a locking
discipline. We represent this lock as the proposition \cact that is consumed at
the start of an interaction and produced again at the end. This lock also
carries the net rate of the prefix of the trace so far: that is, an interaction
"pr P ->[r] pr Q" will update the lock from "\cact @ s" to "\cact @ {s .
  r}". The encoding of individual atomic interactions must also remove the
\cn{in}, \cn{out} and \cn{tau} tokens introduced in context by the interacting
processes.

\begin{defn}[interaction] \mbox{} \newline
\label{defn:iencoding}
  Let "\cinter \triangleq !! (\cact -o neg \cint with neg \csyn)" where \cact is
  a positive atom and \cint and \csyn are as follows:
  \begin{align*}
    \cint &\triangleq "(dt at rid) tens pos all r. \Bigl((\ctau r at rid) -o rate r neg \cact\Bigr)" \\
    \csyn &\triangleq "(dt tens dt at rid) tens pos all x, r, m. \Bigl((\cout x~m tens \cin x~m at rid) -o pos (rt x at r) -o rate r neg \cact \Bigr)".
   \end{align*}
\end{defn}

\noindent
The number of interactions that are allowed depend on the number of instances of
\cinter in the linear context: each focus on \cinter implements a single
interaction. If we are interested in all finite traces, we will add \cinter to
the unrestricted context so it may be reused as many times as needed.

\subsection{Representational adequacy.}
\label{sec:spi.adq}

Adequacy consists of two components: completeness and soundness. Completeness is
the property that every \spi execution is obtainable as a \hyll derivation using
this encoding, and is the comparatively simpler direction (see
\thmref{completeness}). Soundness is the reverse property, and is false for
unfocused \hyll as such. However, it \emph{does} hold for focused proofs (see
\thmref{adeq}). In both cases, we reason about the following canonical sequents
of \hyll.

\begin{defn} 
  The \emph{canonical context of} "P", written "can P", is given by:
  \begin{align*} 
    "can {X_n\,x_1 \cdots x_n}" &= "neg {X_n\,x_1 \cdots x_n}" &
    "can {P par Q}" &= "can P", "can Q" &
    "can 0" &= "." &
    "can {\nu_r\ P}" &= "can{P\,a}" \\
    "can M" &= "dt -o sum M" &
  \end{align*}
  For "can{\nu_r\ P}", the right hand side uses a \emph{fresh} channel "a" that
  is not free in the rest of the sequent it occurs in.
\end{defn}

\noindent
As an illustration, take "pr P \triangleq pr{act{oup x(a)} Q par act{inp x} R}". We have:
\begin{gather*} 
  "can P" =
  \begin{array}[t]{l}
    "dt -o neg (\cout x~a tens proc Q)", 
    "dt -o all y. neg (\cin x~y tens  proc {R\,y})"
  \end{array}
\end{gather*}
Obviously, the canonical context is what would be emitted to the linear zone at
the end of the active phase if "proc P" were to be present in the left active
zone.

\begin{defn} 
  A neutral sequent is \emph{canonical} iff it has the shape
  \begin{gather*} 
    "env{E}, \cn{rates}, \cinter @ rid ;
     neg \cact @ s, can {P_1 par \cdots par P_k} @ rid ;
     . ==> . ; (proc Q at rid) tens \cact @ t"
  \end{gather*}
  where \cn{rates} contains elements of the form "rt x @ r" defining the
  rate of the channel "x" as "r", and all free channels in "env{E}, can{P_1 par \cdots
    par P_k par Q}" have a single such entry in \cn{rates}.
\end{defn}

\begin{figure*}[tp]
  \centering
  \small
  \begin{multline*}
    \text{Suppose "\LL = \crt x @ r, \cinter @ rid" and "\RR = (proc S
      at rid) tens \cact @ t". (All judgements "{} @ rid" omitted.)}
    \\[1ex]
    \infer={"\LL ; neg \cact @ s, can{act{oup x(a)} Q par act{inp x} R} ; . ==> . ; \RR"}{
    \I[1]{"\LL ;
              neg \cact @ s,
              dt -o neg (\cout x\ a tens proc Q),
              dt -o all y. neg (\cin x~y tens proc {R\,y})
              ; . ==> . ; \RR"}
         {\I[2]{"\LL ; 
                     neg \cact @ s,
                     dt -o neg (\cout x\ a tens proc Q),
                     dt -o all y. neg (\cin x~y tens proc {R\,y})
                     ; foc{\cinter} ==> \RR"
               }
               {\I[3]{\LL\ ;\ 
                      \begin{array}[t]{l}
                        "dt -o neg (\cout x\ a tens proc Q)",
                        "dt -o all y. (\cin x~y tens proc {R\,y})", \\
                        "neg dt, neg dt, all x, r, m. ((\cout x~m tens \cin x~m at rid) -o pos(\crt x at r) -o rate {r} \cact) @ s"
                      \end{array}
                      \begin{array}[t]{l}
                        {} \\ "{} ; . ==> . ; \RR"
                      \end{array}
                     }
                     {\I[4]{\LL\ ; \
                            \begin{array}[t]{l}
                              "neg \cout x\ a", "can Q",
                              "dt -o all y. neg (\cin x~y tens proc {R\,y})", \\
                              "neg dt, all x, r, m. ((\cout x~m tens \cin x~m at rid) -o pos(\crt x at r) -o rate {r} \cact) @ s"
                            \end{array}
                            \begin{array}[t]{l}
                              {} \\ "{} ; . ==> . ; \RR"
                            \end{array}
                           }
                           {\I[5]{\LL\ ;\
                                  \begin{array}[t]{l}
                                    "can Q", "neg \cout x~a", "neg \cin x~a", "can{R\,a}", \\
                                    "all x, r, m. ((\cout x~m tens \cin x~m at rid) -o pos(\crt x at r) -o rate {r} \cact) @ s"
                                  \end{array}
                                  \begin{array}[t]{l}
                                    {} \\ "{} ; . ==> . ; \RR"
                                  \end{array}
                                 }
                                 {"\LL ; can{Q}, can{R\,a}, neg \cact @ {s . r} ; . ==> . ; \RR"}
                           }
                     }
               }
         }
       }\\[-2em]
  \end{multline*}
  \begin{tabular}{l@{:\ }l@{\SP}l@{:\ }l@{\SP}l@{:\ }l}
    \multicolumn{2}{l}{Steps}\\\hline
    1 & focus on "\cinter \in \LL" &
    3 & "dt" for output + full phases &
    5 & cleanup \\
    2 & select \csyn from \cinter, active rules & 
    4 & "dt" for input + full phases
  \end{tabular}
  \caption{Example interaction in the \spi-encoding.}
  \label{fig:example-syn}
\end{figure*}

\figref[Figure]{example-syn} contains an example of a derivation for a canonical
sequent involving "pr P". Focusing on any (encoding of a) sum in "can P @ rid"
will fail because there is no "dt" in the context, so only \cinter can be given
focus; this will consume the \cact and release two copies of "(dt at rid)" and
the continuation into the context. Focusing on the latter will fail now (because
"\cout x~m" and "\cin x~m" (for some "m") are not yet available), so the only
applicable foci are the two sums that can now be ``unlocked'' using the
"dt"s. The output and input can be unlocked in an irrelevant order, producing
two tokens "\cin x~a" and "\cout x~a". Note in particular that the witness "a"
was chosen for the universal quantifier in the encoding of "pr{act{inp{x}}Q}"
because the subsequent consumption of these two tokens requires the messages to
be identical. (Any other choice will not lead to a successful proof.)  After
both tokens are consumed, we get the final form "\cact @ {s . r}", where "r" is
the inherent rate of "x" (found from the \cn{rates} component of the
unrestricted zone). This sequent is canonical and contains "can{Q par R\,a}".

Our encoding therefore represents every \spi action in terms of ``micro''
actions in the following rigid order: one micro action to determine what kind of
action (internal or synchronization), one micro action per sum to select the
term(s) that will interact, and finally one micro action to establish the
contract of the action. Thus we see that focusing is crucial to maintain the
semantic interpretation of (neutral) sequents. In an unfocused calculus, several
of these steps could have partial overlaps, making such a semantic
interpretation inordinately complicated. We do not know of any encoding of the
$\pi$ calculus that can provide such interpretations in unfocused sequents
without changing the underlying logic. In CLF~\cite{cervesato03tr} the logic is
extended with explicit monadic staging, and this enables a form of
adequacy~\cite{cervesato03tr}; however, the encoding is considerably more
complex because processes and sums cannot be fully lifted and must instead be
specified in terms of a lifting computation. Adequacy is then obtained via a
permutative equivalence over the lifting operation. Other encodings of $\pi$
calculi in linear logic, such as~\cite{garg05concur} and~\cite{baelde05stage},
concentrate on the easier asynchronous fragment and lack adequacy proofs anyhow.

\def\sproc{\set{proc}}
\def\ssum{\set{sum}}
\def\sinter{\set{inter}}

\begin{thm}[completeness] \label{thm:completeness} 
  If "pr E |- pr P ->[r] pr Q", then the following canonical sequent is derivable.
  \begin{gather*}
    "env{E}, \cn{rates}, \cinter @ rid ; neg \cact @ s, can P @ rid ; . ==> . ; (proc Q at rid) tens \cact @ {s . r}".
  \end{gather*}
\end{thm}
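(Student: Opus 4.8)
The plan is to proceed by structural induction on the derivation of the interaction "pr E |- pr P ->[r] pr Q", following the five inference rules of \figref{spi}. The two base cases, \set{INT} and \set{SYN}, are discharged by exhibiting explicit focused derivations; the \set{SYN} case is exactly the derivation already displayed in \figref{example-syn}. In both I would first take left-focus on "\cinter" from the unrestricted zone: this matches and consumes the lock "\cact @ s" (pinning the ambient world to "s") and, through the selected branch of its additive conjunction ("\cint" for \set{INT}, "\csyn" for \set{SYN}), deposits the guard token(s) "dt" together with the action contract. Only now does focusing on the encoding of a participating sum succeed---before this point every such focus fails for want of a "dt"---so the sum(s) unlock and emit their action tokens ("\ctau r", or "\cout x~m" and "\cin x~m"). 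For \set{SYN} the universal quantifier in the encoding of the input prefix must be instantiated with the transmitted message so that the two tokens become syntactically identical, and the inherent rate is read from the \cn{rates} entry for the channel; consuming the tokens through the contract then rewrites the lock to "\cact @ {s . r}", exactly the world demanded by the conclusion. Whatever remains in the linear zone is, by construction, "can Q @ rid", which a final right-focusing phase re-folds into "(proc Q at rid)".

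For the inductive rules \set{PAR} and \set{RES} I would apply the induction hypothesis to the immediate sub-transition and carry the rest of the process along as spectator context. The key supporting fact---which I would isolate as a lemma---is the correspondence between the two encodings noted just after the definition of "can": running the active phase on "proc P @ rid" in the left active zone produces precisely the canonical context "can P @ rid", and, dually, "can P @ rid" in the linear zone can be re-folded by right focus into "(proc P at rid)". In \set{PAR}, since "can {P par Q} = can P, can Q" and "proc {P' par Q} = proc {P'} tens proc Q", the hypothesis discharges the "can P" half of the context while the untouched "can Q" is folded into the "proc Q" factor of the right-hand tensor. In \set{RES}, the transition of "pr{\nu_s\ P}" reduces at the fresh channel "a" named in "can {\nu_s\ P} = can{P\,a}" to a transition of the body; I record the rate of "a" as an "rt a @ s" entry of \cn{rates}, apply the hypothesis, and then rebuild "proc {\nu_s\ Q}" on the right from the existential witness "a", the guard "! (rt a at s)", and "proc{Q\,a}".

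The \set{CONG} case carries the real weight. Here "pr P == pr{P'}", "pr{P'} ->[r] pr{Q'}" and "pr{Q'} == pr Q", and I would glue the hypothesis derivation for "pr{P'} ->[r] pr{Q'}" to the two directions of the congruence theorem (\thmref{congr}) by means of the admissible cut rule. Concretely, \thmref{congr} supplies derivations interconverting "proc P" with "proc{P'}" and "proc{Q'}" with "proc Q"; transported through the active/folding correspondence above these also relate the canonical endpoints, and cutting them against the hypothesis rewrites the two ends of the transition from "pr{P'}" and "pr{Q'}" to "pr P" and "pr Q" while leaving the lock and the rate bookkeeping on the world "s . r" untouched.

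I expect the main obstacle to lie in this \set{PAR}/\set{CONG} bookkeeping rather than in any base case. The delicate points are (i) making the folding/unfolding correspondence between "can P" and "proc P" precise enough to thread through the strict alternation of active and focused phases, so that the hypothesis derivation can genuinely be replayed against additional spectator context with only its terminal folding phase extended; and (ii) verifying that the cuts introduced in \set{CONG} preserve the canonical shape of the sequent, in particular that they leave "\cact @ {s . r}" and the \cn{rates} zone intact. The base cases, although notationally heavy, are essentially forced: once focus rests on "\cinter", every focusing decision that does not realise the intended interaction fails for lack of an available "dt" or action token, so there is no genuine branching to control.
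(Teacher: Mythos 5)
Your proposal is correct and follows essentially the same route as the paper's proof: structural induction on the derivation of "pr E |- pr P ->[r] pr Q", with each \spi rule realized as an admissible derived rule on canonical sequents (the \set{SYN}/\set{INT} cases being exactly the focused derivation of \figref{example-syn}) and \set{CONG} discharged by appeal to \thmref{congr}. The paper states this in three sentences; your write-up is a faithful expansion of that sketch, including the folding/unfolding correspondence between "can P" and "proc P" that the paper leaves implicit.
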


\begin{proof}
  By structural induction of the derivation of "pr E |- pr P ->[r] pr Q". Every
  interaction rule of \spi is implementable as an admissible inference rule for
  canonical sequents. For \set{cong}, we appeal to \thmref{congr}.
\end{proof}

Completeness is a testament to the expressivity of the logic -- all executions
of \spi are also expressible in \hyll. However, we also require the opposite
(soundness) direction: that every canonical sequent encodes a possible \spi
trace. The proof hinges on the following canonicity lemma.

\begin{lem}[canonical derivations] \label{lem:canonical} 
  In a derivation for a canonical sequent, the derived inference rules for
  \cinter are of one of the two following forms (conclusions and premises
  canonical).
  \begin{gather*}
    \I{"env{E}, \cn{rates}, \cinter @ rid ; neg \cact @ s, can P @ rid ; . ==> . ; (proc P at rid) tens \cact @ s"}{}
    \\[1ex]
    \I{"env{E}, \cn{rates}, \cinter @ rid ; neg \cact @ s, can P @ rid ; . ==> . ; (proc R at rid) tens \cact @ t"}
      {"env{E}, \cn{rates}, \cinter @ rid ; neg \cact @ {s . r}, can Q @ rid ; . ==> . ; (proc R at rid) tens \cact @ t"}
  \end{gather*}
  where: either "pr E |- pr P ->[r] pr Q", or "pr E |- pr P == pr Q" with "r = rid".
\end{lem}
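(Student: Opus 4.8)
The plan is to prove the lemma by \emph{inversion on the single focusing phase} that a canonical neutral sequent admits, rather than by induction over a whole derivation: the statement characterises one macro-step, and the side remark ``premises canonical'' is discharged by checking that the premise of the binary form is again canonical, so that the lemma can be iterated in the soundness argument. Since a canonical sequent has the neutral shape $\mathrm{env}\,E,\ \mathrm{rates},\ \cinter ;\ \cact\,@\,s,\ \mathrm{can}\,P ;\ .\Longrightarrow .\,;\ (\mathrm{proc}\,Q\mathbin{@}\mathrm{rid})\otimes\cact\,@\,t$, the only rule that can apply is a focusing \emph{decision} ($\mathit{lf}$, $\mathit{cplf}$ or $\mathit{rf}$). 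First I would enumerate the candidate foci and discard the ones that cannot advance the proof: (i) deciding left on an encoded sum $\mathrm{dt}\Rightarrow\mathrm{sum}\,M$ in $\mathrm{can}\,P$ fails immediately, because decomposing that negative implication demands a right-focused proof of the guard $\mathrm{dt}$, and no $\mathrm{dt}$ token occurs in a canonical context (the guards exist only transiently, between the opening and closing of an $\cinter$ focus); (ii) deciding on a $\mathrm{rates}$ entry $\crt\,x\,@\,r$ fails by the initial rule, since the goal is not an $\crt$ atom; and (iii) deciding left on the lock $\cact\,@\,s$ merely re-emits it to the linear zone and reproduces the same canonical sequent, so it is a no-op that may be excluded.

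This leaves exactly three productive decisions, which I map onto the two forms. Right focus ($\mathit{rf}$) on the goal decomposes the tensor, uses \set{at}\,RF to put $\mathrm{proc}\,Q$ under focus at $\mathrm{rid}$, and uses the initial rule on $\cact\,@\,t$, which consumes the lock \emph{only if} $t=s$; the surviving obligation is to prove $\mathrm{proc}\,Q$ from $\mathrm{can}\,P$, which by \thmref{congr} holds precisely when the current process is congruent to the goal. This is the premise-free \textbf{Form~1} (written with the common process $P$ and $t=s$). Deciding (via $\mathit{cplf}$) on an environment definition in $\mathrm{env}\,E$ unfolds a recursive constant through its bi-implication; this is a \spi{} congruence that changes no rate, giving the binary \textbf{Form~2} with $r=\mathrm{rid}$ and $P\equiv Q$. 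The structural congruences (associativity, commutativity and unit of parallel composition, and $\nu$-scope laws) need no macro-step: they are absorbed into the flat multiset structure of $\mathrm{can}(\cdots)$ and the fresh-name treatment of restriction.

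The central case is $\mathit{cplf}$ on $\cinter$, which yields \textbf{Form~2} with a genuine transition, exactly as traced in \figref{example-syn}. I would follow the focus through the definition in \defnref{iencoding}: stripping $!!=\forall u.(\,\cdot\mathbin{@}u)$ forces $u:=s$ (any other choice would demand $\cact$ at the wrong world), the implication $\cact\Rightarrow\cdots$ consumes the lock, and \set{with}\,L selects either $\cint$ (matching \set{INT}) or $\csyn$ (matching \set{SYN}). In the $\cint$ branch one guard $\mathrm{dt}$ and the clause $\forall r.((\ctau\,r)\Rightarrow\rhoup\,r\,\cact)$ are released; the guard can only be consumed by a $\tau$-prefixed sum, whose continuation emits a $\ctau$ token that the clause then reabsorbs, instantiating $r$ to that token's rate and regenerating the lock at world $s\cdot r$ through the $\rhoup$ modality. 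The $\csyn$ branch is analogous with two guards, an output sum and an input sum, where reabsorbing the emitted $\cout\,x\,m$ and $\cin\,x'\,m'$ tokens \emph{forces} $x=x'$ and $m=m'$ (the synchronisation condition) and looks up $r$ from $\crt\,x\,@\,r$ in $\mathrm{rates}$. In both branches the atomic logical interaction acts on the flat context $\mathrm{can}\,P$, so by the ambiently realised closure rules \set{PAR}/\set{RES} and by \set{CONG} it is an instance of the full transition relation $\mathrm{pr}\,E\vdash P\to[r]Q$; one checks that replacing the interacting sums by their continuations turns $\mathrm{can}\,P$ into $\mathrm{can}\,Q$ and leaves the unrestricted zone untouched, so the premise is again canonical.

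I expect the main obstacle to be the \emph{rigidity} argument inside the $\cinter$ focus: one must show that the sub-derivation from focusing on $\cinter$ to the next neutral sequent is uniquely determined (up to the irrelevant order of unlocking the two sums in the \set{SYN} branch), so that it realises \emph{exactly one} atomic interaction and no spurious or partial one. The key sub-claims are that the $\forall$-witnesses ($u$, and $r$, resp.\ $x,r,m$) are pinned down by the world of the lock and by the tokens actually available, and that the guarded polarity switches prevent the focus from returning to a neutral sequent before both the guards and the action tokens have been consumed. This ``no partial interaction'' property is precisely where focusing is indispensable, and it is the step whose careful bottom-up case analysis carries the weight of the proof; the remaining bookkeeping (world arithmetic for the rate update $s\mapsto s\cdot r$ and the canonicity of the resulting context) is routine once rigidity is established.
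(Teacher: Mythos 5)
Your proposal is correct and takes essentially the same route as the paper's (much terser) proof: an exhaustive enumeration of the possible foci from a canonical neutral sequent, in which right focus on the goal succeeds only when the current process is congruent to the goal (yielding the premise-free form, via \thmref{congr}), and the only other productive foci are \cinter---whose focused phase, traced exactly as in \figref{example-syn}, consumes the lock at $s$ and regenerates it at $s \cdot r$, giving the binary form with a genuine transition---and a definition in the unrestricted context, which unfolds a constant without advancing the world (the binary form with $r$ the identity). The only cosmetic differences are that the paper phrases the residual bookkeeping as an induction on the structure of $P$ where you use inversion plus iteration on canonical premises, and that focus on the lock is excluded outright by the side condition of the left decision rule (no focus on negated positive atoms) rather than by your ``harmless no-op'' reading; neither affects the substance.
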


\begin{proof}
  This is a formal statement of the phenomenon observed earlier in the example
  (\figref{example-syn}): "proc R tens \cact" cannot be focused on the right
  unless "pr P == pr R", in which case the derivation ends with no more foci on
  \cinter. If not, the only elements available for focus are \cinter and one of
  the congruence rules "env{E}" in the unrestricted context. In the former case,
  the derived rule consumes the "neg \cact @ s", and by the time \cact is
  produced again, its world has advanced to "s . r". In the latter case, the
  definition of a top level "X_n" in "can P" is (un)folded (without advancing
  the world). 
  The proof proceeds by induction on the structure of $P$.
\end{proof}

\lemref[Lemma]{canonical} is a strong statement about \hyll derivations using
this encoding: every partial derivation using the derived inference rules
represents a prefix of an \spi trace. This is sometimes referred to as
\emph{full adequacy}, to distinguish it from adequacy proofs that require
complete derivations~\cite{nigam08ijcar}. The structure of focused derivations
is crucial because it allows us to close branches early (using init). It is
impossible to perform a similar analysis on unfocused proofs for this encoding;
both the encoding and the framework will need further features to implement a
form of staging~\cite[Chapter 3]{cervesato03tr}.

\begin{cor}[soundness] \label{thm:adeq}  \mbox{} \newline
  If "env{E}, \cn{rates}, \cinter @ rid ; neg \cact @ rid, can P @ rid
  ; .  ==> . ; (proc Q at rid) tens \cact @ r" is derivable, then "pr E |- pr
  P ->[r]{\!}^* pr Q".
\end{cor}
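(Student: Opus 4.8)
The plan is to read the \spi\ trace directly off the focused derivation, using the canonicity lemma (\ref{lem:canonical}) as the engine that peels one interaction from the top of the derivation at a time. First I would \emph{strengthen the statement} so that the induction goes through: I claim that whenever "env{E}, \cn{rates}, \cinter @ rid ; neg \cact @ s, can P @ rid ; . ==> . ; (proc Q at rid) tens \cact @ t" is derivable (for an \emph{arbitrary} incoming lock world "s"), there is a net rate "v" with "s . v = t" and "pr E |- pr P ->[v]{\!}^* pr Q". The corollary is then the instance "s = rid": since "rid" is the unit of ".", we get "v = t = r", which is exactly the claimed "pr E |- pr P ->[r]{\!}^* pr Q".

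I would prove the strengthened claim by induction on the height of the focused derivation, equivalently on the number of foci on "\cinter" it contains. \ref{lem:canonical} tells us that the outermost derived rule touching "\cinter" has one of exactly two shapes, and that nothing else can intervene between canonical sequents. In the \emph{terminal} shape the right-hand side is already "(proc P at rid) tens \cact @ s" with the lock world unchanged, and the branch closes only when the source and target processes are congruent, i.e.\ (by \ref{thm:congr}) when "pr E |- pr P == pr Q"; here I take zero interactions and set "v = rid", which works because "s . rid = s = t" and congruent processes are related by "->[rid]{\!}^*". In the \emph{recursive} shape the derived rule consumes "neg \cact @ s" and regenerates it at "s . r" in a canonical premise for some "pr{P'}", and the side condition of \ref{lem:canonical} supplies either a genuine transition "pr E |- pr P ->[r] pr{P'}" or a congruence "pr E |- pr P == pr{P'}" with "r = rid". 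The induction hypothesis applied to that premise yields "v'" with "(s . r) . v' = t" and "pr E |- pr{P'} ->[v']{\!}^* pr Q"; prepending the extracted step and using associativity of the monoid gives "s . (r . v') = t" and "pr E |- pr P ->[r . v']{\!}^* pr Q", so "v = r . v'" discharges this case (and when "r = rid" the step merely inserts a congruence rewrite into the trace without advancing the net rate).

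Because \ref{lem:canonical} already carries the semantic content---guaranteeing that every derivation of a canonical sequent decomposes cleanly into these two kinds of "\cinter" steps---the corollary itself is essentially a rate-accounting induction, and I expect no deep obstacle at this level. The one place that genuinely needs care is the \emph{monoidal bookkeeping}: the multi-step relation "->[v]{\!}^*" must accumulate rates by "." (the convolution of \ref{fact:convolution}), with the reflexive and congruence cases contributing the unit "rid", so that the unit and associativity laws of the constraint domain line up exactly with the advancing lock world across the induction. The real difficulty lies upstream, in \ref{lem:canonical} itself: it is precisely the focusing discipline that forbids partial overlaps of these micro-steps, and without it no such clean induction on interaction steps---and hence no soundness in this strong, per-trace form---would be available for unfocused \hyll.
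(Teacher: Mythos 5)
Your proposal is correct and is essentially the paper's own argument: the paper proves this corollary with the single line ``Directly from Lemma~\ref{lem:canonical}'', and your induction---peeling off one \cinter{} focus at a time, with the terminal case handled by congruence (Theorem~\ref{thm:congr}) and the recursive case advancing the lock world by one rate---is exactly the unfolding of that appeal, including the necessary generalization from "s = rid" to an arbitrary incoming lock world and the monoidal rate accounting. No gap; you have simply made explicit what the paper leaves implicit.
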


\begin{proof}
  Directly from \lemref{canonical}.
\end{proof}

\subsection{Stochastic correctness with respect to simulation}
\label{sec:simul}

So far the \hyllp encoding of \spi represents any \spi trace symbolically. 
However, not every symbolic trace of an \spi process can be produced according to the
operational semantics of \spi, which is traditionally given by a simulator.
This is the main
difference between \hyll (and \spi) and the approach of CSL~\cite{aziz00tcl},
where the truth of a proposition is evaluated against a CTMC, which is why
equivalence in CSL is identical to CTMC bisimulation~\cite{desharmais03jlap}. In
this section we sketch how the execution could be used directly on the canonical
sequents to produce only correct traces (proofs). The proposal in this section
should be seen by analogy to the execution model of \spi simulators such as
SPiM~\cite{phillips04cmmb}, although we do not use the Gillespie algorithm.

The main problem of simulation is determining which of several competing enabled
actions in a canonical sequent to select as the ``next'' action from the
\emph{race condition} of the actions enabled in the sequent. Because of the
focusing restriction, these enabled actions are easy to compute. Each element of
"can P" is of the form "dt -o sum M", so the enabled actions in that element are
given precisely by the topmost occurrences of "neg" in "sum M". Because none of
the sums can have any restricted channels (they have all been removed in the
active decomposition of the process earlier), the rates of all the channels will
be found in the "\cn{rates}" component of the canonical sequent. 

The effective rate of a channel "x" is related to its inherent rate by scaling
by a factor proportional to the \emph{activity} on the channel, as defined
in~\cite{phillips04cmmb}. Note that this definition is on the \emph{rate
  constants} of exponential distributions, not the rates themselves. The
distribution of the minimum of a list of random variables with exponential
distribution is itself an exponential distribution whose rate constant is the
sum of those of the individual variables. Each individual transition on a
channel is then weighted by the contribution of its rate to this sum. The choice
of the transition to select is just the ordinary logical non-determinism. Note
that the rounds of the algorithm do not have an associated \emph{delay} element
as in~\cite{phillips04cmmb}; instead, we compute (symbolically) a distribution
over the delays of a sequence of actions.

Because stochastic correctness is not necessary for the main adequacy result in
the previous subsection, we leave the details of simulation to future work.

\section{Direct encoding of molecular biology}
\label{sec:bio}

Models of molecular biology have a wealth of examples of transition systems with
temporal and stochastic constraints.  In a biochemical reaction, molecules can
interact to form other molecules or undergo internal changes such as
phosphorylation, and these changes usually occur as parts of networks of
interacting processes with continuous kinetic feedback. \spi has been used in a
number of such models; since we have an adequate encoding of \spi, we can use
these models via the encoding.

However, biological systems can also be encoded directly in \hyll. As an
example, consider a simplified \emph{repressilator} gene network consisting of
two genes, each causing the production of a protein that represses the other
gene by negative feedback. This is a simplification of the three-gene network
constructed in~\cite{elowitz00nature}.  We note that each gene can be in an
``on'' (activated) or an ``off'' (deactivated) state, represented by the unary
predicates \cn{on} and \cn{off}. Molecules of the transcribed proteins are
represented with the unary predicate \cn{prot}. Transitions in the network are
encoded as axioms.

\paragraph{Example: the repressilator, using temporal constraints}

The system consists of the following components:
\begin{icom}
\item \emph{Repression}: Each protein molecule deactivates the next gene in
  the cycle after (average) deactivation delay "d"
  \begin{quote}
    "cn{repress}\,a\,b \ def \ cn{prot}\,a tens cn{on}\,b -o rate d (cn{off}\,b tens cn{prot}\,a)".
  \end{quote}

\item \emph{Reactivation}: When a gene is in the ``off'' state, it eventually
  becomes ``on'' after an average delay of "r":
  \begin{quote}
    "cn{react} \ def \ all a. cn{off}\,a -o rate r cn{on}\,a".
  \end{quote}
  It is precisely this reactivation that causes the system to oscillate instead
  of being bistable.

\item \emph{Synthesis}: When a gene is ``on'', it transcribes RNA for its
  protein taking average delay "t", after which it continues to be ``on'' and a
  molecule of the protein is formed.
  \begin{quote}
    "cn{synt} \ def \ all a. cn{on}\,a -o rate t (cn{on}\,a tens cn{prot}\,a)".
  \end{quote}

\item \emph{Dissipation}: If a protein does not react with a gene, then it
  dissipates after average delay "s":
  \begin{quote}
    "cn{diss} \ def \ all a. cn{prot}\,a -o rate s one".
  \end{quote}

\item \emph{Well defined}: We need to say that a gene cannot be \textrm{on} and 
     \textrm{off} at the same time, that a gene has to be \textrm{on} or \textrm{off}, 
     and that all delays are different:

  \qquad $\cn{well\_def}~ \eqdef~ 
           (\forall x.~ \textrm{on} \, x \otimes \textrm{off} \, x \limp 0 ) \land
           (\forall x.~ \textrm{on} \, x \lor \textrm{off} \, x) \land
           (d \neq r \neq s \neq t) $.
\end{icom}

\noindent
The system consists of a repression cycle for genes \cn{a} and \cn{b}, and the other processes:
\begin{quote}
  "cn{system} \ def \ cn{repress}\,cn{a}\,cn{b}, cn{repress}\,cn{b}\,cn{a}, cn{react}, cn{synt}, cn{diss}," \cn{well\_def}.
\end{quote}

\noindent
Examples of valid sequents are ("0" is the initial instant of time):
\begin{gather*}
  "!! cn{system} @ 0"
  \ ; \ 
  \underbrace{"rate{r + t}cn{on}\,{cn a} @ 0, cn{off}\,{cn b} @ 0"}_{\text{initial state}}
  \ "==>" 
  \underbrace{"rate{r + t + d}cn{off}\,{cn a} tens top @ 0"}_{\text{final state}}
\end{gather*}
From "cn{off}\,cn{b}" we get "cn{on}\,cn{b} tens cn{prot}\,cn{b}" after interval "r + t"; then
"cn{prot}\,cn{b}" together with "cn{on}\,cn{a}" forms "cn{prot}\,cn{b} tens cn{off}\,cn{a}" after a
further delay "d".

Note. In general, delays are functions of the elements involved. Handling this is feasable; However it would require to extend HyLL syntax.

\paragraph{Example: stochastic repressilator}

We now revisit our example but this time using rates.  Note that the encodings
can be very similar in the temporal and probabilistic fragments of our logic; the
only differences being the interpretation of the constraints: Here, $d, t, r$
and $s$ are interpreted as (probabilities defined by) rates.
\begin{align*}
  "cn{repress}~a~b" &\DEF "cn{prot}~a tens cn{on}~b -o rate{d} (cn{off}~b tens cn{prot}~a)" \\
  "cn{synt}" &\DEF "all a. cn{on}~a -o rate{t} (cn{on}~a tens cn{prot}~a)". \\
  "cn{react}" &\DEF "all a. cn{off}~a -o rate{r} cn{on}~a". \\
  "cn{diss}" &\DEF "all a. cn{prot}~a -o rate{s} one". 
\end{align*}
Suppose we want to show that in the two-gene repressilator, the state "cn{on}(a)
tens cn{off}(b)" can oscillate to "cn{off}(a) tens cn{on}(b)". The proof looks
as below, with one sub-proofs named "P", and most of the worlds and a
second sub-proof elided:
\begin{gather*}
  \I[\cn{react}]{"cn{on}~cn{a}, cn{off}~cn{b} ==> ex k. rate{k} (cn{off}~cn{a} tens cn{on}~cn{b})"}
    {\I{"cn{off}~cn{b} ==> cn{off}~cn{b}"}{}
     &
     \I[\cn{synt}]{"cn{on}~cn{a}, rate{r} cn{on}~cn{b} ==> ex k. rate{k} (cn{off}~cn{a} tens cn{on}~cn{b})"}
       {\I{"cn{on}~cn{b} ==> cn{on}~cn{b}"}
        &
        \I{"cn{on}~cn{a}, rate{r} rate{t} (cn{on}~cn{b} tens cn{prot}~cn{b}) ==> ex k. rate{k} (cn{off}~cn{a} tens cn{on}~cn{b})"}
          {\I[$P$]{"cn{on}~cn{a}, rate{r} rate{t} cn{prot}~cn{b} ==> rate{r} rate{t} rate{d} cn{off}~cn{a}"}
           &
           \cdots
          }
       }
    }
  \\[3ex]
  \I[$P\ =$]{} \
  \I[\cn{repress}~\cn{b}~\cn{a}]{"cn{on}~cn{a}, rate{r} rate{t} cn{prot}~cn{b} ==> rate{r} rate{t} rate{d} cn{off}~cn{a}"}
    {\I["tens I"]{"cn{on}~cn{a}, rate{r} rate{t} cn{prot}~cn{b} ==> rate{r} rate{t} (cn{on}~cn{a} tens cn{prot}~cn{b})"}
       {"cn{on}~cn{a} ==> rate{r} rate{t} cn{on}~cn{a}"
        &
        "rate{r} rate{t} cn{prot}~cn{b} ==> rate{r} rate{t} cn{prot}~cn{b}"}
     &
     \I{"rate{r} rate{t} rate{d} cn{off}~cn{a} ==> rate{r} rate{t} rate{d} cn{off}~cn{a}"}
    }
\end{gather*}

In this proof we are using the transition rules at many different worlds. This
is allowed because the rules are prefixed with "!!" and therefore available at
all worlds. Importantly, in the first premise of "P" we need to show that
"cn{on}~cn{a} ==> rate r rate t cn{on}~cn{a}". This is only possible if the rate
of a self-transition on "cn{on}~cn{a}" is "r . t". Of course, this is not
derivable from the rest of the theory (and may not actually be true), so it must
be added as a new rule; it is the contract that must be satisfied by the
repressilator in order for it to oscillate in the desired fashion.

All existing methods for modelling biology have algebraic foundations and 
none treats logic as the primary inferential device. 
In this section, we have sketched a mode of use of \hyll that lets one
represent the biological elements directly in the logic.
Note, however,
that unlike formalisms such as the brane or $\kappa$-calculi, we do not
propose \hyll as a new idealisation of biology. Instead, as far as systems
biology is concerned, our proposal should be seen as a uniform language to
encode biological systems; 
providing genuine means to reason about them is left for future work.

\section{Related work}
\label{sec:related}

Logically, the \hyll sequent calculus is a variant of labelled deduction, a very
broad topic not elaborated on here. The combination of linear logic with
labelled deduction isn't new to this work. In the
$\eta$-logic~\cite{deyoung08csf} the constraint domain is intervals of time, and
the rules of the logic generate constraint inequalities as a side-effect;
however its sole aim is the representation of proof-carrying authentication, and
it does not deal with genericity or focusing. The main feature of $\eta$ not in
\hyll is a separate constraint context that gives new constrained
propositions. \hyll is also related to the Hybrid Logical Framework
(HLF)~\cite{reed06hylo} which captures linear logic itself as a labelled form of
intuitionistic logic. Encoding constrained $\pi$ calculi directly in HLF would
be an interesting exercise: we would combine the encoding of linear logic with
the constraints of the process calculus. Because HLF is a very weak logic with a
proof theory based on natural deduction, it is not clear whether (and in what
forms) an adequacy result in \hyll can be transferred to HLF.


Temporal logics such as CSL and PCTL~\cite{hansson94fac} are popular
for logical reasoning on temporal properties of transition systems with probabilities.
In such logics, truth is defined in terms of correctness with respect to a
constrained forcing relation on the constraint algebra.
In CSL and PCTL
states are formal entities (names) labeled with atomic propositions.
Formulae are interpreted on algebraic structures that are discrete (in PCTL)
or continuous (in CSL) time Markov chains.
Transitions between states are viewed as couples of states labeled with a probability
(the probability of the transition), which is defined as a function
from $S \times S$ into $[0,1]$, where $S$ is the set of states.
While such logics have been very successful in practice with
efficient tools, the proof theory of these logics is very complex. Indeed, such
modal logics generally cannot be formulated in the sequent calculus, and
therefore lack cut-elimination and focusing. In contrast, \hyll has a very
traditional proof theoretic pedigree, but lacks such a close correspondence
between logical and algebraic equivalence. Probably the most well known and
relevant stochastic formalism not already discussed is that of stochastic
Petri-nets~\cite{marsan95book}, which have a number of sophisticated model
checking tools, including the PRISM framework~\cite{kwiatkowska04sttt}. Recent
advances in proof theory suggest that the benefits of model checking can be
obtained without sacrificing proofs and proof search~\cite{baelde07cade}.

\section{Conclusion and future work}
\label{sec:concl}

We have presented \hyll, a hybrid extension of intuitionistic linear logic with
a simple notion of situated truth, a traditional sequent calculus with
cut-elimination and focusing, and a modular and instantiable constraint system
(set of worlds) that can be directly manipulated using hybrid connectives.
We have proposed three instances of \hyll
(i.e three particular instances of the set of worlds):
one modelling temporal constraints and the others modelling probabilistic
or stochastic (continuous time Markov processes) constraints.
We have shown how to obtain representationally adequate encodings of constrained
transition systems, such as the synchronous stochastic $\pi$-calculus in a suitable
instance of \hyll.
We have also presented some preliminary experiments of direct encoding of biological 
systems, viewed as transition systems, in \hyll, using either temporal or probabilistic 
constraints.

Several instantiations of \hyll besides the ones in this paper seem
interesting. For example, we can already use disjunction ("plus") to explain
disjunctive states, but it is also possible to obtain a more extensional
branching by treating the worlds as points in an arbitrary partially-ordered set
instead of a monoid. Another possibility is to consider lists of worlds instead
of individual worlds -- this would allow defining periodic availability of a
resource, such as one being produced by an oscillating process. The most
interesting domain is that of discrete probabilities: here the underlying
semantics is given by discrete time Markov chains instead of CTMCs, which are
often better suited for symbolic evaluation~\cite{wu07qest}.

The logic we have provided so far is a logical framework well suited {\it to represent}
constrained transition systems. The design of a logical framework {\it for}
(i.e. to reason about) constrained transition systems is left for future work
-and might be envisioned by using a two-levels logical framework such as the Abella system.

An important open question is whether a general logic such as \hyll can serve as
a framework for specialized logics such as CSL and PCTL. A related question is
what benefit linearity truly provides for such logics -- linearity is obviously
crucial for encoding process calculi that are inherently stateful, but CSL
requires no such notion of single consumption of resources.

In the $\kappa$-calculus, reactions in a biological system are
modeled as reductions on graphs with certain state annotations.
It appears (though this has not been formalized)
that the $\kappa$-calculus can be embedded in \hyll even more naturally than
\spi, because a solution---a multiset of chemical products---is simply a tensor
of all the internal states of the binding sites together with the formed
bonds. One important innovation of $\kappa$ is the ability to extract
semantically meaningful ``stories'' from simulations. We believe that \hyll
provides a natural formal language for such stories.

We became interested in the problem of encoding stochastic reasoning in a
resource aware logic because we were looking for the logical essence of
biochemical reactions. What we envision for the domain of ``biological
computation'' is a resource-aware stochastic or probabilistic
$\lambda$-calculus that has \hyll propositions as (behavioral) types.
First step in this direction consists in exploiting and polishing 
the logic we have provided; This is the focus of our efforts at the CNRS, I3S.

\paragraph{Acknowledgements}

This work was partially supported by INRIA through the Information Society Technologies
programme of the European Commission, Future and Emerging Technologies under the
IST-2005-015905 MOBIUS project, and by the European TYPES project.
We thank Fran\c{c}ois Fages, Sylvain Soliman, Alessandra Carbone, Vincent Danos
and Jean Krivine for fruitful discussions on various preliminary versions of the
work presented here.
Thanks also go to Nicolas Champagnat and Luc Pronzato who helped us understand 
the algebraic nature of stochastic constraints.

\bgroup \small
\bibliographystyle{plain}
\bibliography{master,hyll-report,markov}

\begin{thebibliography}{10}

\bibitem{andreoli92jlc}
Jean-Marc Andreoli.
\newblock Logic programming with focusing proofs in linear logic.
\newblock {\em J. of Logic and Computation}, 2(3):297--347, 1992.

\bibitem{aziz00tcl}
A.~Aziz, K.~Sanwal, V.~Singhal, and R.~Brayton.
\newblock Model checking continuous time {M}arkov chains.
\newblock {\em ACM Transactions on Computational Logic}, 1(1):162--170, 2000.

\bibitem{baelde05stage}
David Baelde.
\newblock Logique lin\'eaire et alg\`ebre de processus.
\newblock Technical report, INRIA Futurs, LIX and ENS, 2005.

\bibitem{baelde07cade}
David Baelde, Andrew Gacek, Dale Miller, Gopalan Nadathur, and Alwen Tiu.
\newblock The {Bedwyr} system for model checking over syntactic expressions.
\newblock In F.~Pfenning, editor, {\em 21th Conf.\ on Automated Deduction
  (CADE)}, number 4603 in LNAI, pages 391--397, New York, 2007. Springer.

\bibitem{brauener06jal}
Torben Bra{\"u}ner and Valeria de~Paiva.
\newblock Intuitionistic hybrid logic.
\newblock {\em Journal of Applied Logic}, 4:231--255, 2006.

\bibitem{cardelli03bc}
Luca Cardelli.
\newblock Brane calculi.
\newblock In {\em Proceedings of BIO-CONCUR'03}, volume 180. Elsevier ENTCS,
  2003.

\bibitem{cervesato03tr}
Iliano Cervesato, Frank Pfenning, David Walker, and Kevin Watkins.
\newblock A concurrent logical framework {II}: Examples and applications.
\newblock Technical Report CMU-CS-02-102, Carnegie Mellon University, 2003.
\newblock Revised, May 2003.

\bibitem{chabrier05cmsb}
Nathalie Chabrier-Rivier, Fran\c{c}ois Fages, and Sylvain Soliman.
\newblock The biochemical abstract machine {BIOCHAM}.
\newblock In {\em International Workshop on Computational Methods in Systems
  Biology (CMSB-2)}, LNCS. Springer, 2004.

\bibitem{chaudhuri03tr}
Bor-Yuh~Evan Chang, Kaustuv Chaudhuri, and Frank Pfenning.
\newblock A judgmental analysis of linear logic.
\newblock Technical Report CMU-CS-03-131R, Carnegie Mellon University, December
  2003.

\bibitem{chaudhuri08jar}
Kaustuv Chaudhuri, Frank Pfenning, and Greg Price.
\newblock A logical characterization of forward and backward chaining in the
  inverse method.
\newblock {\em J. of Automated Reasoning}, 40(2-3):133--177, March 2008.

\bibitem{danos03bc}
Vincent Danos and Jean Krivine.
\newblock Formal molecular biology done in {CCS}.
\newblock In {\em Proceedings of BIO-CONCUR'03}, volume 180, pages 31--49.
  Elsevier ENTCS, 2003.

\bibitem{danos04tcs}
Vincent Danos and Cosimo Laneve.
\newblock Formal molecular biology.
\newblock {\em Theor. Comput. Sci.}, 325(1):69--110, 2004.

\bibitem{desharmais03jlap}
Jos\'{e}e Desharmais and Prakash Panangaden.
\newblock Continuous stochastic logic characterizes bisimulation of
  continuous-time {M}arkov processes.
\newblock {\em Journal of Logic and Algebraic Programming}, 56:99--115, 2003.

\bibitem{deyoung08csf}
Henry DeYoung, Deepak Garg, and Frank Pfenning.
\newblock An authorization logic with explicit time.
\newblock In {\em Computer Security Foundations Symposium (CSF-21)}, pages
  133--145. IEEE Computer Society, 2008.

\bibitem{elowitz00nature}
Michael~B. Elowitz and Stanislas Leibler.
\newblock A synthetic oscillatory network of transcriptional regulators.
\newblock {\em Nature}, 403(6767):335--338, 20 January 2000.

\bibitem{Emerson95}
E.~Allen Emerson.
\newblock Temporal and modal logic.
\newblock In {\em TCS}, pages 995--1072. Elsevier, 1995.

\bibitem{Ethier-Kurtz-book}
Stewart~N. Ethier and Thomas~G. Kurtz.
\newblock {\em Markov Processes; Characterization and Convergence}.
\newblock Wiley series in Probability and Mathematical Statistics.
  Wiley-interscience, 1986.

\bibitem{Foata-Fuchs-french-book}
Dominique Foata and Aim\'{e} Fuchs.
\newblock {\em Calcul des probabilit\'{e}s, cours exercices et probl\`{e}mes
  corrig\'{e}s}.
\newblock Dunod, 2e edition, 2003.

\bibitem{garg05concur}
Deepak Garg and Frank Pfenning.
\newblock Type-directed concurrency.
\newblock In Mart\'{\i}n Abadi and Luca de~Alfaro, editors, {\em 16th
  International Conference on Concurrency Theory (CONCUR)}, volume 3653 of {\em
  LNCS}, pages 6--20. Springer, 2005.

\bibitem{girard87tcs}
Jean-Yves Girard.
\newblock Linear logic.
\newblock {\em Theoretical Computer Science}, 50:1--102, 1987.

\bibitem{hansson94fac}
H.~Hansson and B.~Jonsson.
\newblock A logic for reasoning about time and probability.
\newblock {\em Formal Aspects of Computing}, (6), 1994.

\bibitem{hillston96book}
Jane Hillston.
\newblock {\em A compositional approach to performance modelling}.
\newblock Cambridge University Press, 1996.

\bibitem{kamp68phd}
Johan Anthony~Willem Kamp.
\newblock {\em Tense Logic and the Theory of Linear Order}.
\newblock PhD thesis, University of California, Los Angeles, 1968.

\bibitem{kwiatkowska04sttt}
M.~Kwiatkowska, G.~Norman, and D.~Parker.
\newblock Probabilistic symbolic model checking using {PRISM}: a hybrid
  approach.
\newblock {\em International Journal of Software Tools for Technology
  Transfer}, 6(2), 2004.

\bibitem{marsan95book}
M.~Ajmone Marsan, G.~Balbo, G.~Conte, S.~Donatelli, and G.~Francenschinis.
\newblock {\em Modelling with Generalised Stochastic {P}etri Nets}.
\newblock Wiley Series in Parallel Computing. Wiley and Sons, 1995.

\bibitem{miller92welp}
Dale Miller.
\newblock The $\pi$-calculus as a theory in linear logic: Preliminary results.
\newblock In E.~Lamma and P.~Mello, editors, {\em 3rd Workshop on Extensions to
  Logic Programming}, number 660 in LNCS, pages 242--265, Bologna, Italy, 1993.
  Springer.

\bibitem{milner99book}
Robin Milner.
\newblock {\em Communicating and Mobile Systems : The $\pi$-Calculus}.
\newblock Cambridge University Press, New York, NY, USA, 1999.

\bibitem{nigam08ijcar}
Vivek Nigam and Dale Miller.
\newblock Focusing in linear meta-logic.
\newblock In {\em Proceedings of IJCAR: International Joint Conference on
  Automated Reasoning}, volume 5195 of {\em LNAI}, pages 507--522. Springer,
  2008.

\bibitem{pfenning88pldi}
Frank Pfenning and Conal Elliott.
\newblock Higher-order abstract syntax.
\newblock In {\em Proceedings of the {ACM}-{SIGPLAN} Conference on Programming
  Language Design and Implementation}, pages 199--208. ACM Press, June 1988.

\bibitem{phillips04cmmb}
Andrew Phillips and Luca Cardelli.
\newblock A correct abstract machine for the stochastic pi-calculus.
\newblock {\em Concurrent Models in Molecular Biology}, August 2004.

\bibitem{phillips04bc}
Andrew Phillips and Luca Cardelli.
\newblock A correct abstract machine for the stochastic pi-calculus.
\newblock In {\em Proceedings of BioConcur'04}, ENTCS, 2004.

\bibitem{phillips06tcsb}
Andrew Phillips, Luca Cardelli, and Giuseppe Castagna.
\newblock A graphical representation for biological processes in the stochastic
  pi-calculus.
\newblock {\em Transactions on Computational Systems Biology VII}, pages
  123--152, 2006.

\bibitem{prior57book}
Arthur~N. Prior.
\newblock {\em Time and Modality}.
\newblock Oxford: Clarendon Press, 1957.

\bibitem{reed06hylo}
Jason Reed.
\newblock Hybridizing a logical framework.
\newblock In {\em International Workshop on Hybrid Logic ({HyLo})}, Seattle,
  USA, August 2006.

\bibitem{regev04tcs}
A.~Regev, E.~M. Panina, W.~Silverman, L.~Cardelli, and E.~Shapiro.
\newblock Bioambients: an abstraction for biological compartments.
\newblock {\em Theoretical Computer Science}, 325(1):141--167, 2004.

\bibitem{Regev01psb}
A.~Regev, W.~Silverman, and E.~Shapiro.
\newblock Representation and simulation of biochemical processes using the
  {$\pi$}-calculus and process algebra.
\newblock In L.~Hunter {R. B. Altman, A. K. Dunker} and T.~E. Klein, editors,
  {\em Pacific Symposium on Biocomputing}, volume~6, pages 459--470, Singapore,
  2001. World Scientific Press.

\bibitem{Rogers-Williams-vol1-book}
L.~C.~G. Rogers and D.~Williams.
\newblock {\em Diffusions, Markov Processes and Martingales}, volume 1:
  Foundations.
\newblock Cambridge Mathematical Library, 2nd edition, 2000.

\bibitem{saranli07icra}
Ulu{\c{c}} Saranli and Frank Pfenning.
\newblock Using constrained intuitionistic linear logic for hybrid robotic
  planning problems.
\newblock In {\em IEEE International Conference on Robotics and Automation
  (ICRA)}, pages 3705--3710. IEEE, 2007.

\bibitem{simpson94phd}
Alex Simpson.
\newblock {\em The Proof Theory and Semantics of Intuitionistic Modal Logic}.
\newblock PhD thesis, University of Edinburgh, 1994.

\bibitem{watkins03tr}
Kevin Watkins, Iliano Cervesato, Frank Pfenning, and David Walker.
\newblock A concurrent logical framework {I}: Judgments and properties.
\newblock Technical Report CMU-CS-02-101, Carnegie Mellon University, 2003.
\newblock Revised, May 2003.

\bibitem{wu07qest}
Peng Wu, Catuscia Palamidessi, and Huimin Lin.
\newblock Symbolic bisimulations for probabilistic systems.
\newblock In {\em QEST'07}, pages 179--188. IEEE Computer Society, 2007.

\end{thebibliography}
\egroup

\clearpage
\appendix

\section{Proofs}
\label{sec:proofs}

\subsection{Identity principle}
\label{sec:proofs.identity}

\begin{thm}[Identity principle] The following rule is derivable.
  \begin{gather*}
    \I[init*]{"\G ; A @ w ==> A @ w"}{}
  \end{gather*}
\end{thm}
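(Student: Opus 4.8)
The plan is to prove the rule by induction on the structure of the proposition "A", thereby expanding the atomic "init" rule into a full identity derivation of "\G ; A @ w ==> A @ w". The induction measure is the number of logical connective and quantifier occurrences in "A"; I note below why this measure, rather than ``literal subformula'', is the right choice. The base case is "A = a\ \vec t", for which the sequent is precisely an instance of "init". For every compound "A", the uniform strategy is to apply the \emph{right} rule for the principal connective of "A" to the succedent, then the matching \emph{left} rule to the hypothesis "A @ w", so that each resulting open branch is an identity sequent on a structurally smaller proposition and is discharged by the induction hypothesis.

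Concretely, the multiplicative cases pair "tens R" with "tens L", "one R" with "one L", and "-o R" with "-o L": for "A -o B", first "-o R" moves the antecedent "A @ w" into the linear zone, then "-o L" with an empty right split leaves the two identities "\G ; A @ w ==> A @ w" and "\G ; B @ w ==> B @ w". The additive cases pair "with R" with the two "with L_i", and "plus L" with the two "plus R_i", while "top R" and "zero L" close their goals outright. For "! A", "!L" moves "A @ w" to the unrestricted zone and "!R" reduces the goal to "\G, A @ w ; . ==> A @ w", which "copy" turns into the induction hypothesis "\G, A @ w ; A @ w ==> A @ w". For the quantifiers, "\forall R^\alpha" introduces a fresh eigenvariable "\alpha" and "\forall L" instantiates the hypothesis at that same "\alpha" (dually "\exists L^\alpha" then "\exists R"), reducing to the identity on the body. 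For the hybrids, "(A at u)" pairs "at R" with "at L" to leave "\G ; A @ u ==> A @ u", and "now u. A" pairs "dn R" with "dn L" to leave the identity on "[w / u] A".

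The one point requiring care---and the only real obstacle---is the well-foundedness of the recursion at the binders "\forall", "\exists" and "now". In each of these cases the right and left rules replace the bound variable by either a fresh eigenvariable or the ambient world, so the induction hypothesis is invoked not on a literal subformula of "A" but on a substitution instance such as "[w / u] A" or "[\tau / \alpha] A". Taking the measure to be the count of logical symbols removes the difficulty, since substituting a world for a world variable (respectively a term for a term variable) introduces no new connectives or quantifiers---worlds and terms are drawn from languages disjoint from the propositional connectives, as enforced by the syntactic separation imposed in \secref{hyll}. With that measure fixed, every case strictly decreases it, and the remaining work is the purely mechanical pairing of each right rule with its dual left rule described above.
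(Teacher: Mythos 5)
Your proof is in substance the paper's own proof: structural induction on $A$, closing each case by pairing the right and left rules for the principal connective and discharging the resulting smaller identity sequents by the induction hypothesis; your treatment of $\multimap$, $\with$-style additives, $!$ (via $!L$, $!R$, then copy), the quantifiers, and the two hybrid connectives matches the paper's case analysis exactly. Your explicit well-foundedness discussion is actually a point in your favour: the paper silently applies its ``structural'' induction hypothesis to substitution instances such as $[w/u]B$, and your connective-counting measure is the standard way to make that legitimate.

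There is, however, one step that fails as literally stated: your ``uniform strategy'' of applying the \emph{right} rule first and then the matching left rule is backwards for the positive connectives. For $A = B \otimes C$, applying $\otimes R$ first (reading bottom-up) must split the singleton linear context $\{B \otimes C \mathbin{@} w\}$, leaving a premise of the form $\Gamma ; \cdot \Longrightarrow C \mathbin{@} w$, which is not provable in general; similarly $\mathbf{1}R$ first is blocked because the linear context is nonempty, $\oplus R_i$ first commits to a disjunct before the hypothesis has been case-split, and $\exists R$ first must choose a witness before the eigenvariable has been introduced. In all these cases the invertible \emph{left} rule must come first, which is exactly what the paper does ($\otimes L$ before $\otimes R$, $\mathbf{1}L$ before $\mathbf{1}R$, $\oplus L$ before $\oplus R_i$, $\exists L$ before $\exists R$). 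Notice that your own concrete case descriptions already deviate, correctly, from your stated strategy for $\oplus$, $\exists$, and $!$; but for $\otimes$ and $\mathbf{1}$ the order is left to the faulty general principle, and there the derivation as described cannot be completed. The repair is purely organizational: state the strategy as ``apply first whichever of the two rules is invertible'' (left rules for positive connectives, right rules for negative ones), and every case then goes through as you describe.
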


\begin{proof}
  By induction on the structure of "A". We have the following cases.
  \begin{ecom} [{\itshape {case}}] 
  \item "A" is an atom "p~\vec t". Then, "\G ; p~\vec t @ w ==> p~\vec t @ w" by init.
  \item "A" is "B with C".
    \begin{gather*}
      \I["with R"]{"\G ; B with C @ w ==> B with C @ w"}
        {\I["with L_1"]{"\G ; B with C @ w ==> B @ w"}
           {\I[i.h.]{"\G ; B @ w ==> B @ w"}{}}
         &
         \I["with L_2"]{"\G ; B with C @ w ==> C @ w"}
           {\I[i.h.]{"\G ; C @ w ==> C @ w"}{}}
        }
    \end{gather*}
  \item "A" is "top".
    \begin{gather*}
      \I["top R"]{"\G ; top @ w ==> top @ w"}{}
    \end{gather*}
  \item "A" is "B plus C".
    \begin{gather*}
      \I["plus L"]{"\G ; B plus C @ w ==> B plus C @ w"}
        {\I["plus R_1"]{"\G ; B @ w ==> B plus C @ w"}
           {\I[i.h.]{"\G ; B @ w ==> B @ w"}{}}
         &
         \I["plus R_2"]{"\G ; C @ w ==> B plus C @ w"}
           {\I[i.h.]{"\G ; C @ w ==> C @ w"}{}}
        }
    \end{gather*}
  \item "A" is "zero".
    \begin{gather*}
      \I["zero L"]{"\G ; zero @ w ==> zero @ w"}{}
    \end{gather*}
  \item "A" is "B -o C".
    \begin{gather*}
      \I["{-o} R"]{"\G ; B -o C @ w ==> B -o C @ w"}
        {\I["{-o}L"]{"\G ; B -o C @ w, B @ w ==> C @ w"}
           {\I[i.h.]{"\G ; B @ w ==> B @ w"}{}
            &
            \I[i.h.]{"\G ; C @ w ==> C @ w"}{}
           }
        }
    \end{gather*}
  \item "A" is "B tens C".
    \begin{gather*}
      \I["tens L"]{"\G ; B tens C @ w ==> B tens C @ w"}
        {\I["tens R"]{"\G ; B @ w, C @ w ==> B tens C @ w"}
           {\I[i.h.]{"\G ; B @ w ==> B @ w"}{}
            &
            \I[i.h.]{"\G ; C @ w ==> C @ w"}{}
           }
        }
    \end{gather*}
  \item "A" is "one".
    \begin{gather*}
      \I["one L"]{"\G ; one @ w ==> one @ w"}
        {\I["one R"]{"\G ; . ==> one @ w"}{}}
    \end{gather*}
  \item "A" is "all x. B".
    \begin{gather*}
      \I["\forall R^\alpha"]{"\G ; \fall \alpha B @ w ==> \fall \alpha B @ w"}
        {\I["\forall L"]{"\G ; \fall \alpha B @ w ==> B @ w"}
           {\I[i.h.]{"\G ; B @ w ==> B @ w"}}
        }
    \end{gather*}
  \item "A" is "ex x. B".
    \begin{gather*}
      \I["\exists L^\alpha"]{"\G ; \fex \alpha B @ w ==> \fex \alpha B @ w"}
        {\I["\exists R"]{"\G ; B @ w ==> \fex \alpha B @ w"}
           {\I[i.h.]{"\G ; B @ w ==> B @ w"}}
        }
    \end{gather*}
  \item "A" is "! B".
    \begin{gather*}
      \I["!L"]{"\G ; {! B} @ w ==> {! B} @ w"}
        {\I["!R"]{"\G, B @ w ; . ==> {! B} @ w"}
           {\I[copy]{"\G, B @ w ; . ==> B @ w"}
              {\I[i.h.]{"\G, B @ w ; B @ w ==> B @ w"}{}}
           }
        }
    \end{gather*}
  \item "A" is "now u. B".
    \begin{gather*}
      \I["{dn} R"]{"\G ; now u. B @ w ==> now u. B @ w"}
        {\I["{dn} L"]{"\G ; now u. B @ w ==> [w/u] B @ w"}
           {\I[i.h.]{"\G ; [w/u] B @ w ==> [w/u] B @ w"}{}}
        }
    \end{gather*}
  \item "A" is "(B at v)".
    \begin{gather*}
      \I["at R"]{"\G ; (B at v) @ w ==> (B at v) @ w"}
        {\I["at L"]{"\G ; (B at v) @ w ==> B @ v"}
           {\I[i.h.]{"\G ; B @ v ==> B @ v"}{}}}
    \end{gather*}
  \end{ecom}
\end{proof}

\subsection{Cut admissibility}
\label{sec:proofs.cut}

\begin{thm}[Cut admissibility] The following two rules are admissible.
  \begin{gather*}
    \I[cut]{"\G ; \D, \D' ==> C @ {w'}"}
      {"\G ; \D ==> A @ w" & "\G ; \D', A @ w ==> C @ {w'}"}
    \\
    \I[cut!]{"\G ; \D ==> C @ {w'}"}
      {"\G ; . ==> A @ w" & "\G, A @ w ; \D ==> C @ {w'}"}
  \end{gather*}
\end{thm}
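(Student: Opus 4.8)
The plan is to establish both cut principles \emph{simultaneously} by structural cut-elimination, i.e.\ by a single well-founded induction. The measure is lexicographic with three components: first the structure of the cut formula $A$; then the \emph{kind} of the cut, with kind~2 ranked strictly above kind~1; and finally the pair of the two given derivations (left premise and right premise), compared by the immediate-subderivation order. Ranking kind~2 above kind~1 at a fixed formula is exactly the proviso in the statement: it lets a kind-2 cut be reduced to a kind-1 cut on the \emph{same} formula (needed in the \emph{copy} case below), while the dual reduction of a kind-1 cut on $!A$ to a kind-2 cut on the subformula $A$ is licensed by the already-decreased primary component. I would then organize the case analysis by the last rule of each derivation into three groups: (i) the \emph{axiom} cases, where one of the derivations is \emph{init} on the cut formula, in which case the cut collapses to the other derivation; (ii) the \emph{principal} cases, where the cut formula is the side formula just introduced by a right rule in the first derivation and by a left rule in the second; and (iii) the \emph{commutative} cases, where the cut formula is a spectator of the last rule in one of the two derivations.

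For the principal cases I would, in each instance, replace the cut by one or more cuts on the immediate subformulas of $A$, each justified by the primary induction hypothesis because the cut formula strictly shrinks. Thus a cut on $A_1 \otimes A_2$ becomes two cuts on $A_1$ and $A_2$; a cut on $A_1 \multimap A_2$ and the additive cases follow the familiar linear-logic reductions; a cut on a quantified formula becomes a cut on the instantiated body $[\tau/\alpha]A$, using a routine substitution lemma to instantiate the eigenvariable of the right-rule premise; and the exponential principal case reduces a kind-1 cut on $!A$ to a kind-2 cut on $A$ (the place where admissibility of a kind-1 cut is obtained by appeal to a kind-2 cut, with the cut formula strictly smaller). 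The two hybrid connectives are handled uniformly: a principal cut on the satisfaction formula $(A \mathbin{\mathrm{at}} u)$ reduces to a cut on $A$ located at $u$, and a principal cut on the localization formula $\downarrow\! u.\,A$ reduces to a cut on $[w/u]A$ located at $w$. In both hybrid cases the connective is stripped, so the cut formula again strictly decreases.

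For the commutative cases I would permute the cut above the last inference of whichever derivation does not act on the cut formula, producing cuts on the \emph{same} formula against a structurally smaller premise, so that the secondary (derivation) induction hypothesis applies. The delicate instance is the kind-2 cut whose right derivation ends in \emph{copy} of the unrestricted cut formula: here I would first apply the kind-2 induction hypothesis to the copy-premise --- legitimate since it is an immediate subderivation --- obtaining a residual \emph{linear} occurrence of $A$, and then discharge that occurrence by a kind-1 cut on the same formula $A$. This last step does not decrease the cut formula, which is precisely why the measure must rank the kind-2 principle above kind-1 at a fixed formula.

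I expect the main obstacle to be checking that this lexicographic measure is genuinely well-founded in the presence of the two crossing appeals --- the $!A$ reduction sending a kind-1 cut to a kind-2 cut on a subformula, and the \emph{copy} commutation sending a kind-2 cut to a kind-1 cut on the same formula --- together with the requirement, in every quantifier and every localization reduction, that the substitution not enlarge the cut formula. This last point is what legitimizes the primary component at all, and it relies on a feature of \hyll specifically: since world variables may not occur in terms and term variables may not occur in worlds, neither the term substitution $[\tau/\alpha](\cdot)$ nor the world substitution $[w/u](\cdot)$ ever introduces a new logical connective, so passing from a binder to its instantiated body strictly reduces the measure. Freshness side-conditions on $\forall R$ and $\exists L$ are maintained throughout by $\alpha$-renaming.
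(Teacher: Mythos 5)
Your proposal is correct and takes essentially the same route as the paper: structural cut-elimination by a single lexicographic induction on (cut formula, kind of cut, the two derivations) --- with the linear cut ranked strictly below cut!, which is indeed the direction the copy case requires --- and the same partition into initial, principal, copy, and commutative cases, with the exponential case trading a kind-1 cut on $!A$ for a kind-2 cut on $A$ and the hybrid connectives handled by stripping a connective. Two points in your write-up are in fact slightly tighter than the paper's appendix: your copy-case reduction performs its two cuts in the order the measure actually licenses (first the kind-2 inductive call on the copy premise, then a kind-1 cut on the same formula to discharge the residual linear copy), whereas the appendix performs them in the reverse order, leaving its final cut! appeal with a second premise that is not a subderivation; and your observation that term and world substitutions introduce no logical connectives --- so the formula component of the measure is substitution-invariant --- makes explicit a fact that the paper's quantifier and localization cases use silently.
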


\begin{proof}
  Name the two premise derivations 
  "\DD" and "\EE" respectively. The proof proceeds by induction on
  the structure of the derivations "\DD" and "\EE", and more precisely on
  a lexicographic order that allows the induction hypothesis to be used whenever:
  \begin{ecom}
  \item The cut formula becomes strictly smaller (in the subformula relation), or
  \item The cut formula remains the same, but an instance of cut is used to justify an instance of cut!.
  \item The cut formula remains the same, but the derivation "\DD" is strictly smaller, or
  \item The cut formula remains the same, but the derivation "\EE" is strictly smaller, or
  \end{ecom}
  In each case, we consider derivations to be identical that differ in such a way that one can be
  derived from the other simply by weakening and contracting the unrestricted contexts of their
  respective sequents. The lexicographic order is well-founded because the given derivations "\DD"
  and "\EE" are finite, and cut! is used at most once per subformula of "A" (see ``copy cuts''
  below). All the cuts break down into the following four major categories.

  \paragraph{Atomic cuts} where the formula "A" is an atom "p~(\vec t)". We have the following two
  cases;
  \begin{ecom}  [{\itshape {Case}.}]  
  \item "\DD" is:
    \begin{gather*}
      \I[init]{"\G ; p~(\vec t) @ w ==> p~(\vec t) @ w"}{}
    \end{gather*}
    Then the result of the cut has the same conclusion as that of "\EE".

  \item "\EE" is
    \begin{gather*}
      \I[init]{"\G ; p~(\vec t) @ w ==> p~(\vec t) @ w"}{}
    \end{gather*}
    Then the result of the cut has the same conclusion as that of "\DD".
  \end{ecom}

  \paragraph{Principal cuts} where a non-atomic cut formula "A" is introduced by a final right rule
  in "\DD" and a final left-rule in "\EE". We have the following cases.
  \begin{ecom} [{\itshape {Case}.}]  
  \item "A" is "A_1 with A_2", and:
    \begin{gather*}
      \DD =
      \Ic["with R"]{"\G ; \D ==> A_1 with A_2 @ w"}
         {"\DD_1 :: \G ; \D ==> A_1 @ w" & "\DD_2 :: \G ; \D ==> A_2 @ w"}
      \SP
      \EE =
      \Ic["with L_i"]{"\G ; \D', A_1 with A_2 @ w ==> C @ {w'}"}
         {"\EE' :: \G ; \D', A_i @ w ==> C @ {w'}"}
    \end{gather*}
    Then:
    \begin{quote}
      "\G ; \D, \D' ==> C @ {w'}" \by cut on "\DD_i" and "\EE'".
    \end{quote}
  \item "A" is "A_1 plus A_2", and:
    \begin{gather*}
      \lift{\DD =\ }
      \I["plus R_i"]{"\G ; \D ==> A_1 plus A_2 @ w"}
         {"\DD' :: \G ; \D ==> A_i @ w"}
      \SP
      \lift{\EE =\ }
      \I["plus L"]{"\G ; \D', A_1 plus A_2 @ w ==> C @ {w'}"}
         {\begin{array}[b]{c}
             "\EE_1 :: \G ; \D', A_1 @ w ==> C @ {w'}" \\
             "\EE_2 :: \G ; \D', A_2 @ w ==> C @ {w'}"
          \end{array}}
    \end{gather*}
    Then:
    \begin{quote}
      "\G ; \D, \D' ==> C @ {w'}" \by cut on "\DD'" and "\EE_i".
    \end{quote}
  \item "A" is "A_1 -o A_2", and:
    \begin{gather*}
      \DD =
      \Ic["{-o} R"]{"\G ; \D ==> A_1 -o A_2 @ w"}
         {"\DD' :: \G ; \D, A_1 @ w ==> A_2 @ w"}
      \SP
      \EE =
      \Ic["{-o}L"]{"\G ; \D'_1, \D'_2, A_1 -o A_2 ==> C @ {w'}"}
         {"\EE_1 :: \G ; \D'_1 ==> A_1 @ w"
          &
          "\EE_2 :: \G ; \D'_2, A_2 @ w ==> C @ {w'}"}
    \end{gather*}
    Then:
    \begin{quote}
      "\G ; \D, A_1 @ w, \D'_2 ==> C @ {w'}" \by cut on "\DD'" and "\EE_2". \\
      "\G ; \D, \D'_1, \D'_2 ==> C @ {w'}" \by cut on "\EE_1" and above.
    \end{quote}

  \item "A" is "A_1 tens A_2", and:
    \begin{gather*}
      \DD =
      \Ic["tens R"]{"\G ; \D_1, \D_2 ==> A_1 tens A_2 @ w"}
         {"\DD_1 :: \G ; \D_1 ==> A_1 @ w" & "\DD_2 :: \G ; \D_2 ==> A_2 @ w"}
      \SP
      \EE =
      \Ic["tens L"]{"\G ; \D', A_1 tens A_2 @ w ==> C @ {w'}"}
         {"\EE' :: \G ; \D', A_1 @ w, A_2 @ w ==> C @ {w'}"}
    \end{gather*}
    Then:
    \begin{quote}
      "\G ; \D', \D_2, A_1 @ w ==> C @ {w'}" \by cut on "\DD_2" and "\EE'".\\
      "\G ; \D', \D_1, \D_2 ==> C @ {w'}" \by cut on "\DD_1" and above.
    \end{quote}
  \item "A" is "one", and:
    \begin{gather*}
      \DD =
      \Ic["one R"]{"\G ; . ==> one @ w"}
      \SP
      \EE =
      \Ic["one L"]{"\G ; \D', one @ w ==> C @ {w'}"}
         {"\EE' :: \G ; \D' ==> C @ {w'}"}
    \end{gather*}
    The result of the cut is the conclusion of "\EE'".

  \item "A" is "all x. B", and:
    \begin{gather*}
      \DD =
      \Ic["\forall R^\alpha"]{"\G ; \D ==> \fall \alpha B @ w"}
         {\DD'(\alpha) :: "\G ; \D ==> B @ w"}
      \SP
      \EE =
      \Ic["\forall L"]{"\G ; \D', \fall \alpha B @ w ==> C @ {w'}"}
         {\EE' :: "\G ; \D', [\tau / \alpha] B @ w ==> C @ {w'}"}
    \end{gather*}
    Let "a" be any parameter. Then:
    \begin{quote}
      "\G ; \D, \D' ==> C @ {w'}" \by cut on "\DD'(\tau)" and "\EE'".
    \end{quote}

  \item "A" is "ex x. B", and:
    \begin{gather*}
      \DD =
      \Ic["\exists R"]{"\G ; \D ==> \fex \alpha B @ w"}
         {\DD' :: "\G ; \D ==> [\tau / \alpha] B @ w"}
      \SP
      \EE =
      \Ic["\exists L^\alpha"]{"\G ; \D', \fex \alpha B @ w ==> C @ {w'}"}
         {\EE'(\alpha) :: "\G ; \D', B @ w ==> C @ {w'}"}
    \end{gather*}
    Let "a" be any parameter. Then:
    \begin{quote}
      "\G ; \D, \D' ==> C @ {w'}" \by cut on "\DD'" and "\EE'(\alpha)".
    \end{quote}

  \item "A" is "! B", and:
    \begin{gather*}
      \DD =
      \Ic["!R"]{"\G ; . ==> {! B} @ w"}
         {\DD' :: "\G ; . ==> B @ w"}
      \SP
      \EE =
      \Ic["!L"]{"\G ; \D', {! B} @ w ==> C @ {w'}"}
         {\EE' :: "\G, B @ w ; \D' ==> C @ {w'}"}
    \end{gather*}
    Then:
    \begin{quote}
      "\G ; \D' ==> C @ {w'}" \by cut! on "\DD'" and "\EE'".
    \end{quote}

  \item "A" is "now u. B", and:
    \begin{gather*}
      \DD =
      \Ic["{dn} R"]{"\G ; \D ==> now u. B @ w"}
         {"\DD' :: \G ; \D ==> [w/u] B @ w"}
      \SP
      \EE =
      \Ic["{dn} L"]{"\G ; \D', now u. B @ w ==> C @ {w'}"}
         {"\EE' :: \G ; \D', [w/u] B @ w ==> C @ {w'}"}
    \end{gather*}
    Then:
    \begin{quote}
      "\G ; \D, \D' ==> C @ {w'}" \by cut on "\DD'" and "\EE'".
    \end{quote}

  \item "A" is "(B at v)", and:
    \begin{gather*}
      \DD =
      \Ic["at R"]{"\G ; \D ==> (B at v) @ w"}
         {"\DD' :: \G ; \D ==> B @ v"}
      \SP
      \EE =
      \Ic["at L"]{"\G ; \D', (B at v) @ w ==> C @ {w'}"}
         {"\EE' :: \G ; \D', B @ v ==> C @ {w'}"}
    \end{gather*}
    Then:
    \begin{quote}
      "\G ; \D, \D' ==> C @ {w'}" \by cut on "\DD'" and "\EE'".
    \end{quote}
  \end{ecom}

  \paragraph{Copy cuts} where the cut formula in "\EE" was transferred using copy, i.e.:
  \begin{gather*}
    \DD :: "\G ; . ==> A @ w"
    \SP
    \EE =
    \Ic[copy]{"\G, A @ w ; \D' ==> C @ {w'}"}
       {"\EE' :: \G, A @ w ; \D', A @ w ==> C @ {w'}"}
  \end{gather*}
  Here,
  \begin{quote}
    "\G, A @ w ; . ==> A @ w" \by weakening on "\DD".\\
    "\G, A @ w ; \D' ==> C @ {w'}" \by cut on "\DD" and "\EE'".\\
    "\G ; \D' ==> C @ {w'}" \by cut! on "\DD" and above.
  \end{quote}
  The first cut is applied on a variant of "\DD" that differs from "\DD" only in terms of a weaker
  unrestricted context. In the last step, a cut was used to justify a cut!, which is allowed by the
  lexicographic order.

  \paragraph{Left-commutative cuts} where the cut formula "A" is a side formula in the derivation
  "\DD". The following is a representative case.
  \begin{gather*}
    \DD =
    \Ic["tens L"]{"\G ; \D, D tens E @ {w''} ==> A @ w"}
       {\DD' :: "\G ; \D, D @ {w''}, E @ {w''} ==> A @ w"}
    \SP
    \EE :: "\G ; \D', A @ w ==> C @ {w'}".
  \end{gather*}
  Here,
  \begin{quote}
    "\G ; \D, D @ {w''}, E @ {w''}, \D' ==> C @ {w'}" \by cut on "\DD'" and "\EE". \\
    "\G ; \D, \D', D tens E @ {w''} ==> C @ {w'}" \by "tens L".
  \end{quote}

  \paragraph{Right-commutative cuts} where the cut formula "A" is a side formula in the derivation
  "\EE". The following is a representative case.
  \begin{gather*}
    \DD :: "\G ; \D ==> A @ w"
    \SP
    \EE =
    \Ic["with R"]{"\G ; \D', A @ w ==> D with E @ {w'}"}
       {\EE_1 :: "\G ; \D', A @ w ==> D @ {w'}" & \EE_2 :: "\G ; \D', A @ w ==> E @ {w'}"}
  \end{gather*}
  Here,
  \begin{quote}
    "\G ; \D, \D' ==> D @ {w'}" \by cut on "\DD" and "\EE_1". \\
    "\G ; \D, \D' ==> E @ {w'}" \by cut on "\DD" and "\EE_2". \\
    "\G ; \D, \D' ==> D with E @ {w'}" \by "with R".
  \end{quote}

  \noindent
  This completes the inventory of all possible cuts.
\end{proof}

\subsection{Invertibility}
\label{sec:proofs.invert}

\begin{thm}[Invertibility] The following rules are invertible:
  \begin{ecom}
  \item On the right: "with R", "top R", "{-o} R", "\forall R", "{dn} R" and "@ R";
  \item On the left: "tens L", "one L", "plus L", "zero L", "\exists L", "!L", "{dn} L" and "at L".
  \end{ecom}
\end{thm}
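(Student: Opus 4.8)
The plan is to obtain every clause as a corollary of the already-established identity theorem and cut admissibility, via a single uniform construction; no fresh induction on the derivation of the conclusion is needed. For a rule whose conclusion has principal formula "F" and whose premises mention the immediate subformulas of "F", I build an auxiliary \emph{inversion partner} sequent relating "F" to those subformulas using the rule of opposite handedness, closing its subformula leaves by identity. A single cut of the given derivation against this partner then delivers exactly the premise sequent. Since cut admissibility was proved earlier by induction on the two cut derivations \emph{without} any appeal to invertibility, there is no circularity in using it here.

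For the right rules the partner is a left rule. Given a derivation of "\G ; \D ==> A with B @ w", identity gives "\G ; A @ w ==> A @ w", and "with L_1" lifts it to "\G ; A with B @ w ==> A @ w"; cutting the given derivation against this yields the premise "\G ; \D ==> A @ w", and the "with L_2" variant yields "\G ; \D ==> B @ w". The cases "{-o} R", "\forall R", "{dn} R" and "at R" are identical in spirit: one cuts the given derivation against, respectively, "\G ; A @ w, A -o B @ w ==> B @ w" (formed by "-o L" from two identities), the "\forall L" instance taken at the eigenvariable, the "{dn} L" instance producing "[w/u] A @ w", and the "at L" instance producing "A @ u". "top R" has no premise and is vacuously invertible.

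The left rules are dual, with a right rule as partner. From "\G ; \D, A tens B @ u ==> C @ w" one cuts against "\G ; A @ u, B @ u ==> A tens B @ u" (formed by "tens R" from identities) to obtain "\G ; \D, A @ u, B @ u ==> C @ w"; the "plus L", "\exists L", "{dn} L" and "at L" cases use "plus R_i", "\exists R", "{dn} R" and "at R" as partners in the same way, and "one L" cuts against "one R". The rule "zero L" has no premise and is vacuous. In the quantifier cases the instantiation of the dual rule is taken at exactly the eigenvariable, which is legitimate precisely because that variable is fresh for the conclusion, so the identity leaf carries the body unchanged.

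I expect the only genuinely non-routine clause to be "!L", because of the exponential's interaction with the two contexts. Given "\G ; \D, ! A @ u ==> C @ w", I first weaken to "\G, A @ u ; \D, ! A @ u ==> C @ w", then build "\G, A @ u ; . ==> ! A @ u" by "!R" over a copy of the new unrestricted hypothesis (identity followed by copy gives "\G, A @ u ; . ==> A @ u"). A cut of these two produces the premise "\G, A @ u ; \D ==> C @ w". This is the one clause where a plain cut-against-identity does not suffice and a small two-step detour through weakening and promotion is required; it is the main obstacle, and everything else is routine bookkeeping.
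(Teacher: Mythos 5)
Your proposal is correct and matches the paper's own proof essentially step for step: each inversion is realized by a single admissible cut of the given derivation against a partner sequent built from the dual rule over identity leaves, with \textbf{top} R and \textbf{0} L vacuous. Even your treatment of the $!$L case --- weakening the given derivation, then cutting against a promotion built from copy over identity --- is exactly the construction the paper uses.
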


\begin{proof}
  Each inversion is shown to be admissible using a suitable cut.
  \begin{ecom} [{\itshape {Case of}}]  
  \item "with R":
    \begin{gather*}
      \I[cut]{"\G ; \D ==> A_i @ w"}
        {"\G ; \D ==> A_1 with A_2 @ w"
         &
         \I["with L_i"]{"\G ; A_1 with A_2 @ w ==> A_i @ w"}
           {\I[init*]{"\G ; A_i @ w ==> A_i @ w"}{}}
        }
    \end{gather*}
  \item "top R": trivial.
  \item "{-o} R":
    \begin{gather*}
      \I[cut]{"\G ; \D, A @ w ==> B @ w"}
        {"\G ; \D ==> A -o B @ w"
         &
         \I["{-o} L"]{"\G ; A -o B @ w, A @ w ==> B @ w"}
           {\I[init*]{"\G ; A @ w ==> A @ w"}{}
            &
            \I[init*]{"\G ; B @ w ==> B @ w"}{}}
        }
    \end{gather*}
  \item "\forall R":
    \begin{gather*}
      \I[cut]{"\G ; \D ==> A @ w"}
        {"\G ; \D ==> \fall \alpha A @ w"
         &
         \I["\forall L"]{"\G ; \fall \alpha A @ w ==> A @ w"}
           {\I[init*]{"\G ; A @ w ==> A @ w"}}
        }
    \end{gather*}
  \item "{dn} R":
    \begin{gather*}
      \I[cut]{"\G ; \D ==> [w/u] A @ w"}
        {"\G ; \D ==> now u. A @ w"
         &
         \I["{dn} L"]{"\G ; now u. A @ w ==> [w/u] A @ w"}
           {\I[init*]{"\G ; [w/u] A @ w ==> [w/u] A @ w"}{}}
        }
    \end{gather*}
  \item "at R":
    \begin{gather*}
      \I[cut]{"\G ; \D ==> A @ v"}
        {"\G ; \D ==> (A at v) @ w"
         &
         \I["at L"]{"\G ; (A at v) @ w ==> A @ v"}
           {\I[init*]{"\G ; A @ v ==> A @ v"}{}}
        }
    \end{gather*}

  \item "tens L":
    \begin{gather*}
      \I[cut]{"\G ; \D, A @ w, B @ w ==> C @ {w'}"}
        {\I["tens R"]{"\G ; A @ w, B @ w ==> A tens B @ w"}
           {\I[init*]{"\G ; A @ w ==> A @ w"}{}
            &
            \I[init*]{"\G ; B @ w ==> B @ w"}{}}
         &
         "\G ; \D, A tens B @ w ==> C @ {w'}"}
    \end{gather*}

  \item "one L":
    \begin{gather*}
      \I[cut]{"\G ; \D ==> C @ {w'}"}
        {\I["one R"]{"\G ; . ==> one @ w"}{}
         &
         "\G ; \D, one @ w ==> C @ {w'}"}
    \end{gather*}

  \item "plus L":
    \begin{gather*}
      \I[cut]{"\G ; \D, A_i @ w ==> C @ {w'}"}
        {\I["plus R_i"]{"\G ; A_i @ w ==> A_1 plus A_2 @ w"}
           {\I[init*]{"\G ; A_i @ w ==> A_i @ w"}{}}
         &
         "\G ; \D, A_1 plus A_2 @ w ==> C @ {w'}"}
    \end{gather*}

  \item "zero L": trivial.

  \item "\exists L":
    \begin{gather*}
      \I[cut]{"\G ; \D, A @ w ==> C @ {w'}"}
        {\I["\exists R"]{"\G ; A @ w ==> \fex \alpha A @ w"}
           {\I[init*]{"\G ; A @ w ==> A @ w"}}
         &
         "\G ; \D, ex x. A @ w ==> C @ {w'}"}
    \end{gather*}

  \item "!L":
    \begin{gather*}
      \I[cut]{"\G, A @ w ; \D ==> C @ {w'}"}
        {\I["!R"]{"\G, A @ w ; . ==> {! A} @ w"}
           {\I[copy]{"\G, A @ w ; . ==> A @ w"}
              {\I[init*]{"\G, A @ w ; A @ w ==> A @ w"}{}}
           }
         &
         \I[weaken]{"\G, A @ w ; \D, {! A} @ w ==> C @ {w'}"}
           {"\G ; \D, {! A} @ w ==> C @ {w'}"}}
    \end{gather*}

  \item "{dn}L":
    \begin{gather*}
      \I[cut]{"\G ; \D, [w/u] A @ w ==> C @ {w'}"}
        {\I["{dn}R"]{"\G ; [w/u] A @ w ==> now u. A @ w"}
           {\I[init*]{"\G ; [w/u] A @ w ==> [w/u] A @ w"}{}}
         &
         "\G ; \D, now u. A @ w ==> C @ {w'}"}
    \end{gather*}

  \item "at L":
    \begin{gather*}
      \I[cut]{"\G ; \D, A @ v ==> C @ {w'}"}
        {\I["at R"]{"\G ; A @ v ==> (A at v) @ w"}
           {\I[init*]{"\G ; A @ v ==> A @ v"}{}}
         &
         "\G ; \D, (A at v) @ w ==> C @ {w'}"}
    \end{gather*}
  \end{ecom}
\end{proof}

\subsection{Correctness and consistency}
\label{sec:proofs.correct}

\begin{thm}[Correctness of the sequent calculus] \mbox{}
  \begin{ecom}
  \item If "\G ; \D ==> C @ w", then "\G ; \D |-nd C @ w". (soundness)
  \item If "\G ; \D |-nd C @ w", then "\G ; \D ==> C @ w". (completeness)
  \end{ecom}
\end{thm}

\begin{proof}
  The right rules of the sequent calculus and the introduction rules of natural deduction
  coincide. Therefore, for (1), we need only to show that the judgemental and left rules of the
  sequent calculus are admissible in natural deduction, and for (2), only to show that the
  judgemental and elimination rules of natural deduction are admissible in the sequent calculus. The
  following are the main cases.
  \begin{ecom} ["==>"/"|-nd" {case}.]  
  \item (init) 
    \begin{gather*}
      \I[hyp]{"\G ; p~(\vec t) @ w |-nd p~(\vec t) @ w"}{}
    \end{gather*}
  \item (copy)
    \begin{gather*}
      \I[subst]{"\G, A @ w ; \D |-nd C @ {w'}"}
        {\I[hyp!]{"\G, A @ w ; . |-nd A @ w"}{}
         &
         "\G, A @ w ; \D, A @ w |-nd C @ {w'}"
        }
    \end{gather*}
  \item ("with L_i")
    \begin{gather*}
      \I[subst]{"\G ; \D, A_1 with A_2 @ w |-nd C @ {w'}"}
        {\I["with E_i"]{"\G ; A_1 with A_2 @ w |-nd A_i @ w"}
           {\I[hyp]{"\G ; A_1 with A_2 @ w |-nd A_1 with A_2 @ w"}{}}
         &
         "\G ; \D, A_i @ w |-nd C @ {w'}"
        }
    \end{gather*}
  \item ("plus L")
    \begin{gather*}
      \I["plus E"]{"\G ; \D, A_1 plus A_2 @ w |-nd C @ {w'}"}
        {\I[hyp]{"\G ; A_1 plus A_2 @ w |-nd A_1 plus A_2 @ w"}{}
         &
         "\G ; \D, A_1 @ w |-nd C @ {w'}"
         &
         "\G ; \D, A_2 @ w |-nd C @ {w'}"
        }
    \end{gather*}
  \item ("zero L")
    \begin{gather*}
      \I["zero E"]{"\G ; \D, zero @ w |-nd C @ {w'}"}
        {\I[hyp]{"\G ; zero @ w |-nd zero @ w"}}
    \end{gather*}
  \item ("tens L")
    \begin{gather*}
      \I["tens E"]{"\G ; \D, A tens B @ w |-nd C @ {w'}"}
        {\I[hyp]{"\G ; A tens B @ w |-nd A tens B @ w"}{}
         &
         "\G ; \D, A @ w, B @ w |-nd C @ {w'}"
        }
    \end{gather*}
  \item ("one L")
    \begin{gather*}
      \I["one E"]{"\G ; \D, one @ w |-nd C @ {w'}"}
        {\I[hyp]{"\G ; one @ w |-nd one @ w"}{}
         &
         "\G ; \D |-nd C @ {w'}"}
    \end{gather*}
  \item ("{-o} L")
    \begin{gather*}
      \I[subst]{"\G ; \D, \D', A -o B @ w |-nd C @ {w'}"}
        {\I["-o E"]{"\G ; A -o B @ w |-nd B @ w"}
           {\I[hyp]{"\G ; A -o B @ w |-nd A -o B @ w"}{}
            &
            "\G ; \D |-nd A @ w"
           }
         &
         "\G ; \D', B @ w |-nd C @ {w'}"
        }
    \end{gather*}
  \item ("\forall L")
    \begin{gather*}
      \I[subst]{"\G ; \D, \fall \alpha A @ w |-nd C @ {w'}"}
        {\I["\forall E"]{"\G ; \fall \alpha A @ w |-nd [\tau / \alpha] A @ w"}
           {\I[hyp]{"\G ; \fall \alpha A @ w |-nd \fall \alpha A @ w"}{}}
         &
         "\G ; \D, [\tau / \alpha] A @ w |-nd C @ {w'}"
        }
    \end{gather*}
  \item ("\exists L")
    \begin{gather*}
      \I["\exists E^\alpha"]{"\G ; \D, \fex \alpha A @ w |-nd C @ {w'}"}
        {\I[hyp]{"\G ; \fex \alpha A @ w |-nd \fex \alpha A @ w"}{}
         &
         "\G ; \D, A @ w |-nd C @ {w'}"
        }
    \end{gather*}
  \item ("!L")
    \begin{gather*}
      \I["!E"]{"\G ; \D, {! A} @ w |-nd C @ {w'}"}
        {\I[hyp]{"\G ; {! A} @ w |-nd {! A} @ w"}{}
         &
         "\G, {! A} @ w ; \D |-nd C @ {w'}"}
    \end{gather*}
  \item ("{dn} L")
    \begin{gather*}
      \I[subst]{"\G ; \D, now u. A @ w |-nd C @ {w'}"}
        {\I["{dn} E"]{"\G ; now u. A @ w |-nd [w/u] A @ w"}
           {\I[hyp]{"\G ; now u. A @ w |-nd now u. A @ w"}{}}
         &
         "\G ; \D, [w/u] A @ w |-nd C @ {w'}"}
    \end{gather*}
  \item ("at~ L")
    \begin{gather*}
      \I[subst]{"\G ; \D, (A at v) @ w |-nd C @ {w'}"}
        {\I["at~ E"]{"\G ; (A at v) @ w |-nd A @ v"}
           {\I[hyp]{"\G ; (A at v) @ w |-nd (A at v) @ w"}{}}
         &
         "\G ; \D, A @ v |-nd C @ {w'}"}
    \end{gather*}
  \end{ecom}

  \begin{ecom} ["|-nd"/"==>" {case}.]  
  \item (hyp)
    \begin{gather*}
      \I[init*]{"\G ; A @ w ==> A @ w"}{}
    \end{gather*}
  \item (hyp!)
    \begin{gather*}
      \I[copy]{"\G, A @ w ; . ==> A @ w"}
        {\I[init*]{"\G, A @ w ; A @ w ==> A @ w"}{}}
    \end{gather*}
  \item ("with E_i")
    \begin{gather*}
      \I[cut]{"\G ; \D ==> A_i @ w"}
        {"\G ; \D ==> A_1 with A_2 @ w"
         &
         \I["with L_i"]{"\G ; A_1 with A_2 @ w ==> A_i @ w"}
           {\I[init*]{"\G ; A_i @ w ==> A_i @ w"}{}}
        }
    \end{gather*}
  \item ("plus E")
    \begin{gather*}
      \I[cut]{"\G ; \D, \D' ==> C @ {w'}"}
        {"\G ; \D ==> A plus B @ w"
         &
         \I["plus L"]{"\G ; \D', A plus B @ w ==> C @ {w'}"}
           {"\G ; \D', A @ w ==> C @ {w'}"
            &
            "\G ; \D', B @ w ==> C @ {w'}"
           }
        }
    \end{gather*}
  \item ("zero E")
    \begin{gather*}
      \I[cut]{"\G ; \D, \D' ==> C @ {w'}"}
        {"\G ; \D ==> zero @ w"
         &
         \I["zero L"]{"\G ; \D', zero @ w ==> C @ {w'}"}{}
        }
    \end{gather*}
  \item ("tens E")
    \begin{gather*}
      \I[cut]{"\G ; \D, \D' ==> C @ {w'}"}
        {"\G ; \D ==> A tens B @ w"
         &
         \I["tens L"]{"\G ; \D', A tens B @ w ==> C @ {w'}"}
           {"\G ; \D', A @ w, B @ w ==> C @ {w'}"}
        }
    \end{gather*}
  \item ("one E")
    \begin{gather*}
      \I[cut]{"\G ; \D, \D' ==> C @ {w'}"}
        {"\G ; \D ==> one @ w"
         &
         \I["one L"]{"\G ; \D', one @ w ==> C @ {w'}"}
           {"\G ; \D' ==> C @ {w'}"}
        }
    \end{gather*}
  \item ("\forall E")
    \begin{gather*}
      \I[cut]{"\G ; \D ==> [\tau / \alpha] A @ w"}
        {"\G ; \D ==> \fall \alpha A @ w"
         &
         \I["\forall L"]{"\G ; \fall \alpha A @ w ==> [\tau / \alpha] A @ w"}
           {\I[init*]{"\G ; [\tau / \alpha] A @ w ==> [\tau / \alpha] A @ w"}{}}
        }
    \end{gather*}
  \item ("\exists E")
    \begin{gather*}
      \I[cut]{"\G ; \D, \D' ==> C @ {w'}"}
        {"\G ; \D ==> \fex \alpha A @ w"
         &
         \I["\exists L^\alpha"]{"\G ; \D', \fex \alpha A @ w ==> C @ {w'}"}
           {"\G ; \D', A @ w ==> C @ {w'}"}
        }
    \end{gather*}
  \item ("!E")
    \begin{gather*}
      \I[cut]{"\G ; \D, \D' ==> C @ {w'}"}
        {"\G ; \D ==> {! A} @ w"
         &
         \I["!L"]{"\G ; \D', {! A} @ w ==> C @ {w'}"}
           {"\G, A @ w ; \D' ==> C @ {w'}"}
        }
    \end{gather*}
  \item ("{dn} E")
    \begin{gather*}
      \I[cut]{"\G ; \D, \D' ==> [w/u] A @ w"}
        {"\G ; \D ==> now u. A @ w"
         &
         \I["{dn} L"]{"\G ; \D', now u. A @ w ==> [w/u] A @ w"}
           {\I[hyp]{"\G ; \D', [w/u] A @ w ==> [w/u] A @ w"}{}}
        }
    \end{gather*}
  \item ("at~ E")
    \begin{gather*}
      \I[cut]{"\G ; \D ==> A @ v"}
        {"\G ; \D ==> (A at v) @ w"
         &
         \I["at~ L"]{"\G ; (A at v) @ w ==> A @ v"}
           {\I[init*]{"\G ; A @ v ==> A @ v"}{}}
        }
    \end{gather*}
  \end{ecom}
\end{proof}

\begin{cor}[Consistency of \hyll]
  There is no proof of ". ; . |-nd zero @ w".
\end{cor}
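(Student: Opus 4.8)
The plan is to derive the consistency of the natural deduction system indirectly, by transporting the hypothetical derivation into the cut-free sequent calculus, where a one-step inspection rules it out. I would argue by contradiction, assuming that ". ; . |-nd zero @ w" is derivable for some world "w".

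First I would appeal to the completeness half of the Correctness theorem --- namely that "\G ; \D |-nd C @ w" implies "\G ; \D ==> C @ w" --- specialised to empty linear and unrestricted contexts, to obtain a sequent derivation of ". ; . ==> zero @ w". This is precisely why the corollary is placed in \secref{proofs.correct}: it rides on the bridge between the two calculi rather than being established by a direct induction on natural deduction derivations. I would then invoke the Cut admissibility theorem to replace this derivation by a cut-free one of the same endsequent. This is the only substantial ingredient of the whole argument, and it is already fully discharged above, so at this point I would merely cite it.

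The decisive step is then a case analysis on the last rule of the cut-free derivation of ". ; . ==> zero @ w", read off from \figref{seq-rules}. Because both contexts are empty, the rules init and copy, as well as every left rule, are inapplicable, since each of them needs a principal or matching formula in the linear or unrestricted context. Because "zero" has only a left rule "zero L" and no corresponding right rule, no right rule can conclude the sequent either. Hence no inference rule has ". ; . ==> zero @ w" as its conclusion, contradicting the existence of the cut-free derivation.

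The entire difficulty of the corollary is therefore concentrated in cut admissibility; once that theorem is in hand, consistency is an immediate syntactic observation about the shape of the rules, independent of the particular constraint domain and of the world "w". I expect the only place where care is required is checking that the case analysis is genuinely exhaustive --- in particular that no right rule produces "zero" and that the emptiness of both contexts really does block init, copy, and all left rules simultaneously --- but this is a routine inspection of \figref{seq-rules} rather than a genuine obstacle.
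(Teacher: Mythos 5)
Your proposal is correct and follows exactly the paper's own route: assume derivability in natural deduction, transport to the sequent calculus via the completeness half of the correctness theorem, invoke cut admissibility to obtain a cut-free derivation of ". ; . ==> zero @ w", and observe by inspection of the rules that no inference can have this sequent as its conclusion. Your only addition is spelling out the exhaustive case analysis (empty contexts block init, copy, and all left rules; "zero" has no right rule), which the paper compresses into ``by simple inspection.''
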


\begin{proof}
  Suppose ". ; . |-nd zero @ w" is derivable. Then, by the completeness and cut-admissibility
  theorems on the sequent calculus, ". ; . ==> zero @ w" must have a cut-free proof. 
  But, we can see by simple inspection that there can be no cut-free proof of 
  ". ; . ==> zero @ w", as this sequent cannot be the conclusion of any rule of inference 
  in the sequent calculus. 
  Therefore, ". ; . |-nd zero @ w" is not derivable.
\end{proof}

\subsection{Connection to IS5}
\label{sec:proofs.is5}

\begin{thm}[\hyll is intuitionistic S5]
  The following sequent is derivable: ". ; dia A @ w ==> box dia A @ w".
\end{thm}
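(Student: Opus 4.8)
The plan is to unfold the two modal abbreviations of \defnref{connectives} and let the invertible hybrid and quantifier rules drive the derivation down to a single identity obligation. Writing "dia A = now z. ex s. (A at z . s)" and "box A = now z. all s. (A at z . s)", I would first work on the right of ". ; dia A @ w ==> box dia A @ w". Applying "{dn}R" instantiates the bound world of the outer "box" by the current world "w"; then "\forall R" introduces a fresh world eigenvariable "v", and "at R" shifts the goal to "dia A @ w . v". A second "{dn}R", now on this inner "dia", instantiates its bound world by "w . v", leaving the right-hand goal "ex s. (A at (w . v) . s) @ w . v".

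Next I would decompose the hypothesis, keeping these left rules \emph{below} the right-hand existential so that its witness may mention the eigenvariables. Applying "{dn}L" to "dia A @ w" instantiates its bound world by "w", "\exists L" introduces a fresh eigenvariable "t", and "at L" turns the assumption into "A @ w . t". The neutral sequent has now been reduced to
\[ ". ; A @ w . t ==> ex s. (A at (w . v) . s) @ w . v" \]
so it only remains to supply a witness on the right: choosing a world term for "s" via "\exists R" and then applying "at R" reduces the goal to "A @ (w . v) . s", which is discharged by the identity theorem exactly when the world equation "(w . v) . s = w . t" holds.

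The hard part will be this witness selection, and it is where the bare monoidal structure of \defnref{constraint-domain} is not enough. Since "v" was produced by "\forall R" and "t" by "\exists L", both are eigenvariables, so "s" must be a closed world expression making "w . v . s = w . t" hold uniformly in "v" and "t"; in a general monoid no such "s" exists (indeed, in the free monoid on two generators the sequent has a countermodel, so by soundness it is underivable there). The step therefore genuinely requires invertible worlds: if the constraint domain is a \emph{group}, the choice "s = v^{-1} . t" works, since "(w . v) . (v^{-1} . t) = w . t" by associativity and the inverse law, and no commutativity is needed. I would thus complete the derivation above under a group hypothesis on the worlds, discharging the final leaf by identity. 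This is consistent with the remarks following the theorem statement: a bare monoid yields only the S4 laws (e.g.\ "A -o dia A" and "dia dia A -o dia A", both provable by taking witnesses "rid" and a composite, respectively), whereas the genuine S5 collapse "dia A -o box dia A" forces the worlds to carry inverses (equivalently, to use the connective "at'"), so for an arbitrary domain the statement is best read as the S4 fragment rather than full S5.
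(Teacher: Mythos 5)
You follow the same skeleton as the paper's proof---unfold the modalities, run the invertible hybrid and quantifier rules, then close with an existential witness and the identity theorem---but the two proofs part ways at the unfolding step, and yours is the faithful one. The paper's first inference, labelled ``defn'', rewrites $\Diamond A \,@\, w$ as $\exists u.\,(A \mathrel{\mathsf{at}} u) \,@\, w$ and $\Box\Diamond A \,@\, w$ as $\forall u.\,((\exists v.\,(A \mathrel{\mathsf{at}} v)) \mathrel{\mathsf{at}} u) \,@\, w$: the prefix ``$u\cdot{}$'' demanded by Definition~\ref{defn:connectives} (where $\Diamond A \triangleq \mathord{\downarrow}u.\,\exists w'.\,(A \mathrel{\mathsf{at}} u\cdot w')$) has been dropped, so $\Box$ and $\Diamond$ are being treated as \emph{global} modalities quantifying over all worlds, in the style of the $!!$ connective, rather than as the reachability modalities actually defined. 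Under that reading the $\exists R$ witness is simply the eigenvariable introduced by $\exists L$, no equation between worlds has to be solved, and the sequent closes over an arbitrary monoid---which is exactly the paper's derivation.

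Your faithful unfolding makes visible what that step conceals: the leaf obligation is $A \,@\, (w\cdot v)\cdot s$ against the hypothesis $A \,@\, w\cdot t$, with $v$ and $t$ eigenvariables, so the witness $s$ must satisfy $(w\cdot v)\cdot s = w\cdot t$ schematically. No monoid expression does this (in the free monoid this is the unsolvable word equation $v\cdot s = t$), while $s = v^{-1}\cdot t$ works as soon as worlds have inverses. Two small points of wording: $s$ need not be \emph{closed}---your own group witness mentions $v$ and $t$; what matters is that the equation hold under every instantiation of the eigenvariables---and you do not need to appeal to a semantic countermodel (the paper defines none): every rule applicable to this sequent other than $\exists R$ is invertible, so derivability reduces exactly to solvability of that word equation, which is a purely proof-theoretic argument. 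With those repairs your conclusion stands: under the printed definitions the stated sequent is derivable when the constraint domain is a group but not over a bare monoid, in agreement with the paper's own subsequent remarks (S4 for monoids, S5 for groups). The discrepancy with the paper is therefore not a gap in your argument but in the paper's ``defn'' step, which establishes the---trivially valid---S5 law for the global modalities rather than for the connectives of Definition~\ref{defn:connectives}.
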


\begin{proof}
  \begin{gather*}
    \I[defn]{". ; dia A @ w ==> box dia A @ w"}
      {\I["\exists L^a"]{". ; ex u. (A at u) @ w ==> all u. (ex v. (A at v) at u) @ w"}
         {\I["\forall R^b"]{". ; (A at a) @ w ==> all u. (ex v. (A at v) at u) @ w"}
            {\I["at L", "at R"]{". ; (A at a) @ w ==> (ex v. (A at v) at b) @ w"}
               {\I["\exists R"]{". ; A @ a ==> ex v. (A at v) @ b"}
                  {\I["at R"]{". ; A @ a ==> (A at a) at b"}
                     {\I[init*]{". ; A @ a ==> A @ a"}{}}
                  }
               }
            }
         }
      }
      \qedhere
  \end{gather*}
\end{proof}

\end{document}